\def\cl@chapter{\@elt {theorem}}
\newcommand{\nosemic}{\renewcommand{\@endalgocfline}{\relax}}
\newcommand{\dosemic}{\renewcommand{\@endalgocfline}{\algocf@endline}}
\let\oldnl\nl
\newcommand{\nonl}{\renewcommand{\nl}{\let\nl\oldnl}}
\newcommand{\mres}{\mathbin{\vrule height 1.6ex depth 0pt width
0.13ex\vrule height 0.13ex depth 0pt width 1.3ex}}
\newcommand{\NN}{\mathbb{N}}
\newcommand{\RR}{\mathbb{R}}
\newcommand{\Diff}{\mathrm{D}}
\newcommand{\LD}{\mathrm{L}^2(\RR^2)}
\newcommand{\eqdef}{\ensuremath{\stackrel{\mbox{\upshape\tiny def.}}{=}}}
\newcommand{\abs}[1]{\left\lvert#1\right\rvert}
\renewcommand{\div}{\mathrm{div}}
\newcommand{\Diffrad}{\mathrm{D}_{\mathrm{rad}}}
\newcommand{\Difforth}{\mathrm{D}_{\mathrm{orth}}}
\definecolor{ao(english)}{rgb}{0.0, 0.5, 0.0}
\newcommand{\revision}[1]{{#1}}
\newcommand{\revisionbis}[1]{{#1}}
\DeclareMathSymbol{\Phi}{\mathalpha}{operators}{8}
\journalname{Journal of Mathematical Imaging and Vision}
\begin{document}

\title{Towards off-the-grid algorithms for total variation regularized inverse problems\thanks{This work was supported by a grant from R\'egion Ile-De-France and by
the ANR CIPRESSI project, grant ANR-19-CE48-0017-01 of the French
Agence Nationale de la Recherche.}}

\titlerunning{Towards off-the-grid algorithms for total variation regularization}        % if too long for running head

\author{Yohann De Castro        \and
        Vincent Duval \and
        Romain Petit %etc.
}

\authorrunning{Y. De Castro \and V. Duval \and R. Petit} % if too long for running head

\institute{Y. De Castro \at
              Institut Camille Jordan, CNRS UMR 5208, \'Ecole Centrale de Lyon, F-69134 \'Ecully, France \\
              \email{yohann.de-castro@ec-lyon.fr}           %  \\
%             \emph{Present address:} of F. Author  %  if needed
           \and
           V. Duval \at
           CEREMADE, CNRS, UMR 7534, Universit\'e Paris-Dauphine, PSL University, 75016 Paris, France \\
           INRIA-Paris, MOKAPLAN, 75012 Paris, France \\
           \email{vincent.duval@inria.fr}
           \and
           R. Petit \at
           CEREMADE, CNRS, UMR 7534, Universit\'e Paris-Dauphine, PSL University, 75016 Paris, France \\
           INRIA-Paris, MOKAPLAN, 75012 Paris, France \\
           \email{romain.petit@inria.fr}
}

\date{\today}
% The correct dates will be entered by the editor

\maketitle

\begin{abstract}
We introduce an algorithm to solve linear inverse problems regularized with the total (gradient) variation in a gridless manner. Contrary to most existing methods, that produce an approximate solution which is piecewise constant on a fixed mesh, our approach exploits the structure of the solutions and consists in iteratively constructing a linear combination of indicator functions of simple polygons.
\keywords{Off-the-grid imaging \and Inverse problems \and Total variation}
\end{abstract}

\section{Introduction}

By promoting solutions with a certain specific structure, the regularization of a variational inverse problem is a way to encode some prior knowledge on the signals to recover. Theoretically, it is now well understood which regularizers tend to promote signals or images which are sparse, low rank or piecewise constant. Yet, paradoxically enough, most numerical solvers are not designed with that goal in mind, and the targeted structural property (sparsity, low rank or piecewise constancy) only appears ``in the limit'', when the algorithm converges.

Several recent works have focused on incorporating structural
properties in optimization algorithms. In the context of~$\ell^1$-based sparse spikes recovery, it was proposed to switch from, e.g.
standard proximal methods (which require the introduction of an
approximation grid)
to algorithms which operate directly in a continuous domain:
interior point methods solving a reformulation of the problem
\cite{candesMathematicalTheorySuperresolution2014,castroExactSolutionsSuper2017} or a Frank-Wolfe / conditional gradient algorithm
\cite{brediesInverseProblemsSpaces2013} approximating a
solution in a greedy way.
More generally, the conditional gradient algorithm has drawn a lot
of interest from the data science community, for it provides
iterates which are a sum of a small number of atoms which are
promoted by the regularizer (see the review paper
\cite{jaggiRevisitingFrankWolfeProjectionFree2013}).

In the present work, we explore the extension of these fruitful approaches to  the total (gradient) variation regularized inverse problem
\begin{equation}
	\underset{u \in \LD}{\min} ~ T_{\lambda}(u)\eqdef\frac{1}{2}\,\left\Vert\Phi u-y\right\Vert^2+\lambda\,\abs{\Diff u}(\mathbb{R}^2)\,,
    \label{prob}
		\tag{$\mathcal{P}_{\lambda}$}
\end{equation}
where $|\Diff u|(\RR^2)$ denotes the total variation of (the gradient of) $u$ and $\Phi\colon \LD\rightarrow \RR^m$ is a continuous linear map such that
\begin{equation}
	\forall u\in \LD,~\Phi u=\int_{\mathbb{R}^2}u(x)\,\varphi(x)\,\mathrm{d}x\,,
   \label{measurement_op}
\end{equation}
with $\varphi\in \left[\LD\right]^m\cap C^0(\mathbb{R}^2,\mathbb{R}^m)$. \revisionbis{Such variational problems have been widely used in imaging for the last decades, following the pioneering work of Rudin, Osher and Fatemi \cite{rudinNonlinearTotalVariation1992}. A typical application is the reconstruction of an unknown image~$u_0$ from a set of noisy linear measurements ${y=\Phi u_0+w}$, where~${w\in\RR^m}$ is some additive noise.

The total variation term in \eqref{prob} is known to promote piecewise constant solutions. It has been shown that some solutions are sums of at most~$m$ indicator functions of simple sets (see \cite{brediesSparsitySolutionsVariational2019,boyerRepresenterTheoremsConvex2019}). However, when~$u_0$ is a simple piecewise constant image, there are evidences that solutions are usually made of a much smaller number of shapes. In such situations, it is highly desirable to design numerical solvers preserving this structure, and able to accurately estimate the jump set of solutions. This could be particularly relevant for specific applications, like astronomical and cell imaging.}

\subsection{Previous works}
\revisionbis{
Many algorithms have been proposed to solve \eqref{prob}. The vast majority of them rely on the introduction of a fixed spatial discretization, and of a discrete version of the total variation (see \cite{chambolleChapterApproximatingTotal2021} for a review). These approaches often yield reconstruction artifacts, such as anisotropy or blur (see the previous reference, \cite{tabtiSymmetricUpwindScheme2018}, and the experiments section below). Most importantly, existing algorithms often fail to preserve the structure exhibited by solutions of \eqref{prob}, which is discussed above.

To circumvent these issues, mesh adaptation techniques were introduced in \cite{violaUnifyingResolutionindependentFormulation2012,bartelsSingularSolutionsGraded2021}. The refinement rules they propose are, however, either heuristic or too restrictive to faithfully recover edges. In any case, they still rely on a discretization of the whole domain, and hence do not provide a compact representation of the reconstruced image.

In \cite{ongieOfftheGridRecoveryPiecewise2016}, a method for recovering  piecewise constant images from few Fourier samples is introduced. Its orginiality is to produce a continuous domain representation of the image, assuming its edge set is a trigonometric curve. However, this approach heavily relies on relations satisfied by the Fourier coefficients of the image. As such, it does not seem possible to adapt it to handle other types of measurements.}

\subsection{Contributions}
\revisionbis{
Our goal is to design an algorithm which does not suffer from some grid bias, while providing a continuous domain representation of solutions. To this aim, we construct an approximate solution built from the above-mentioned atoms, namely indicator functions of simple sets. As shown in the experiments section, this approach is particularly suited for reconstructing simple piecewise constant images. On more complex natural images, traditional grid-based methods perform better. In \Cref{pres_fw,pres_sliding}, we introduce a theoretical iterative algorithm, whose output provably converges to a solution of \eqref{prob}. The exploratory nature of our work lies in the numerical methods we propose to carry out several steps of this algorithm. Although experiments suggest they perform well, several questions concerning their theoretical analysis remain.}

\section{Preliminaries}

In the following, for any function $u:\RR^2\to\RR$, we shall use the notation
\begin{equation*}
	U^{(t)}\eqdef\begin{cases}\left\{x\in\RR^2\,\big\rvert\,u(x)\geq t\right\} & \text{if } t\geq 0\,, \\
\left\{x\in\RR^2\,\big\rvert\,u(x)\leq t\right\} & \text{otherwise.}\end{cases}\end{equation*}

\subsection{Functions of bounded variation and sets of finite perimeter}
\label{preli_sets}

Let $u\in\mathrm{L}^1_{loc}(\RR^2)$. The total variation of $u$ is given by
\begin{equation*}
  J(u)\eqdef\underset{z \in C_c^{\infty}(\RR^2,\RR^2)}{\text{sup}}~ -\int_{\RR^2}u\,\div z ~\text{ s.t. }~ \|z\|_{\infty}\leq 1\,.
\end{equation*}
If $J(u)$ is finite, then $u$ is said to have bounded variation, and the distributional gradient of $u$, denoted~$\Diff u$, is a finite vector-valued Radon measure. We moreover have $|\Diff u|(\RR^2)=J(u)<+\infty$.

A measurable set $E\subset\RR^2$ is said to be of finite perimeter if $P(E)\eqdef J(\mathbf{1}_E)<+\infty$. The reduced boundary $\partial^*E$ of a set of finite perimeter $E$ is defined as the set of points $x\in\text{Supp}\left(\left|\Diff\mathbf{1}_E\right|\right)$ at which
\begin{equation*}
	\nu_E(x)\eqdef \underset{r\to 0^+}{\text{lim}}-
\frac{\Diff\mathbf{1}_E(B(x,r))}{\left|\Diff\mathbf{1}_E\right|(B(x,r))}
\end{equation*}
exists and is moreover such that $\|\nu_E(x)\|=1$.

From \cite[Proposition 3.1]{giustiMinimalSurfacesFunctions1984}, we know that if $E$ has finite perimeter, there exists a Lebesgue representative of $E$ with the property that \begin{equation*}
\forall x\in\partial E$, $0<|E\cap B(x,r)|<|B(x,r)|\,.
\end{equation*}
In the following, we always consider such a representative and consequently obtain $\text{Supp}(\Diff\mathbf{1}_E)=\overline{\partial^*E}=\partial E$.

\revision{We now introduce the notion of indecomposable and simple sets, which are the measure-theoretic analogues of connected and simply connected sets (see \cite{ambrosioConnectedComponentsSets2001} for more details). A set of finite perimeter~$E\subset\RR^2$ is said to be decomposable if there exists a partition of $E$ in two sets of positive measure~$A$ and~$B$ with~${P(E)=P(A)+P(B)}$. We say that $E$ is indecomposable if it is not decomposable. Any indecomposable set of finite measure whose complement is also indecomposable is called simple. If~${E\subset\RR^2}$ has finite perimeter and finite measure it can be decomposed (up to Lebesgue negligible sets) into an at most countable union of pairwise disjoint
indecomposable sets, i.e.
\begin{equation}
   \label{mcomponents}
	E=\bigcup\limits_{i\in I}E_i,\,P(E)=\sum\limits_{i\in I}P(E_i)\text{ and }\forall i,\,|E_i|>0\,.
\end{equation}
Each~$E_i$ can in turn be decomposed as
\begin{equation}
   \label{jordan}\begin{aligned}
E_i&=\mathrm{int}(\gamma_i^+)\setminus\bigcup\limits_{j\in J_i}\mathrm{int}(\gamma_{i,j}^-)\,, \\
\text{with }
P(E_i)&=P(\mathrm{int}(\gamma^+_i))+\sum\limits_{j\in J_i}P(\mathrm{int}(\gamma_{i,j}^-))\,,
\end{aligned}\end{equation}
where for all $i\in I$ and $j\in J_i$, $\gamma_i^+$ and $\gamma_{i,j}^-$ are rectifiable Jordan curves.}

\begin{figure}
   \centering
   \begin{tikzpicture}[x=0.75pt,y=0.75pt,yscale=-1,xscale=1]
%uncomment if require: \path (0,143); %set diagram left start at 0, and has height of 143

%Shape: Polygon Curved [id:ds28373953474513014]
\draw  [fill={rgb, 255:red, 223; green, 223; blue, 223 }  ,fill opacity=1 ] (166.5,61.5) .. controls (180.36,55.74) and (253.95,50.13) .. (261.37,75.77) .. controls (268.79,101.4) and (238.81,83.98) .. (212.98,110.63) .. controls (187.15,137.29) and (106.58,123.05) .. (101.18,94.25) .. controls (95.78,65.45) and (152.64,67.26) .. (166.5,61.5) -- cycle ;
%Shape: Regular Polygon [id:dp8913332637616306]
\draw  [fill={rgb, 255:red, 255; green, 255; blue, 255 }  ,fill opacity=1 ] (165.8,88.41) .. controls (173.96,104.41) and (162.23,109.95) .. (148.97,104.41) .. controls (135.71,98.87) and (134.43,101.95) .. (145.4,85.33) .. controls (156.37,68.71) and (157.64,72.4) .. (165.8,88.41) -- cycle ;
%Shape: Regular Polygon [id:dp002484104007333121]
\draw  [fill={rgb, 255:red, 255; green, 255; blue, 255 }  ,fill opacity=1 ] (226.67,67.2) .. controls (241.73,60.42) and (229.31,105.85) .. (214.26,85.51) .. controls (199.2,65.16) and (211.62,73.98) .. (226.67,67.2) -- cycle ;
%Shape: Regular Polygon [id:dp2969653425677439]
\draw  [fill={rgb, 255:red, 224; green, 224; blue, 224 }  ,fill opacity=1 ] (288.83,67.38) .. controls (300.6,61.42) and (353.58,55.47) .. (341.8,67.38) .. controls (330.03,79.28) and (330.03,85.24) .. (341.8,103.1) .. controls (353.58,120.97) and (300.6,120.97) .. (288.83,103.1) .. controls (277.06,85.24) and (277.06,73.33) .. (288.83,67.38) -- cycle ;

% Text Node
\draw (139,41.4) node [anchor=north west][inner sep=0.75pt]    {$\gamma _{1}^{+}$};
% Text Node
\draw (114,89.4) node [anchor=north west][inner sep=0.75pt]    {$\gamma _{1,1}^{-}$};
% Text Node
\draw (187,73.4) node [anchor=north west][inner sep=0.75pt]    {$\gamma _{1,2}^{-}$};
% Text Node
\draw (266,98.4) node [anchor=north west][inner sep=0.75pt]    {$\gamma _{2}^{+}$};

\end{tikzpicture}
   \caption{Decomposition of a set (gray area) as in \eqref{mcomponents}, \eqref{jordan}}
\end{figure}
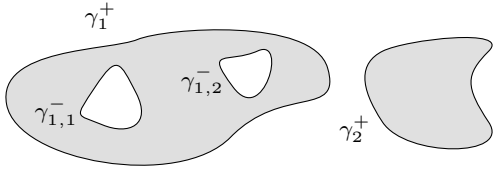

\subsection{Subdifferential of the total variation}

In the rest of this document, $J$ is considered as a mapping from~${\LD}$ to~${\RR\cup\{+\infty\}}$. This mapping is convex, proper and lower semi-continuous. We have the following useful characterizations of $\partial J(0)$:
\begin{equation*}
\begin{aligned}
\partial J(0)= \bigg\{&\eta\in\LD \,\bigg\rvert\,\\
&\forall u\in \LD,~\left|\int_{\mathbb{R}^2}\eta\,u\right|\leq |\Diff u|(\mathbb{R}^2)\bigg\}\,,
\end{aligned}
\end{equation*}
\begin{equation*}
\begin{aligned}
\partial J(0)= \bigg\{&\eta\in \LD\,\bigg\rvert\, \forall E\subset{\mathbb{R}^2},\,0<|E|<+\infty\\
&\text{and }P(E)<+\infty\implies\left|\int_{\mathbb{R}^2}\eta\,\frac{\mathbf{1}_E}{P(E)}\right|\leq 1\bigg\}\,.
\end{aligned}
\end{equation*}
Moreover, the subdifferential of $J$ at $u\in\LD$ is given by:
\begin{equation*}
		\partial J(u)=\left\{\eta\in \partial J(0)\,\bigg\rvert\,\int_{\RR^2}\eta\,u=|\Diff u|(\RR^2)\right\}.
\end{equation*}
We also have the following useful result:
\begin{proposition}[see e.g. \cite{chambolleGeometricPropertiesSolutions2016}] Let~${u\in\LD}$ be such that $J(u)<\infty$ and~${\eta\in\LD}$. Then $\eta\in\partial J(u)$ if and only if $\eta\in\partial J(0)$ and the level sets of $u$ satisfy
	\begin{equation}
\left\{
\begin{aligned}
		\forall t>0,~P(U^{(t)})&=\int_{U^{(t)}}\eta\,, \\
		\forall t<0,~P(U^{(t)})&=-\int_{U^{(t)}}\eta\,.
	\end{aligned}\right.
	\end{equation}
	\label{lvlset_sousdiff}
\end{proposition}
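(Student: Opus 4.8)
The plan is to combine the characterization $\partial J(u)=\{\eta\in\partial J(0)\mid \int_{\RR^2}\eta\,u=\abs{\Diff u}(\RR^2)\}$ recalled above with two classical tools: the layer-cake identity $u=\int_0^{+\infty}\mathbf 1_{U^{(t)}}\,\mathrm{d}t-\int_{-\infty}^0\mathbf 1_{U^{(t)}}\,\mathrm{d}t$ (valid a.e.) and the coarea formula $\abs{\Diff u}(\RR^2)=\int_{-\infty}^{+\infty}P(U^{(t)})\,\mathrm{d}t$ (valid here since $\abs{\Diff u}(\RR^2)<+\infty$; when $u\notin\mathrm L^1(\RR^2)$ one obtains it by exhausting $\RR^2$ with balls and passing to the limit). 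Since $u\in\LD$, Chebyshev's inequality gives $\abs{U^{(t)}}<+\infty$ for every $t\neq 0$, and $\int_{\RR^2}\abs{\eta\,u}<+\infty$ by Cauchy--Schwarz, so Fubini's theorem yields
\[
\int_{\RR^2}\eta\,u=\int_0^{+\infty}\Big(\int_{U^{(t)}}\eta\Big)\mathrm{d}t-\int_{-\infty}^0\Big(\int_{U^{(t)}}\eta\Big)\mathrm{d}t\,.
\]

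Assume now $\eta\in\partial J(0)$. By coarea, $P(U^{(t)})<+\infty$ for a.e.\ $t$, hence the set characterization of $\partial J(0)$ applies and gives $\int_{U^{(t)}}\eta\le P(U^{(t)})$ for a.e.\ $t>0$ and $-\int_{U^{(t)}}\eta\le P(U^{(t)})$ for a.e.\ $t<0$ (the case $\abs{U^{(t)}}=0$ being trivial, as then $P(U^{(t)})=0$). Subtracting the Fubini identity from the coarea formula,
\[
\abs{\Diff u}(\RR^2)-\int_{\RR^2}\eta\,u=\int_0^{+\infty}\!\Big(P(U^{(t)})-\!\int_{U^{(t)}}\eta\Big)\mathrm{d}t+\int_{-\infty}^0\!\Big(P(U^{(t)})+\!\int_{U^{(t)}}\eta\Big)\mathrm{d}t\,,
\]
a sum of integrals of nonnegative functions. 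Therefore $\int_{\RR^2}\eta\,u=\abs{\Diff u}(\RR^2)$ if and only if both integrands vanish a.e., i.e.\ the level-set equations hold for a.e.\ $t\neq 0$. Together with the recalled characterization of $\partial J(u)$, this proves both implications of the proposition \emph{up to} replacing ``a.e.\ $t$'' by ``every $t$'' (the converse direction being exactly this chain of equalities read backwards).

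The remaining point, promoting the almost-everywhere equalities to pointwise ones, is where I expect the only genuine work. Fix $t_0>0$ and pick ``good'' points $t_n\uparrow t_0$ (the good set has full measure, hence is dense); then $U^{(t_n)}=\{u\ge t_n\}$ decreases to $U^{(t_0)}$, so $\mathbf 1_{U^{(t_n)}}\to\mathbf 1_{U^{(t_0)}}$ in $\mathrm L^1(\RR^2)$ by dominated convergence (dominating function $\mathbf 1_{U^{(t_1)}}$). Lower semicontinuity of the perimeter under $\mathrm L^1_{loc}$ convergence gives $P(U^{(t_0)})\le\liminf_n P(U^{(t_n)})$, while dominated convergence with dominating function $\abs{\eta}\mathbf 1_{U^{(t_1)}}\in\mathrm L^1(\RR^2)$ gives $\int_{U^{(t_n)}}\eta\to\int_{U^{(t_0)}}\eta$. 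Using $P(U^{(t_n)})=\int_{U^{(t_n)}}\eta$, this yields $P(U^{(t_0)})\le\int_{U^{(t_0)}}\eta<+\infty$, so the $\partial J(0)$ bound now applies and gives the reverse inequality, whence $P(U^{(t_0)})=\int_{U^{(t_0)}}\eta$ (and $P(U^{(t_0)})<+\infty$ along the way). The case $t_0<0$ is symmetric, approximating by good points $t_n\downarrow t_0$ with $t_n<0$ (so that $U^{(t_n)}=\{u\le t_n\}$ decreases to $U^{(t_0)}$) and using $P(U^{(t_n)})=-\int_{U^{(t_n)}}\eta$. Besides this approximation step, the only things to check are routine bookkeeping: that $P(\{u>s\})$ and $P(U^{(s)})$ coincide for all but countably many $s$, which does not affect the integrals in $s$, and the exhaustion argument for coarea mentioned above.
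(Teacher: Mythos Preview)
The paper does not supply its own proof of this proposition; it is stated with a reference to \cite{chambolleGeometricPropertiesSolutions2016}. Your argument is the standard one: combine the characterization $\partial J(u)=\{\eta\in\partial J(0)\mid \int\eta\,u=J(u)\}$ with the layer-cake decomposition and the coarea formula to reduce the equality $\int\eta\,u=J(u)$ to the vanishing of a nonnegative integrand in $t$, and then upgrade the resulting almost-everywhere identity to one valid for every $t$ via monotone approximation, $\mathrm L^1$ convergence of the indicators, lower semicontinuity of the perimeter, and the $\partial J(0)$ inequality. All steps are correct, including your handling of the finiteness of $P(U^{(t_0)})$ (needed before invoking the set characterization of $\partial J(0)$) and the case $|U^{(t_0)}|=0$. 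The bookkeeping remarks you flag (coarea for $u\in\LD\setminus\mathrm L^1$, and $P(\{u>s\})$ versus $P(U^{(s)})$) are indeed routine. This is essentially the proof one finds in the cited source, so there is nothing to compare.
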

\subsection{Dual problem and dual certificates}
The Fenchel-Rockafellar dual of \eqref{prob} is the following finite dimensional problem
\begin{equation}
\underset{p \in \RR^m}{\text{max}} ~  \langle p,y\rangle -\frac{\lambda}{2}\,\|p\|^2 \quad \text{s.t.} \quad \Phi^*p\in\partial J(0)\,,
  \tag{$\mathcal{D}_{\lambda}$}
  \label{dual}
\end{equation}
which has a unique solution (it is in fact equivalent to the projection of $\frac{y}{\lambda}$ on the closed convex set of vectors $p$ such that $\Phi^*p\in\partial J(0)$). Moreover, strong duality holds as stated by the following proposition
\begin{proposition}
  Problems \eqref{prob} and \eqref{dual} have the same value and any solution $u_{\lambda}$ of \eqref{prob} is linked with the unique solution $p_{\lambda}$ of \eqref{dual} by the extremality condition
  \begin{equation}
  \left\{\begin{aligned}
    \Phi^*p_{\lambda}&\in \partial J(u_{\lambda})\,, \\
    p_{\lambda} &= -\frac{1}{\lambda}\left(\Phi u_{\lambda}-y\right)\,.
  \end{aligned}\right.
	\label{eq_dualite}
  \end{equation}
	\label{prop_dualite}
\end{proposition}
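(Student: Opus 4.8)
The plan is to derive both problems' values and the extremality conditions from standard Fenchel--Rockafellar duality, then unpack the abstract condition into the stated pair. First I would write \eqref{prob} as $\min_{u}\ F(\Phi u) + \lambda J(u)$ with $F(v)\eqdef\tfrac12\|v-y\|^2$, whose convex conjugate is $F^*(p)=\tfrac12\|p\|^2+\langle p,y\rangle$; since $J$ is convex, proper and lower semi-continuous and is the support function of the closed convex set $\partial J(0)$, its conjugate $(\lambda J)^*$ is the indicator of $\lambda\,\partial J(0)$, i.e. of $\{q : \Phi^*p = q,\ \Phi^*p\in\partial J(0)\}$ after the composition with $\Phi$. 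Applying the Fenchel--Rockafellar formula gives the dual $\max_{p}\ -F^*(-p) - (\lambda J)^*(\Phi^*p) = \max_{p}\ \langle p,y\rangle - \tfrac{\lambda}{2}\|p\|^2$ subject to $\Phi^*p\in\partial J(0)$, which is exactly \eqref{dual}.

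The second step is to justify that there is no duality gap and that the dual supremum is attained. For this I would invoke the standard qualification condition: $F$ is finite and continuous at $\Phi u$ for, say, $u=0$ (indeed $F$ is continuous everywhere on $\RR^m$), and $J(0)=0<\infty$, so $0\in\operatorname{dom}(\lambda J)\cap\Phi^{-1}(\operatorname{cont}F)$. This guarantees strong duality, i.e. $\operatorname{val}\eqref{prob}=\operatorname{val}\eqref{dual}$, and that the dual problem has a solution. Uniqueness of $p_\lambda$ follows from strict concavity of $p\mapsto\langle p,y\rangle-\tfrac\lambda2\|p\|^2$ on the (nonempty, closed, convex) feasible set; as noted in the excerpt this is the projection of $y/\lambda$ onto $\{p:\Phi^*p\in\partial J(0)\}$.

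The third step is to read off the extremality conditions from the equality case in the Fenchel--Young inequality. Given any primal solution $u_\lambda$ and the dual solution $p_\lambda$, the absence of a duality gap forces the two Fenchel inequalities to be equalities: $F(\Phi u_\lambda)+F^*(-p_\lambda)=\langle -p_\lambda,\Phi u_\lambda\rangle$ and $\lambda J(u_\lambda)+(\lambda J)^*(\Phi^*p_\lambda)=\langle \Phi^*p_\lambda,u_\lambda\rangle$. The first equality is equivalent to $-p_\lambda\in\partial F(\Phi u_\lambda)$, i.e. $-p_\lambda=\Phi u_\lambda-y$, which is the relation $p_\lambda=-\tfrac1\lambda(\Phi u_\lambda-y)$. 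The second equality is equivalent to $\Phi^*p_\lambda\in\partial(\lambda J)(u_\lambda)=\lambda\,\partial J(u_\lambda)$; dividing, $\tfrac1\lambda\Phi^*p_\lambda\in\partial J(u_\lambda)$, but since the dual feasibility already gives $\Phi^*p_\lambda\in\partial J(0)=\lambda^{-1}\cdot(\lambda\,\partial J(0))$ one must be careful with the scaling — the cleanest route is to absorb $\lambda$ into the certificate as is customary, writing the condition as $\Phi^*p_\lambda\in\partial J(u_\lambda)$ after noting that $\partial J(u)$ for $u\neq 0$ is a subset of $\partial J(0)$ and the homogeneity of $J$ makes the scaling consistent. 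Conversely, if a pair $(u_\lambda,p_\lambda)$ satisfies \eqref{eq_dualite}, then summing the two Fenchel--Young equalities shows $T_\lambda(u_\lambda)$ equals the dual objective at $p_\lambda$, so by weak duality $u_\lambda$ is primal-optimal and $p_\lambda$ dual-optimal.

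The main obstacle I anticipate is purely bookkeeping rather than conceptual: getting the factors of $\lambda$ and the signs in the Fenchel--Rockafellar formula to line up with the specific normalization chosen in \eqref{dual} and \eqref{eq_dualite}, and making sure that the infinite-dimensional duality ($\LD$ is not reflexive-friendly for $J$, but it \emph{is} a Hilbert space and $F\circ\Phi$ is continuous, so the finite-dimensional range of $\Phi$ saves us) is legitimate. One should also double-check that $(\lambda J)^*=\iota_{\lambda\partial J(0)}$ genuinely holds — this is where lower semi-continuity and the support-function characterization of $\partial J(0)$ recalled just above Proposition~\ref{lvlset_sousdiff} are used — and that the $\partial J(u_\lambda)\subseteq\partial J(0)$ containment (already recorded in the excerpt via the formula $\partial J(u)=\{\eta\in\partial J(0):\int\eta u=|\Diff u|(\RR^2)\}$) makes the two formulations of the dual-feasibility-plus-subdifferential condition equivalent.
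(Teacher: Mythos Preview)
The paper states this proposition without proof, so there is nothing to compare against; your Fenchel--Rockafellar route is exactly the expected one and the qualification argument (continuity of $F$ at $\Phi 0$) is correct. Two points need tightening.

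First, the $\lambda$ bookkeeping is not resolved by ``the homogeneity of $J$ makes the scaling consistent''---that sentence is where the argument actually goes wrong. Applying Fenchel--Rockafellar to $F(\Phi u)+\lambda J(u)$ yields the dual
\[
\max_{q}\ \langle q,y\rangle-\tfrac12\|q\|^2\quad\text{s.t.}\quad \Phi^*q\in\lambda\,\partial J(0),
\]
with extremality conditions $q=-(\Phi u_\lambda-y)$ and $\Phi^*q\in\partial(\lambda J)(u_\lambda)=\lambda\,\partial J(u_\lambda)$. The paper's variable is $p=q/\lambda$: substituting gives precisely $\Phi^*p\in\partial J(0)$, $p_\lambda=-\tfrac1\lambda(\Phi u_\lambda-y)$, and $\Phi^*p_\lambda\in\partial J(u_\lambda)$. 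So the fix is a change of dual variable, not an appeal to homogeneity. Your line ``$-p_\lambda=\Phi u_\lambda-y$, which is the relation $p_\lambda=-\tfrac1\lambda(\Phi u_\lambda-y)$'' is exactly where the missing $\lambda$ is silently inserted.

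Second---and this explains why you could not make the constants line up---after the substitution $q=\lambda p$ the dual objective becomes $\lambda\bigl(\langle p,y\rangle-\tfrac{\lambda}{2}\|p\|^2\bigr)$, so the optimal value of \eqref{dual} as written is $1/\lambda$ times the primal value, not the primal value itself. The maximizer $p_\lambda$ and the extremality relations \eqref{eq_dualite} are unaffected, and these are all that the paper uses downstream; but the literal ``same value'' claim only holds after multiplying the objective of \eqref{dual} by $\lambda$. This is an artifact of the paper's normalization rather than a gap in your reasoning, so once you make the change of variable explicit your proof is complete.
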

\begin{remark}
	\Cref{prop_dualite} implies in particular that all solutions of \eqref{prob} have the same total variation and the same image by $\Phi$.
\end{remark}

\subsection{Distributional curvature}
We denote by $\mathcal{H}^1$ the $1$-dimensional Hausdorff measure on $\RR^2$, and for every Borel set~${A\subset\RR^2}$, by $\mathcal{H}^1\mres A$ the measure $\mathcal{H}^1$ restricted to $A$, i.e. such that for every Borel set $E$ we have
\begin{equation*}
	\left(\mathcal{H}^1\mres A\right)(E)=\mathcal{H}^1(A\cap E)\,.
\end{equation*}

If $E\subset\RR^2$ is a set of finite perimeter, then the distributional  curvature vector of $E$ is~${\mathbf{H}_E:C^{\infty}_c(\RR^2,\RR^2)\to\RR}$ defined by
 \begin{equation*}
    \forall T\in C^{\infty}_c(\RR^2,\RR^2),~\langle \mathbf{H}_E,T\rangle=\int_{\partial^*E}\text{div}_E\,T\,d\mathcal{H}^{1}\,,
\end{equation*}
  where $\text{div}_E\,T$ denotes the tangential divergence of $T$ on~$E$ given by
  \begin{equation*}\text{div}_E\,T=\text{div}\,T - (\Diff T\,\nu_E)\cdot \nu_E\,,\end{equation*}
where $\Diff T$ denotes the differential of $T$. $E$ is said to have locally integrable distributional curvature if there exists a function  ${H_E\in\mathrm{L}^1_{loc}(\partial^*E;\mathcal{H}^1)}$ such that
\begin{equation*}\mathbf{H}_E=H_E\,\nu_E\,\mathcal{H}^1\mres\partial^*E\,.\end{equation*}

For instance, if $E$ is an open set with $C^2$ boundary, it has a locally summable distributional curvature which is given by the (classical) scalar mean curvature.

\section{A modified Frank-Wolfe algorithm}
\label{pres_fw}

\begin{algorithm}
 \KwData{objective $f$, domain $C$, starting point $x_0\in C$}
 \KwResult{point $x^*$}
 \While{true}{
  find $s_k\in \underset{s\in C}{\text{Argmin}}f(x_k)+df(x_k)\left[s-x_k\right]$\;

  \eIf{$df(x_k)\left[s_k-x_k\right]=0$}{
   output $x^*\gets x_k$, which is optimal\;
   }{
   $\gamma_k\gets \frac{2}{k+2}$\;
   \tcp{tentative update}
   $\tilde{x}_{k+1}\gets x_k+\gamma_k (s_k-x_k)$\;
   \tcp{final update}
   choose any $x_{k+1}$ such that $f(x_{k+1})\leq f(\tilde{x}_{k+1})$\;
   }
 }
 \caption{Frank-Wolfe algorithm}
 \label{fw}
\end{algorithm}

In the spirit of \cite{brediesInverseProblemsSpaces2013,boydAlternatingDescentConditional2017,
denoyelleSlidingFrankWolfeAlgorithm2019} which introduced variants of the conditional gradient algorithm for sparse spikes recovery in a continuous domain, we derive a modified Frank-Wolfe
algorithm allowing to iteratively solve \eqref{prob} in a gridless manner.

\subsection{Description}
\label{sec_fw}

The Frank-Wolfe algorithm (see Algorithm \ref{fw}) allows to minimize a convex differentiable function $f$ over a weakly compact convex subset $C$ of a Banach space. Each step of the algorithm consists in minimizing a linearization of $f$ on $C$, and building the next iterate as a convex combination of the obtained point and the current iterate.

An important feature of the algorithm is that while the classical update (Line 8) is to take $x_{k+1}$ to be equal to $\tilde{x}_{k+1}$, all convergence guarantees
are preserved if one chooses any $x_{k+1}\in C$ such that~${f(x_{k+1})\leq
f(\tilde{x}_{k+1})}$ instead.

Even though $T_{\lambda}$ is not differentiable, it is possible to recast problem \eqref{prob} into that framework by performing an epigraphical lift (see \Cref{derivation_algo}). In this setting, the linear minimization step which is at the core of the algorithm amounts to solving the following problem
\begin{equation}
	\underset{u \in \LD}{\text{min}} ~ \int_{\mathbb{R}^2}\eta\,u  \quad \text{ s.t. } ~  \abs{\Diff u}(\mathbb{R}^2)\leq 1\,,
  \label{cheeger_relaxed}
\end{equation}
for an iteration-dependent function $\eta \in \LD$. Denoting $u^{[k]}$ the $k$-th iterate, this function is given by \begin{equation*}
\eta^{[k]}\eqdef-\frac{1}{\lambda}\Phi^*\left(\Phi u^{[k]}-y\right).
\end{equation*}
As is usual when using the Frank-Wolfe algorithm, we notice that since the objective of \eqref{cheeger_relaxed} is linear and the total variation unit ball is convex and compact (in the weak
$\LD$ topology), at least one of its extreme points is optimal. A result due
to Fleming \cite{flemingFunctionsGeneralizedGradient1957} (see
also~\cite{ambrosioConnectedComponentsSets2001}) states that those extreme
points are exactly the functions of the form  $\pm\mathbf{1}_E/P(E)$ where
$E\subseteq\mathbb{R}^2$ is a simple set with $0<|E|<+\infty$. This means the linear minimization step can be carried out by finding a simple set solving the following geometric variational problem:
\begin{equation}
   \underset{E \subseteq \mathbb{R}^2}{\text{max}} ~
   \frac{\left|\int_{E}\eta\right|}{P(E)}
    \quad \text{ s.t. } ~ 0<|E|<+\infty,~P(E)<+\infty\,.
  \label{cheeger}
\end{equation}
Since Problem~\eqref{cheeger} is reminiscent of the Cheeger problem \cite{pariniIntroductionCheegerProblem2011}, which, given a domain~$\Omega\subseteq\mathbb{R}^2$, consists in finding the subsets $E$
of $\Omega$ minimizing the ratio~${P(E)/|E|}$, we refer to it as the~``Cheeger problem'' in the rest of the paper, and to any of its solutions as a ``Cheeger set''.

In view of the above, we derive Algorithm \ref{sfw}, which produces a sequence of functions that are linear combinations of indicators of simple sets, and which is a valid application of Algorithm \ref{fw} to \eqref{prob}, in the sense that \Cref{conv_res} holds.

\begin{algorithm}
\KwData{measurement operator $\Phi$, observations $y$, regularization parameter $\lambda$}
\KwResult{function $u^*$}
$u^{[0]}\gets 0$\;
$N^{[0]}\gets 0$\;
\While{true}{
$\eta^{[k]}\gets -\frac{1}{\lambda}\Phi^*\left(\Phi u^{[k]}-y\right)$\;
$E_*\gets \underset{E \text{ simple}}{\text{Argmax}}~\frac{\left|\int_E\eta^{[k]}\right|}{P(E)} \text{ s.t. } 0<|E|<+\infty$\;
\eIf{$\left|\int_{E_*}\eta^{[k]}\right|\leq P(E_*)$}{
output $u^*\gets u^{[k]}$, which is optimal\;}
{$E^{[k+1]}\gets(E_1^{[k]},...,E_{N^{[k]}}^{[k]},E_*)$\;
$a^{[k+1]}\gets \underset{a\in\mathbb{R}^{N^{[k]}+1}}{\text{argmin}}
T_{\lambda}\left(\sum\limits_{i=1}^{N^{[k]}+1}a_i
\,\mathbf{1}_{E^{[k+1]}_i}\right)$\;
remove atoms with zero amplitude\;
$N^{[k+1]}\gets$ number of atoms in $E^{[k+1]}$\;
$u^{[k+1]}\gets\sum\limits_{i=1}^{N^{[k+1]}}a^{[k+1]}_i\,\mathbf{1}_{E^{[k+1]}_i}$\;}
}
\caption{modified Frank-Wolfe algorithm applied to \eqref{prob}}
\label{sfw}
\end{algorithm}

\begin{remark}
   We use here a so-called \enquote{fully corrective} variant of
   Frank-Wolfe, meaning that instead of choosing the next iterate ${u^{[k+1]}}$ as a convex
   combination of~${\pm\mathbf{1}_{E_*}/P(E_*)}$ and the previous iterate ${u^{[k]}}$ as in Line~7 of Algorithm \ref{fw}, we optimize (Line 10 of Algorithm \ref{sfw}) the objective over $\text{Vect}\left(\left(\mathbf{1}_{E_i}\right)_{i=1}^{N^{[k]}+1}\right)$, which decreases the objective more than the standard update, and hence does not break convergence guarantees.
\end{remark}

\begin{remark}
   \revision{Line 10 of Algorithm \ref{sfw} can always be reduced to the resolution of a LASSO-type problem (possibly changing $E^{[k+1]}$ and constraining the sign of the components of $a$). Indeed, given $N\in\mathbb{N}^*$ and~$E_1,...,E_N$ a collection of simple sets, assuming that we have
   \begin{equation}
      \forall a\in\RR^N,~\left|\Diff\left(\sum\limits_{i=1}^N a_i\,\mathbf{1}_{E_i}\right)\right|(\mathbb{R}^2)=\sum\limits_{i=1}^N |a_i|\,P(E_i)\,,
      \label{tv_sum}
   \end{equation}
   then we get that
   \begin{equation*}
      T_{\lambda}(u)=\frac{1}{2}||\Phi_E\,a -y||^2+\lambda\,\sum\limits_{i=1}^N P(E_i)\,|a_i|\,,
   \end{equation*}
   with
   \begin{equation*}
   \Phi_E\eqdef\left[\left(\int_{E_i}\varphi_j\right)_{\substack{1\leq i\leq N\\1\leq j\leq m}}\right]^T\in\RR^{m\times N}\,.
   \end{equation*}
   Hence, finding the vector $a$ minimizing $T_{\lambda}(u)$ with the sets~${E_1,...,E_N}$ fixed amounts to solving a finite dimensional least squares problem with a weighted $\ell^1$ norm penalization (the weights are here the perimeters of the sets $(E_i)_{i=1}^N$).

   Identity \eqref{tv_sum} holds as soon as $\mathcal{H}^1(\partial^*E_i\cap\partial^*E_j)=0$ for every $i\neq j$. Although this is generically satisfied, and that we never observe experimentally configurations where this fails, we describe in \Cref{lasso_amplitudes} how to change $E^{[k+1]}$ to reduce Line 10 to a LASSO-type problem at the price of constraining the sign of the components of $a$.}
\end{remark}

\begin{remark}
   The stopping condition is here replaced by
   \begin{equation*}
      \underset{E}{\sup}~\frac{
      \left|\int_E\eta^{[k]}\right|
      }{P(E)}\leq 1 \text{, with } \eta^{[k]}=-\frac{1}{\lambda}\Phi^*\left(\Phi u^{[k]}-y\right),
   \end{equation*}
   which is equivalent to~${\eta^{[k]}\in\partial J(0)}$. Since the optimality of $a^{[k]}$ at Line 10 always ensures $\int_{\RR^2}\eta^{[k]}\,u^{[k]}=J(u^{[k]})$, this yields ${\eta^{[k]}\in\partial J(u^{[k]})}$ and hence \eqref{eq_dualite} holds,
   which means $u^{[k]}$ solves \eqref{prob}.
\end{remark}

\subsection{Convergence results}

As already mentioned, Algorithm \ref{sfw} is a valid application of Algorithm \ref{fw} to \eqref{prob}, in the sense that the following property holds (see
\cite{jaggiRevisitingFrankWolfeProjectionFree2013}):
\begin{proposition}
   Let $(u^{[k]})_{k\geq 0}$ be a sequence produced by Algorithm \ref{sfw}. Then there exists $C>0$ such that for any solution $u^*$ of Problem \eqref{prob},
\begin{align}
   \forall k\in\mathbb{N}^*,~T_{\lambda}(u^{[k]})-T_{\lambda}(u^*)\leq \frac{C}{k}\,.
\end{align}
\label{conv_res}
\end{proposition}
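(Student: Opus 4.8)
The plan is to verify that Algorithm \ref{sfw} fits into the abstract framework of Algorithm \ref{fw} after an epigraphical lift, and then to invoke the classical $O(1/k)$ sublinear rate for Frank--Wolfe on a convex differentiable objective over a weakly compact convex set. First I would set up the lift: following \Cref{derivation_algo}, one rewrites \eqref{prob} as the minimization of a smooth convex function $f$ over a weakly compact convex set $C$. The natural choice is $C = \{(u,t) \in \LD \times \RR \,:\, |\Diff u|(\RR^2) \le t \le M\}$ for a suitable constant $M$ (the bound $M$ being available because any minimizer $u^*$ of $T_\lambda$ satisfies $\lambda\,|\Diff u^*|(\RR^2) \le T_\lambda(0) = \tfrac12\|y\|^2$, so one may safely cap the lifted variable), with $f(u,t) = \tfrac12\|\Phi u - y\|^2 + \lambda t$. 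This $f$ is convex and Fréchet-differentiable with Lipschitz gradient on $C$, and $C$ is convex and compact for the weak topology of $\LD \times \RR$ since the sublevel sets of the total variation are weakly compact in $\LD$ (bounded in BV, hence relatively compact, and $J$ is weakly lower semicontinuous).

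Next I would check that the linear minimization oracle used in Algorithm \ref{sfw} is exactly the one required by Algorithm \ref{fw} for this lifted problem. Linearizing $f$ at $(u^{[k]}, t^{[k]})$ gives the direction $\int \eta^{[k]} \,(s-u^{[k]}) $ plus a term in the $t$-variable; minimizing over $C$ decouples and reduces, via Fleming's characterization of the extreme points of the total-variation unit ball \cite{flemingFunctionsGeneralizedGradient1957}, to solving the Cheeger-type problem \eqref{cheeger}, i.e. to finding the simple set $E_*$ computed at Line 5. The stopping test $\bigl|\int_{E_*}\eta^{[k]}\bigr| \le P(E_*)$ is precisely $df(x_k)[s_k - x_k] = 0$. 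The crucial point is then that all the additional work done in Algorithm \ref{sfw} — the fully corrective update at Line 10, the pruning at Line 11, and the sliding step at Line 13 — only ever \emph{decreases} $T_\lambda$ below the value $T_\lambda(\tilde u^{[k+1]})$ attained by the vanilla convex-combination update $\tilde x_{k+1} = x_k + \gamma_k(s_k - x_k)$ with $\gamma_k = 2/(k+2)$. This is exactly the freedom allowed at Line 8 of Algorithm \ref{fw}: choosing any $x_{k+1}$ with $f(x_{k+1}) \le f(\tilde x_{k+1})$. Hence the iterates of Algorithm \ref{sfw} are a legitimate run of Algorithm \ref{fw}.

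With these two facts in hand, the conclusion follows from the standard Frank--Wolfe analysis (see \cite{jaggiRevisitingFrankWolfeProjectionFree2013}): writing $h_k = f(x_k) - \min_C f$, the descent lemma plus the definition of $s_k$ give $h_{k+1} \le (1-\gamma_k) h_k + \tfrac{\gamma_k^2}{2} C_f$, where $C_f$ is the curvature constant of $f$ on $C$ (finite here because $\nabla f$ is Lipschitz and $C$ is bounded); an elementary induction with $\gamma_k = 2/(k+2)$ yields $h_k \le 2C_f/(k+2) \le C/k$. Since $\min_C f = T_\lambda(u^*)$ for any solution $u^*$ and $f(x_k) = T_\lambda(u^{[k]})$ (the lifted coordinate being driven to $|\Diff u^{[k]}|(\RR^2)$ by the corrective step), this is the claimed bound with a uniform constant $C$, independent of the particular solution $u^*$.

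The main obstacle is the setup of the lift and the verification of weak compactness of $C$: one must justify that the total-variation sublevel sets are weakly sequentially compact in $\LD$ (not just in $\mathrm{L}^1_{loc}$), which requires the observation that mass can be controlled a priori along the algorithm, and one must confirm that the curvature constant $C_f$ is genuinely finite on the (bounded) lifted domain. Once the lifted problem is correctly stated and seen to satisfy the hypotheses of Algorithm \ref{fw}, the rate itself is entirely routine. The remaining minor point to be careful about is that the sliding step and the pruning steps preserve membership in $C$ — i.e. that the sets produced remain simple with finite measure and that the lifted coordinate can be taken to remain below $M$ — but this is built into the description of Lines 13--15 and into the a priori bound on $T_\lambda(u^{[k]}) \le T_\lambda(0)$, which holds since the objective is nonincreasing along the iterations.
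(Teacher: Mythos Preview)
Your proposal is correct and follows essentially the same route as the paper: the paper does not give an explicit proof of this proposition but simply states that Algorithm~\ref{sfw} is a valid instance of Algorithm~\ref{fw} applied to the epigraphical lift described in \Cref{derivation_algo}, and then cites \cite{jaggiRevisitingFrankWolfeProjectionFree2013} for the $O(1/k)$ rate. Your plan spells out precisely these two ingredients---the lift $(u,t)\in C$ with $f(u,t)=\tfrac12\|\Phi u-y\|^2+\lambda t$, the identification of the linear minimization step with the Cheeger problem~\eqref{cheeger}, and the observation that Lines~10--15 only decrease the objective below the vanilla update---and then invokes the standard curvature-based recursion, which is exactly what the paper intends.
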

\begin{remark}
   As discussed in \cite{jaggiRevisitingFrankWolfeProjectionFree2013}, the
   linear minimization step (solving \eqref{cheeger_relaxed} or
   equivalently~\eqref{cheeger}) can be solved approximately. In fact if there exists $\delta>0$ such that for every $k$ the set computed at~${\text{Line 5 }}$ is an $\epsilon_k$-maximizer of \eqref{cheeger} with $\epsilon_k=\frac{\gamma}{k+2}\delta$, then
   \begin{align}
      \forall k\in\mathbb{N}^*,~T_{\lambda}(u^{[k]})-T_{\lambda}(u^*)\leq \frac{2\,\gamma}{k+2}(1+\delta)\,,
   \end{align}
   where $\gamma$ is the curvature constant of the objective used in the reformulation of $\eqref{prob}$. One can in fact show that this curvature constant is smaller than a quantity which is proportional to $\|\Phi\|^2\,\left(\|y\|^2/\lambda\right)^2$.
   \label{approx_jaggi}
\end{remark}
We first provide a general property of minimizing sequences (see e.g.  \cite{iglesiasNoteConvergenceSolutions2018} for a proof), which hence applies to the sequence of iterates produced by Algorithm \ref{sfw}.
\begin{proposition}
   Let $(u_n)_{n\geq 0}$ be a minimizing sequence for \eqref{prob}. Then there exists a subsequence (not relabeled) which converges weakly in $\LD$ and strongly in $\mathrm{L}^1_{loc}(\RR^2)$ to a solution $u_*$ of \eqref{prob}. Moreover, we have~${\Diff u_n\overset{\ast}{\rightharpoonup}\Diff u_*}$ and $|\Diff u_n|(\RR^2)\to |\Diff u|(\RR^2)$.
   \label{conv_min_seq}
\end{proposition}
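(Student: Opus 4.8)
\emph{Proof plan.} The plan is to run the classical direct method: first extract from $(u_n)$ a subsequence with enough compactness, then pass to the limit in $T_\lambda$ by lower semicontinuity. I would first record that, $(u_n)$ being minimizing and $T_\lambda(0)=\tfrac12\norm{y}^2<+\infty$, the values $T_\lambda(u_n)$ are bounded by some $C_0$; in particular $\abs{\Diff u_n}(\RR^2)=J(u_n)\le C_0/\lambda$ for all $n$. The key point — and the one place where the planar setting is essential — is that this already forces $(u_n)$ to be bounded in $\LD$. Indeed, by the critical Sobolev (Gagliardo--Nirenberg) inequality in dimension two, any $v\in\mathrm{L}^1_{loc}(\RR^2)$ with $\abs{\Diff v}(\RR^2)<+\infty$ coincides, up to an additive constant, with a function in $\LD$ whose $\mathrm{L}^2$ norm is bounded by $C_S\,\abs{\Diff v}(\RR^2)$; when $v$ itself lies in $\LD$ this constant must vanish, since no nonzero constant belongs to $\LD$. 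Hence $\norm{u_n}\le C_S\,C_0/\lambda$ for all $n$. This step is not a formality: the fidelity term $\tfrac12\norm{\Phi u-y}^2$ does not control $\norm{u}$ because $\ker\Phi$ may be infinite-dimensional, so coercivity of $T_\lambda$ on $\LD$ genuinely relies on the embedding $\mathrm{BV}(\RR^2)\hookrightarrow\LD$. I expect this to be the main obstacle.

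Next I would extract subsequences. By Banach--Alaoglu, a subsequence (not relabeled) converges weakly in $\LD$ to some $u_*$. The gradients $\Diff u_n$ form a bounded sequence in the space of finite $\RR^2$-valued Radon measures, so along a further subsequence $\Diff u_n\overset{\ast}{\rightharpoonup}\mu$ for some finite measure $\mu$; testing against fields in $C_c^\infty(\RR^2,\RR^2)$ and using the weak $\LD$ convergence identifies $\mu=\Diff u_*$, which in particular yields $\abs{\Diff u_*}(\RR^2)<+\infty$ and the last assertion of the statement. For the strong $\mathrm{L}^1_{loc}$ convergence, I would fix $R\in\NN$: on the ball $B(0,R)$ the sequence $(u_n)$ is bounded in $\mathrm{L}^1$ (by Cauchy--Schwarz, from the $\mathrm{L}^2$ bound) and has uniformly bounded total variation, hence is bounded in $\mathrm{BV}(B(0,R))$; the compact embedding $\mathrm{BV}(B(0,R))\hookrightarrow\mathrm{L}^1(B(0,R))$ together with a diagonal argument over $R$ gives a further subsequence converging in $\mathrm{L}^1_{loc}(\RR^2)$, necessarily to $u_*$ by comparison of the distributional limits.

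Finally I would pass to the limit in $T_\lambda$. Each component of $\Phi$ is of the form $u\mapsto\int_{\RR^2}u\,\varphi_j$ with $\varphi_j\in\LD$ and is therefore weakly continuous, so $\Phi u_n\to\Phi u_*$ in $\RR^m$ and $\norm{\Phi u_n-y}^2\to\norm{\Phi u_*-y}^2$. On the other hand $J=\abs{\Diff\,\cdot\,}(\RR^2)$, being a supremum of the functionals $v\mapsto-\int_{\RR^2}v\,\div z$ with $z\in C_c^\infty(\RR^2,\RR^2)$ and $\norm{z}_\infty\le 1$ — each of which is continuous for $\mathrm{L}^1_{loc}$ convergence — is lower semicontinuous along the subsequence, so $\abs{\Diff u_*}(\RR^2)\le\liminf_n\abs{\Diff u_n}(\RR^2)$. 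Adding the two contributions gives $T_\lambda(u_*)\le\liminf_n T_\lambda(u_n)=\inf T_\lambda$, whence $u_*$ is a solution of \eqref{prob}.
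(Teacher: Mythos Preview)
The paper does not actually prove this proposition: it merely states the result and refers the reader to \cite{iglesiasNoteConvergenceSolutions2018} for a proof. Your argument is correct and is exactly the direct-method proof one would expect that reference to contain; in particular, your use of the critical Sobolev embedding $\norm{u}_{\LD}\le C\,\abs{\Diff u}(\RR^2)$ (valid for $u\in\LD$ since the additive constant must vanish) is the right way to obtain coercivity, and the paper itself relies on this inequality elsewhere without comment (e.g.\ in the proof of the discretization convergence result, where ``$J(u^h)$ (and hence $\|u^h\|_{\mathrm{L}^2}$) is uniformly bounded'').
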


We now provide additional properties of sequences produced by Algorithm \ref{sfw}. We first begin by noticing that if $(u^{[k]})_{k\geq 0}$ is such a sequence, then the optimality condition at Line 10 ensures that \begin{equation*}
\forall k,~\forall i\in\{1,...,N^{[k]}\},~ P(E_i^{[k]})=\left|\int_{E_i^{[k]}}\eta^{[k]}\right|.
\end{equation*}
But from \Cref{conv_res} and \Cref{conv_min_seq}
we have the existence of a (not relabeled) subsequence which converges strongly in $\mathrm{L}^1_{loc}(\RR^2)$ and weakly in $\LD$ towards a solution $u^*$ of \eqref{prob}. The weak convergence of $(u^{[k]})_{k\geq 0}$ in $\LD$
implies that~${\underset{n\to+\infty}{\text{lim}}\Phi u^{[k]}=\Phi u^*}$, which in turns yields the strong convergence in $\LD$ of $(\eta^{[k]})_{k\geq 0}$ towards the solution $\eta^*$ of \eqref{dual}. We can then use the following lemma to show all the sets $E_i^{[k]}$ are included in some common ball.
\begin{lemma}
   Let $(\eta_k)_{k\geq 0}$ be a sequence of functions converging strongly to $\eta_{\infty}$ in $\LD$. For all $k\geq 0$, we denote
   \begin{equation*}
      \mathcal{F}_k\eqdef\left\{E\text{ simple}\,\bigg\rvert\,0<|E|<+\infty,~P(E)=\left|\int_{E}\eta_k\right|\right\},
   \end{equation*}
   and $\mathcal{F}=\cup_{k\geq 0}\mathcal{F}_k$. Then there exist positive real numbers $R$ and $C$ such that
   \begin{equation*}\forall E\in\mathcal{F},~P(E)\leq C \text{ and }E\subset B(0,R)\,.\end{equation*}
   \label{lemma_bounds}
\end{lemma}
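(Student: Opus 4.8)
The statement bounds, uniformly over all $E$ in the family $\mathcal F$, both the perimeter $P(E)$ and the diameter of $E$ (it is contained in a fixed ball). The key fact to exploit is that every such $E$ is a \emph{Cheeger-type competitor}: $P(E) = |\int_E \eta_k|$ for some $k$, which means $|\int_E \eta_k| / P(E) = 1$, i.e. $\Phi^* p$ is on the boundary of $\partial J(0)$ along this sequence and the ratio $|\int_E \eta_k|/P(E)$ is \emph{maximal} (equal to $1$). The first step is therefore to turn strong $\mathrm{L}^2$ convergence $\eta_k\to\eta_\infty$ into a uniform bound: since $(\eta_k)$ is convergent it is bounded in $\LD$, say $\sup_k\|\eta_k\|_{\mathrm L^2}\le M$. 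Then for any $E\in\mathcal F$, by Cauchy--Schwarz, $P(E) = |\int_E\eta_k|\le \|\eta_k\|_{\mathrm L^2}\,|E|^{1/2}\le M\,|E|^{1/2}$. Combining this with the planar isoperimetric inequality $|E|\le \frac{1}{4\pi}P(E)^2$ does \emph{not} immediately give a bound (it goes the wrong way), so instead I use a \emph{relative} isoperimetric-type estimate or, more simply, observe that the reverse direction must come from the fact that $\eta_k$ has small $\mathrm L^2$ mass far away. Concretely: fix $\varepsilon>0$ small (to be chosen below); by strong convergence there is $R_0$ such that $\|\eta_\infty\|_{\mathrm L^2(\RR^2\setminus B(0,R_0))}<\varepsilon$, and there is $k_0$ with $\|\eta_k-\eta_\infty\|_{\mathrm L^2}<\varepsilon$ for $k\ge k_0$, hence $\|\eta_k\|_{\mathrm L^2(\RR^2\setminus B(0,R))}<2\varepsilon$ for all $k\ge k_0$ and all $R\ge R_0$. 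For the finitely many indices $k<k_0$, each individual $\eta_k\in\LD$ also has a tail radius $R_k$ with $\|\eta_k\|_{\mathrm L^2(\RR^2\setminus B(0,R_k))}<2\varepsilon$; take $R_1=\max(R_0,R_1,\dots,R_{k_0-1})$. So: \emph{there is $R_1$ such that $\|\eta_k\|_{\mathrm L^2(\RR^2\setminus B(0,R_1))}<2\varepsilon$ for every $k$.}

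\emph{Step 2: perimeter bound.} Fix $E\in\mathcal F$, associated to some index $k$. Split $E = (E\cap B(0,R_1))\cup(E\setminus B(0,R_1))$. On the inside, $|E\cap B(0,R_1)|\le \pi R_1^2$, so $|\int_{E\cap B(0,R_1)}\eta_k|\le M(\pi R_1^2)^{1/2}$. On the outside, using the relative isoperimetric inequality on the complement of the ball (or just $\big|\int_{E\setminus B(0,R_1)}\eta_k\big|\le 2\varepsilon\,|E\setminus B(0,R_1)|^{1/2}$ together with $|E\setminus B(0,R_1)|\le \frac{1}{4\pi}P(E\setminus B(0,R_1))^2 \le \frac{1}{4\pi}(P(E)+\mathcal H^1(E\cap\partial B(0,R_1)))^2$); here one has to be slightly careful because cutting $E$ with a ball can increase perimeter. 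The clean way is to choose, for each $E$, a \emph{good radius} $\rho\in[R_1,R_1+1]$ via the coarea/Cavalieri argument so that $\mathcal H^1(E\cap\partial B(0,\rho))\le |E\setminus B(0,R_1)| \le |E|$, ensuring $P(E\cap B(0,\rho))\le P(E)+|E|$. Then from $P(E) = |\int_E\eta_k| \le M(\pi(R_1+1)^2)^{1/2} + 2\varepsilon\,|E|^{1/2}$ and $|E|^{1/2}\le \frac{1}{2\sqrt\pi}P(E)$ (isoperimetric), we get $P(E)\le M\sqrt\pi(R_1+1) + \frac{\varepsilon}{\sqrt\pi}P(E)$; choosing $\varepsilon<\sqrt\pi$ and absorbing gives $P(E)\le C\eqdef \frac{M\sqrt\pi(R_1+1)}{1-\varepsilon/\sqrt\pi}$. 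This proves the uniform perimeter bound, and with the isoperimetric inequality also $|E|\le \frac{C^2}{4\pi}$.

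\emph{Step 3: inclusion in a ball.} Once $P(E)$ and $|E|$ are uniformly bounded, I argue by contradiction that $E$ cannot escape to infinity. Since $E$ is simple, it is connected (it is the interior of a Jordan curve); suppose $E\not\subset B(0,R)$ for $R$ large. Because $|E|\le C^2/(4\pi)$ is small, most of $B(0,R)$ for $R\gg C$ lies outside $E$, yet $E$ also meets $\RR^2\setminus B(0,R)$; so $E$ must cross many concentric circular annuli in which it has small measure. Formally, for radii $t\in[R_2,R]$ (with $R_2$ just larger than $R_1$ and any point of $E$ with $|x|\le R_2$ — if no such point exists, translate the argument), the slice length $\ell(t)=\mathcal H^1(E\cap\partial B(0,t))>0$ for a.e. such $t$ by connectedness, and $\int_{R_2}^R\ell(t)\,\mathrm dt = |E\cap(B(0,R)\setminus B(0,R_2))|\le |E|\le C^2/(4\pi)$; meanwhile, a lower isoperimetric bound on each annular slice (a connected planar set crossing an annulus of width $w$ has perimeter $\ge 2w$) gives $P(E)\ge 2(R-R_2)$, forcing $R-R_2\le C/2$, i.e. $E\subset B(0,R_2+C/2)$. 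The only delicate point is justifying that $E$ being simple forces it to genuinely ``cross'' — I will use that a simple set is the interior of a rectifiable Jordan curve, hence arcwise connected, so any point of $E$ at distance $>R_2+C/2$ is joined inside $E$ to a point near the origin by a curve whose length lower-bounds $P(E)$ up to a constant (via the relative isoperimetric inequality in annuli). Taking $R\eqdef R_2+C/2+1$ finishes the proof.

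\emph{Main obstacle.} The technical heart is Step 3: converting ``bounded perimeter and small area'' into ``bounded diameter'' for a connected (simple) set. The subtlety is that bounded perimeter alone does not bound the diameter (a long thin set), and bounded area alone does not either (a long thin set of tiny width) — but the \emph{combination} with connectedness does, and making this quantitative requires a careful coarea/slicing argument and an annular relative isoperimetric inequality. The interplay with the hypothesis $P(E)=|\int_E\eta_k|$ is only needed in Steps 1--2 to get the perimeter bound; I expect the argument to be robust but the constants and the handling of the ``good radius'' to require care.
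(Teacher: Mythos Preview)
Your Steps~1 and~2 are essentially the paper's argument. The paper phrases Step~1 as equi-integrability of $\{\eta_k^2\}$ (which is exactly what your tail estimate gives), and in Step~2 it bounds $|E\setminus B(0,R_1)|$ via the isoperimetric inequality applied to $E\setminus B(0,R_1)$ rather than to $E$ itself, but the absorption is the same. Your ``good radius'' detour is unnecessary: splitting at $R_1$ and using $|E\setminus B(0,R_1)|^{1/2}\le|E|^{1/2}\le\frac{1}{2\sqrt\pi}P(E)$ already suffices.

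Step~3 has a genuine gap. You write ``if no such point exists, translate the argument'', but translation is not available: the tail estimate $\|\eta_k\|_{\mathrm L^2(\RR^2\setminus B(0,R_1))}<2\varepsilon$ is anchored at the origin. What you need, and omit, is the claim $E\cap B(0,R_1)\neq\emptyset$. This follows immediately from the estimate you already have: if $E\cap B(0,R_1)=\emptyset$ then
\[
P(E)=\Big|\int_E\eta_k\Big|\le 2\varepsilon\,|E|^{1/2}\le\frac{\varepsilon}{\sqrt\pi}\,P(E)<P(E),
\]
a contradiction. This is exactly what the paper does.

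Once $E$ meets $B(0,R_1)$ and $P(E)\le C$, the paper finishes in one line: since $E$ is simple (the interior of a rectifiable Jordan curve), $\mathrm{diam}(E)\le P(E)\le C$, hence $E\subset B(0,R_1+C)$. Your annular-slicing route is aiming at the same conclusion $P(E)\ge 2(R-R_2)$, which is really just $\mathrm{diam}(E)\le P(E)/2$ in disguise, but the slicing justification you sketch is incomplete (e.g.\ if $B(0,R_2)\subset E$ then the circles $\partial B(0,t)$ for small $t$ lie entirely inside $E$ and never meet $\partial E$, so the ``two crossings per circle'' picture breaks down). The direct diameter bound is the clean way to close the argument.
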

\begin{proof}
   This proof is based on \cite[Section 5]{chambolleGeometricPropertiesSolutions2016}.

   \medskip
   \noindent\textbf{Upper bound on the perimeter:} the family of functions $\{\eta_k^2,~k\in\NN\}\cup\{\eta_{\infty}^2\}$ being equi-integrable, for all~${\epsilon>0}$ there exists $R_1>0$ such that
   \begin{equation*}
      \forall k,~\int_{\mathbb{R}^2\setminus B(0,R_1)}\eta_k^2\leq \epsilon^2\,.
   \end{equation*}
   Let $E\in\mathcal{F}$. Then there exists $k$ s.t.~${P(E)= \left|\int_{E}\eta_k\right|}$ and we have:
\begin{equation*}
   \begin{aligned}
      \left|\int_{E}\eta_k\right|&\leq \left|\int_{E\cap B(0,R_1)}\eta_k\right|+\left|\int_{E\setminus B(0,R_1)}\eta_k\right|\\
		&\leq~~\sqrt{|B(0,R_1)|}~||\eta_k||_{L^2}\\
      &~~~+\sqrt{|E\setminus B(0,R_1)|}\,\sqrt{\int_{\mathbb{R}^2\setminus B(0,R_1)}\eta_k^2}\\
      & \leq \underset{k}{\sup}~||\eta_k||_{L^2}\,\sqrt{|B(0,R_1)|}+\epsilon\,\sqrt{|E\setminus B(0,R_1)|}\,.
   \end{aligned}
\end{equation*}
Moreover,
\begin{equation*}
   \sqrt{|E\setminus B(0,R_1)|}\leq \frac{1}{\sqrt{c_2}}(P(E)+P(B(0,R_1)))\,,
\end{equation*}
where $c_2\eqdef 4\pi$ is the isoperimetric constant. Hence taking $\epsilon\eqdef\frac{\sqrt{c_2}}{2}$ and defining
\begin{equation*}
   C\eqdef 2\left(\sqrt{|B(0,R_1)|}~\underset{k}{\text{sup}}~||\eta_k||_{L^2}+\frac{1}{2}P(B(0,R_1))\right),
\end{equation*}
we have $P(E)\leq \frac{1}{2}P(E) + \frac{C}{2}$ and hence $P(E)\leq C$.

\medskip

\noindent\textbf{Inclusion in a ball:} we still take $\epsilon=\frac{\sqrt{c_2}}{2}$ and fix a real $R_2>0$ such that $\int_{\mathbb{R}^2\setminus B(0,R_2)}\eta_k^2\leq \epsilon^2$ for all $k$. Now let $E\in\mathcal{F}$ and $k$ such that
\begin{equation*}
   P(E)= \left|\int_{E}\eta_k\right|.
\end{equation*}
Let us show that $E\cap B(0,R_2)\neq \emptyset$. By contradiction, if $E\cap B(0,R_2)= \emptyset$, we would have:
\begin{equation*}
   \begin{aligned}
   P(E)= \left|\int_{E\setminus B(0,R_2)}\eta_k\right|&\leq \sqrt{\int_{\mathbb{R}^2\setminus B(0,R_2)}\eta_k^2}~\sqrt{|E|}\\
   &\leq\frac{\epsilon}{\sqrt{c_2}}P(E)\,.
\end{aligned}
\end{equation*}
Dividing by $P(E)$ (which is positive since $0<|E|<\infty$) yields a contradiction. Since $E$ is simple, the perimeter bound yields $\text{diam}(E)\leq C$, which shows~${E\subset B(0,R)}$ with $R\eqdef C+R_2$.
\end{proof}
We have now shown there exists $R>0$ such that for all $k$
we have $\text{Supp}(u^{[k]})\subset
B(0,R)$, which means the strong $\mathrm{L}^1_{loc}$ convergence of $(u^{[k]})_{k\geq 0}$ towards $u^*$ is in fact a strong $\mathrm{L}^1$ convergence. This slightly improved convergence result is summarized in the following proposition:
\begin{proposition}
   Let $(u_n)_{n\geq 0}$ be a sequence produced by Algorithm \ref{sfw}. Then there exists a (not relabeled) subsequence and $R>0$ such that $\text{Supp}(u_n)\subset B(0,R)$ for all $n$. Moreover, this subsequence  converges strongly in~${\mathrm{L}^1(\RR^2)}$ to a solution $u_*$ of \eqref{prob} (and by \Cref{conv_min_seq} weakly in $\LD$, with
   moreover~${\Diff u_n\overset{\ast}{\rightharpoonup}\Diff u_*}$ and~${|\Diff u_n|(\RR^2)\to |\Diff u|(\RR^2)}$).
   \label{conv_l1}
\end{proposition}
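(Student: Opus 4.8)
The statement essentially packages the three results that precede it, so the plan is simply to chain them. First I would note that \Cref{conv_res} forces $T_\lambda(u_n)$ to converge to the optimal value of~\eqref{prob}, so $(u_n)_{n\ge0}$ is a minimizing sequence; \Cref{conv_min_seq} then produces a subsequence (not relabeled) converging weakly in $\LD$ and strongly in $\mathrm{L}^1_{loc}(\RR^2)$ to a solution $u_*$, with $\Diff u_n\overset{\ast}{\rightharpoonup}\Diff u_*$. Since $\Phi u=\left(\int_{\RR^2}u\,\varphi_j\right)_{1\le j\le m}$ with each $\varphi_j\in\LD$, weak $\LD$ convergence gives $\Phi u_n\to\Phi u_*$ in $\RR^m$, and because $\Phi^*\colon\RR^m\to\LD$ (which maps $p$ to $\sum_j p_j\varphi_j$) is a bounded operator on a finite-dimensional space, the dual certificates $\eta^{[n]}=-\frac1\lambda\Phi^*(\Phi u_n-y)$ converge strongly in $\LD$, say to $\eta_\infty\eqdef-\frac1\lambda\Phi^*(\Phi u_*-y)$.

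Next I would feed this into \Cref{lemma_bounds}. At the end of each iteration $u_n=\sum_{i=1}^{N^{[n]}}a_i^{[n]}\mathbf{1}_{E_i^{[n]}}$ with every $a_i^{[n]}\ne0$ (zero-amplitude atoms having been discarded) and $a^{[n]}$ a minimizer of $T_\lambda$ over $\text{Vect}\bigl((\mathbf{1}_{E_i^{[n]}})_i\bigr)$, as computed at Lines~10 and~15 of Algorithm~\ref{sfw}. Up to replacing the $E_i^{[n]}$ by the level-set decomposition of $u_n$ described in \Cref{preli_sets} — which changes neither $u_n$ nor $\text{Supp}(u_n)$ — we may assume \eqref{tv_sum} holds, so that this inner problem is a weighted-$\ell^1$ least-squares problem; its first-order optimality condition at the nonzero coordinate $a_i^{[n]}$ reads $\int_{E_i^{[n]}}\eta^{[n]}=\mathrm{sign}(a_i^{[n]})\,P(E_i^{[n]})$, and taking absolute values (with $P(E_i^{[n]})>0$) gives $P(E_i^{[n]})=\abs{\int_{E_i^{[n]}}\eta^{[n]}}$. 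Hence each $E_i^{[n]}$ lies in the family $\mathcal{F}_n$ of \Cref{lemma_bounds}; applying that lemma to the strongly convergent sequence $(\eta^{[n]})_{n\ge0}$ yields constants $R,C>0$ with $E_i^{[n]}\subset B(0,R)$ and $P(E_i^{[n]})\le C$ for all $n$ and $i$, whence $\text{Supp}(u_n)\subset B(0,R)$ for every $n$.

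It only remains to upgrade the mode of convergence. Since each $u_n$ vanishes almost everywhere outside $B(0,R)$ and $u_n\to u_*$ in $\mathrm{L}^1_{loc}(\RR^2)$, the limit $u_*$ also vanishes almost everywhere outside $B(0,R)$, so $\|u_n-u_*\|_{\mathrm{L}^1(\RR^2)}=\|u_n-u_*\|_{\mathrm{L}^1(B(0,R))}\to0$, which is the claimed strong $\mathrm{L}^1(\RR^2)$ convergence. The substantial ingredients — the rate of \Cref{conv_res}, the compactness of \Cref{conv_min_seq}, and above all the uniform perimeter and diameter bounds of \Cref{lemma_bounds} — are already available, so the argument is essentially bookkeeping. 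The only point demanding a little care is the middle paragraph: correctly reading the optimality condition $P(E_i^{[n]})=\abs{\int_{E_i^{[n]}}\eta^{[n]}}$ off the inner minimizations (including the \eqref{tv_sum} caveat) and checking that $u_*$ inherits the support bound; neither is delicate, so I do not anticipate a real obstacle.
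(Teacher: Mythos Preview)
Your proposal is correct and follows essentially the same route as the paper: both chain \Cref{conv_res} and \Cref{conv_min_seq} to get weak $\LD$ and strong $\mathrm{L}^1_{loc}$ convergence of a subsequence, deduce strong convergence of $\eta^{[n]}$ via $\Phi$, read off $P(E_i^{[n]})=\bigl|\int_{E_i^{[n]}}\eta^{[n]}\bigr|$ from the inner optimality conditions, apply \Cref{lemma_bounds} to get a uniform ball, and then upgrade to strong $\mathrm{L}^1$. If anything you are slightly more careful than the paper in spelling out the \eqref{tv_sum} caveat behind the optimality condition.
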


\begin{corollary}
   Let $(u_n)_{n\geq 0}$ be a subsequence such as in \Cref{conv_l1}. Up to another extraction, for almost every~$t\in\RR$, we have
   \begin{equation*}
      \underset{n\to+\infty}{\text{lim}}~|U_n^{(t)}\triangle\, U_*^{(t)}|=0 ~\text{ and }~ \partial U_*^{(t)}\subseteq\underset{n\to+\infty}{\text{lim inf}}~\partial U_n^{(t)}\,,
   \end{equation*}
   where\footnote{For more details on this type of set convergence, see e.g.
   \cite[Chapter 4]{rockafellarVariationalAnalysis1998}.}
   \begin{equation*}
      \underset{n\to+\infty}{\text{lim inf}}~\partial U_n^{(t)}\eqdef\big\{x\in\RR^2\,\big\rvert\,\underset{n\to+\infty}{\limsup}~\mathrm{dist}(x,\partial U_n^{(t)})=0\big\}\,.
   \end{equation*}
   \label{coroll_conv_level_set}
\end{corollary}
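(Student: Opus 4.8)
The plan is to derive both statements from the $\mathrm{L}^1(\RR^2)$ convergence of \Cref{conv_l1}, using the layer-cake identity to control the symmetric differences of the level sets and then the identification $\partial E=\text{Supp}(\Diff\mathbf{1}_E)$ for the representatives fixed in \Cref{preli_sets}.

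First I would establish the convergence of symmetric differences. For every $x\in\RR^2$ one has $\abs{u_n(x)-u_*(x)}=\int_\RR\abs{\mathbf{1}_{\{u_n>s\}}(x)-\mathbf{1}_{\{u_*>s\}}(x)}\,\mathrm{d}s$, so Tonelli's theorem yields $\norm{u_n-u_*}_{\mathrm{L}^1(\RR^2)}=\int_\RR\abs{\{u_n>s\}\triangle\{u_*>s\}}\,\mathrm{d}s$ (each set $\{u_n>s\}\triangle\{u_*>s\}$ being contained in $B(0,R)$, since $u_n$ and $u_*$ are supported there by \Cref{conv_l1}). Using $A\triangle B=A^c\triangle B^c$, one has $U_n^{(s)}\triangle U_*^{(s)}=\{u_n>s\}\triangle\{u_*>s\}$ for every $s<0$, while for all but countably many $s>0$ the sets $\{u_n=s\}$ and $\{u_*=s\}$ are Lebesgue-negligible, which gives $\abs{U_n^{(s)}\triangle U_*^{(s)}}=\abs{\{u_n>s\}\triangle\{u_*>s\}}$ for a.e. $s>0$ as well. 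Hence $\int_\RR\abs{U_n^{(t)}\triangle U_*^{(t)}}\,\mathrm{d}t=\norm{u_n-u_*}_{\mathrm{L}^1(\RR^2)}\to 0$, so, up to one extraction, $\abs{U_n^{(t)}\triangle U_*^{(t)}}\to 0$ for a.e. $t$. Applying the coarea formula to each $u_n$ and to $u_*$ (all of bounded variation) and discarding a further negligible set of parameters, I may moreover assume that, for every retained $t$, all the sets $(U_n^{(t)})_{n}$ and $U_*^{(t)}$ have finite perimeter, so that the conventions of \Cref{preli_sets} apply to them.

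Next I would prove the boundary inclusion for any such $t$, arguing by contradiction. Fix $x\in\partial U_*^{(t)}$ and suppose $\limsup_n\mathrm{dist}(x,\partial U_n^{(t)})>0$; then there exist $\delta>0$ and a subsequence along which $B(x,2\delta)\cap\partial U_n^{(t)}=\emptyset$. Since $\partial U_n^{(t)}=\text{Supp}(\Diff\mathbf{1}_{U_n^{(t)}})$ for the chosen representative, $\Diff\mathbf{1}_{U_n^{(t)}}$ vanishes on the connected open set $B(x,2\delta)$, hence $\mathbf{1}_{U_n^{(t)}}$, being $\{0,1\}$-valued, is a.e. equal to $0$ or a.e. equal to $1$ on $B(x,2\delta)$. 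Extracting once more so that the same alternative holds for all $n$ of the subsequence, we get either $\abs{U_n^{(t)}\cap B(x,\delta)}=0$, so $\abs{U_n^{(t)}\triangle U_*^{(t)}}\geq\abs{U_*^{(t)}\cap B(x,\delta)}$, or $\abs{B(x,\delta)\setminus U_n^{(t)}}=0$, so $\abs{U_n^{(t)}\triangle U_*^{(t)}}\geq\abs{B(x,\delta)\setminus U_*^{(t)}}$. But $x\in\partial U_*^{(t)}$ together with the non-degeneracy of the representative of \Cref{preli_sets} gives $\abs{U_*^{(t)}\cap B(x,\delta)}>0$ and $\abs{B(x,\delta)\setminus U_*^{(t)}}>0$; in either case $\abs{U_n^{(t)}\triangle U_*^{(t)}}$ is bounded below by a positive constant, contradicting the first part. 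Hence $\mathrm{dist}(x,\partial U_n^{(t)})\to 0$ for every $x\in\partial U_*^{(t)}$, i.e. $\partial U_*^{(t)}\subseteq\liminf_n\partial U_n^{(t)}$.

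I expect the proof to be essentially bookkeeping rather than to contain a single hard step; the points that need care are (i) that the level-set facts we invoke---finite perimeter, the identity $\partial E=\text{Supp}(\Diff\mathbf{1}_E)$, and the non-degeneracy of the chosen representative---hold only for a.e. $t$, so one must intersect countably many full-measure sets of parameters (over all indices $n$ and over the two exceptional families $\{u_n=s\}$, $\{u_*=s\}$) before running the pointwise argument; and (ii) the switch between the sub-/super-level convention defining $U^{(t)}$ and the plain super-level sets $\{u>s\}$ used in the layer-cake identity, which is handled by removing the countable set of values at which a level set carries positive Lebesgue measure.
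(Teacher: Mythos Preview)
Your proof is correct and follows the paper's two-step skeleton: Fubini plus an extraction for the symmetric-difference convergence, then a contradiction argument for the boundary inclusion. The one substantive difference is in the lower bound that closes the contradiction. The paper invokes a quantitative density estimate specific to the level sets of the minimizer $u_*$ (\cite[Prop.~7]{chambolleGeometricPropertiesSolutions2016}), namely $|B(x,r)\setminus U_*^{(t)}|\geq C\,|B(x,r)|$ with $C$ independent of $r$ and $x$. You instead use only the qualitative non-degeneracy $0<|U_*^{(t)}\cap B(x,\delta)|<|B(x,\delta)|$ that comes for free from the Giusti representative fixed in \Cref{preli_sets}; since $\delta$ is fixed once the contradiction hypothesis is in place, this already yields a positive lower bound independent of $n$, which is all that is needed. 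Your route is therefore more self-contained and works for any finite-perimeter limit set, at the modest cost of the bookkeeping you flag (intersecting countably many full-measure sets of levels so that each $U_n^{(t)}$ has finite perimeter and $\partial U_n^{(t)}=\mathrm{Supp}(\Diff\mathbf{1}_{U_n^{(t)}})$). The paper's stronger density estimate is not actually required for this corollary.
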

   \begin{proof}
   The strong convergence of $(u_n)_{n\geq 0}$ towards a solution~$u_*$ in $\mathrm{L}^1(\RR^2)$ and Fubini's theorem give
   \begin{equation*}
      0=\underset{n\to+\infty}{\text{lim}}~\int_{\RR^2}|u_n-u_*|=\underset{n\to+\infty}{\text{lim}}~\int_{\RR}\left|U_n^{(t)}\triangle\, U_*^{(t)}\right|\,dt\,.
   \end{equation*}
   Hence, up to the extraction of a further subsequence, that we do not relabel, we get that
   \begin{equation*}
      \underset{n\to+\infty}{\text{lim}}~|U_n^{(t)}\triangle\, U_*^{(t)}|=0
      \text{ for almost every }t\in\RR\,.
   \end{equation*}
   We now fix such $t\in\RR$ and let $x\in\partial U_*^{(t)}$. We want to show that $x\in\underset{n\to +\infty}{\liminf}~\partial U_n^{(t)}$, which is equivalent to
   \begin{equation*}
      \underset{n\to+\infty}{\text{lim sup}}~\text{dist}\big(x,\,\partial U_n^{(t)}\big)=0\,.
   \end{equation*}
   By contradiction, if the last identity does not hold,
   we have the existence of $r>0$ and of $\varphi$ such that
   \begin{equation*}
      \forall n\in\mathbb{N},~B(x,r)\cap \partial U_{\varphi(n)}^{(t)}=\emptyset\,.
   \end{equation*}
   Hence for all $n$, we either have
   \begin{equation*}
      B(x,r)\subset U_{\varphi(n)}^{(t)} ~\text{ or }~ {B(x,r)\subset \RR^2\setminus U_{\varphi(n)}^{(t)}}\,.
   \end{equation*}
   If $B(x,r)\subset U_{\varphi(n)}^{(t)}$ for a given $n$ then
   \begin{equation*}
      \begin{aligned}
      \left|U_{\varphi(n)}^{(t)}\triangle\,
   U_*^{(t)}\right|\geq \left|U_{\varphi(n)}^{(t)}\setminus U_*^{(t)}\right|&\geq \left|B(x,r)\setminus U_*^{(t)}\right|\\&\geq C \left|B(x,r)\right|.
   \end{aligned}
   \end{equation*}
   The last inequality, which is a weak regularity property of $U_*^{(t)}$,
   holds for all $r$ smaller than some $r_0>0$, for some constant $C$ that is
   independent of $r$ and $x$ (see \cite[Prop. 7]{chambolleGeometricPropertiesSolutions2016}). We can in the same way show  \begin{equation*}\left|U_{\varphi(n)}^{(t)}\triangle\,
   U_*^{(t)}\right|\geq C \left|B(x,r)\right|\end{equation*} if $B(x,r)\subset \RR^2\setminus U_{\varphi(n)}^{(t)}$ and hence get the inequality for all $n$. Using that $\underset{n\to+\infty}{\text{lim}}~|U_n^{(t)}\triangle\, U_*^{(t)}|=0$, we get a contradiction.
   \label{remark_conv_level_set}
\end{proof}

\section{Sliding step}
\label{pres_sliding}
Several works \cite{brediesInverseProblemsSpaces2013,
raoForwardBackwardGreedy2015,boydAlternatingDescentConditional2017,
denoyelleSlidingFrankWolfeAlgorithm2019} have advocated for the use of a special final update, which helps identify the sparse structure of the sought-after signal. Loosely speaking, it would amount in our case to running, at the very end of an iteration, the gradient flow of the mapping
\begin{equation}
   (a,E)\mapsto T_{\lambda}\left(\sum\limits_{i=1}^{N^{[k+1]}}a_i \mathbf{1}_{E_i}\right)
   \label{sliding_obj}
\end{equation}
initialized with $(a^{[k+1]},E^{[k+1]})$, so as to find a set of parameters  at which the objective is smaller. Formally, this would correspond\footnote{The formulas given in \eqref{grad_flow} can be formally obtained by using the notion of shape derivative, see \cite[Chapter 5]{henrotShapeVariationOptimization2018}.} to finding a curve
\begin{equation*}t\mapsto (a_i(t),E_i(t))_{i=1}^{N^{[k+1]}}\end{equation*} such that for all $t$
\begin{equation}\left\{
   \begin{aligned}
      a_i'(t)&=-\lambda\left(\text{sign}(a_i(t))\,P(E_i(t))-\int_{E_i(t)}\eta(t)\right),\\
      V_i(t)&=-\lambda\,|a_i(t)|\left(H_{E_i(t)}-\text{sign}(a_i(t))\,\eta(t)\right),\\
   \end{aligned}\right.
   \label{grad_flow}
\end{equation}
where $V_i(t)$ denotes the normal velocity of the boundary of $E_i$ at time $t$ and \begin{equation*}\eta(t)=-\frac{1}{\lambda}\Phi^*\left(\Phi u(t)-y\right),~u(t)=\sum\limits_{i=1}^{N^{[k+1]}} a_i(t)\,\mathbf{1}_{E_i(t)}\,.\end{equation*}
\revision{The study of this gradient flow (existence, uniqueness) is out of the scope of this paper.
% For our purpose, we only need to locally improve the objective.
% Proving that such a curve exists for a given initial set of parameters is a difficult question. Even if one could prove its existence, the result is likely to hold up to some time, past which singularities could appear.
% To this aim, we introduce a
% sliding step in our algorithm which consists in performing a local descent on

For our purpose, it is enough to introduce a sliding step which improves the objective by performing a local descent on
\begin{equation*}
(a,E)\mapsto T_{\lambda}\left(\sum\limits_{i=1}^{N^{[k+1]}}a_i\,\mathbf{1}_{E_i}\right)
\end{equation*} initialized with $(a^{[k+1]},E^{[k+1]})$, that is to find a set of parameters $(a_i,E_i)_{i=1}^{N^{[k+1]}}$ such that~$E_i$ is simple for all $i$ with
\begin{equation}
   T_{\lambda}\left(\sum\limits_{i=1}^{N^{[k+1]}}a_i\,\mathbf{1}_{E_i}\right)\leq T_{\lambda}\left(\sum\limits_{i=1}^{N^{[k+1]}}a_i^{[k+1]}\,\mathbf{1}_{E_i^{[k+1]}}\right).
   \label{obj_decrease}
\end{equation}
The resulting algorithm, which is Algorithm \ref{sfw_sliding}, is a valid application of Algorithm \ref{fw} to \eqref{prob}. Moreover, Line 14 ensures that all convergence guarantees derived for Algorithm \ref{sfw} remain valid.}

\begin{algorithm}
\KwData{measurement operator $\Phi$, observations $y$, regularization parameter $\lambda$}
\KwResult{function $u^*$}
$u^{[0]}\gets 0$\;
$N^{[0]}\gets 0$\;
\While{true}{
$\eta^{[k]}\gets -\frac{1}{\lambda}\Phi^*\left(\Phi u^{[k]}-y\right)$\;
$E_*\gets \underset{E \text{ simple}}{\text{Argmax}}~\frac{\left|\int_E\eta^{[k]}\right|}{P(E)} \text{ s.t. } 0<|E|<+\infty$\;
\eIf{$\left|\int_{E_*}\eta^{[k]}\right|\leq P(E_*)$}{
output $u^*\gets u^{[k]}$, which is optimal\;}
{$E^{[k+1]}\gets(E_1^{[k]},...,E_{N^{[k]}}^{[k]},E_*)$\;
$a^{[k+1]}\gets \underset{a\in\mathbb{R}^{N^{[k]}+1}}{\text{argmin}}
T_{\lambda}\left(\sum\limits_{i=1}^{N^{[k]}+1}a_i
\,\mathbf{1}_{E^{[k+1]}_i}\right)$\;
remove atoms with zero amplitude\;
$N^{[k+1]}\gets$ number of atoms in $E^{[k+1]}$\;
perform a local descent on
$(a,E)\mapsto T_{\lambda}\left(\sum\limits_{i=1}^{N^{[k+1]}}a_i\,\mathbf{1}_{E_i}\right)$ initialized with $(a^{[k+1]},E^{[k+1]})$\;
repeat the operations of Lines 10-12\;
$u^{[k+1]}\gets\sum\limits_{i=1}^{N^{[k+1]}}a^{[k+1]}_i\,\mathbf{1}_{E^{[k+1]}_i}$\;}
}
\caption{modified Frank-Wolfe algorithm applied to \eqref{prob} (with sliding)}
\label{sfw_sliding}
\end{algorithm}

The sliding step (Line 13 of Algorithm \ref{sfw_sliding}) was first introduced in
\cite{brediesInverseProblemsSpaces2013}. It allows in practice to considerably improve the convergence speed of the algorithm, and also produces sparser solutions: if the solution is expected to be a linear combination of a few indicator functions, removing the sliding step will typically produce iterates made of a much larger number of indicator functions, the majority of them correcting the crude approximations of the support of the solution made during the first iterations.

In \cite{denoyelleSlidingFrankWolfeAlgorithm2019}, the introduction of this step allowed the authors to derive improved convergence guarantees (i.e. finite time convergence) in the context of sparse spikes recovery. Their proof relies on the fact that at every iteration, a ``critical point'' of the objective can be reached at the end of the sliding step. In our case, the above mentioned existence issues make the adaptation of these results difficult. However, if the existence of a curve (formally) satisfying \eqref{grad_flow} could be guaranteed for all times, then one would expect it to converge when $t$ goes to infinity to a critical point of the mapping defined in \eqref{sliding_obj}, in the sense of the following definition.
\begin{definition}
  Let $N\in\mathbb{N}^*$, $a\in\RR^N$ and $E_1,...,E_N$ be subsets of $\RR^2$ such that $|E_i|<+\infty$, ${P(E_i)<+\infty}$ for all~${i\in\{1,...,N\}}$ and \eqref{tv_sum} holds. We say that $(a_i,E_i)_{i=1}^N$ is a critical point of the mapping
  \begin{equation*}(a,E)\mapsto T_{\lambda}\left(\sum\limits_{i=1}^{N}a_i \mathbf{1}_{E_i}\right)\end{equation*} if for all $i\in\{1,...,N\}$ we either have $a_i\neq 0$ and
  \begin{equation}\left\{
     \begin{aligned}
        P(E_i)&=\mathrm{sign}(a_i)\,\int_{E_i}\eta\,, \\
        H_{E_i}&=\mathrm{sign}(a_i)~\eta\,,
     \end{aligned}\right.
     \label{opti_cond_sliding}
  \end{equation}
 or $a_i=0$ and~${\left|\int_{E_i}\eta\right|\leq P(E_i)}$, where \begin{equation*}\eta\eqdef-\frac{1}{\lambda}\Phi^*\left(\Phi u-y\right)\,,~u\eqdef\sum\limits_{i=1}^N a_i\mathbf{1}_{E_i}\,.\end{equation*}
\label{critical_point}
\end{definition}
In \Cref{remark_haussdorff}, we discuss how assuming a critical point is indeed reached at the end of the sliding step for every iteration could be used to derive additional properties of sequences produced by Algorithm \ref{sfw_sliding}. We stress that if, for a given iteration, a critical point is reached at the end of the sliding step, then Line 14 can be skipped, since the first equality in \eqref{opti_cond_sliding} and the inequality given above in the case of a zero amplitude ensure $a^{[k+1]}$ is already optimal for the problem to be solved.

\begin{remark}
   \label{remark_haussdorff}
   As mentioned above, the introduction of the sliding step is supposed to allow the derivation of improved convergence properties. If its output is a critical point in the sense of \Cref{critical_point}, a first remark we can make is that for all $i\in\{1,...,N^{[k]}\}$ the set $E_i^{[k]}$ has distributional curvature
   $\text{sign}(a_i^{[k]})\,\eta^{[k]}$. This can be exploited to obtain ``uniform'' density estimates for the level sets of~$u^{[k]}$ in the spirit of \cite[Corollary 17.18]{maggiSetsFinitePerimeter2012}. One could then wonder whether this weak regularity of the level sets could be used to prove \begin{equation}
   \underset{n\to+\infty}{\limsup}~\partial U_n^{(t)}\subseteq\partial U_*^{(t)}\,,\label{lim_sup}
   \end{equation}
   where
   \begin{equation*}
   \underset{n\to+\infty}{\limsup}~\partial
   U_n^{(t)}\eqdef\big\{x\in\RR^2\,\big\rvert\,\underset{n\to+\infty}{\liminf}~\mathrm{dist}(x,\partial U_n^{(t)})=0\big\}\,.
   \end{equation*}
   This, combined with the result of \Cref{coroll_conv_level_set} and the fact $(\partial
   U_n^{(t)})_{n\geq 0}$ is uniformly bounded would mean that
   \begin{equation*}
      \underset{n\to +\infty}{\lim}~\partial U_n{(t)}=\partial U_*^{(t)}
   \end{equation*}
   in the Hausdorff sense (see \cite{rockafellarVariationalAnalysis1998} for
   more details).

   A major obstacle towards this result is that, although \Cref{lemma_bounds} provides a uniform upper bound on the perimeter of the atoms involved in the definition of the iterates, to
   our knowledge, it does not seem possible to derive such a bound for the perimeter of their level sets, which prevents one from using the potential weak-* convergence of $\Diff\mathbf{1}_{U_n^{(t)}}$ towards $\Diff\mathbf{1}_{U_*^{(t)}}$.
   \end{remark}

\section{Implementation}
\label{implementation}

The implementation\footnote{A complete implementation of Algorithm \ref{sfw_sliding} can be found online at \texttt{https://github.com/rpetit/tvsfw} (see also \texttt{https://github.com/rpetit/PyCheeger}).} of Algorithm \ref{sfw_sliding} requires two oracles to carry out the operations described on Lines 5 and 13 \revision{(recall Line 10 can always be reduced to a LASSO-type problem which can efficiently be solved by existing solvers)}: a first one that, given a weight function~$\eta$, returns a solution of
\eqref{cheeger},
and a second one that, given a collection of real numbers and simple sets, returns another
such collection with a lower objective value. Our approach for designing these oracles relies on polygonal approximations: we fix an integer~${n\geq 3}$ (that might be iteration-dependent), look for a maximizer of $\mathcal{J}$ defined by
\begin{equation*}\mathcal{J}(E)\eqdef\frac{1}{P(E)}\left|\int_{E}\eta\right|\end{equation*} among simple polygons with at most $n$ sides, and perform the sliding step by finding a collection of real numbers and simple polygons satisfying \eqref{obj_decrease}. This choice is mainly motivated by our goal to solve \eqref{prob} ``off-the-grid'', which naturally leads us to consider purely Lagrangian methods which do not rely on the introduction of a pre-defined discrete grid.

\subsection{Polygonal approximation of Cheeger sets}
\label{polygonal_approx_cheeger}
\revision{In the following, we fix an integer $n\geq 3$ and denote
\begin{equation*}
   \mathcal{X}_n=\left\{x\in\RR^{n\times 2}\,\big\rvert\,[x_1,x_2],...,[x_n,x_1]\text{ is simple}\right\}.
\end{equation*}
We recall that a polygonal curve is said to be simple if non-adjacent sides do not intersect. If $x\in\mathcal{X}_n$, then\footnote{If $i>n$ we define $x_i\eqdef x_{i\,\text{mod}\,n}$, i.e. $x_{n+1}=x_1$.}~${\cup_{i=1}^n[x_i,x_{i+1}]}$ is a Jordan curve. It hence divides the plane in two regions, one of which is bounded. We denote this region $E_x$ (it is hence a simple polygon). When $x$ spans $\mathcal{X}_n$, $E_x$ spans $\mathcal{P}_n$, the set of simple polygons
with at most $n$
sides. The sets we wish to approximate in this section (in order to carry out Line 5 in Algorithm \ref{sfw_sliding}) are the
maximizers of $\mathcal{J}$ over $\mathcal{P}_n$. We prove their existence in \Cref{ex_cheeger_polygon}.}

The approximation method presented thereafter consists of several steps. First, we solve a discrete version of \eqref{cheeger_relaxed}, where the minimization is performed over the set of piecewise constant functions on a fixed grid. Then, we extract a level set of the solution, and obtain a simple polygon whose edges are located on the edges of the grid. Finally, we use a first order method initialized with the previously obtained polygon to locally maximize $\mathcal{J}$.

\subsubsection{Fixed grid step}
Every solution of \eqref{cheeger_relaxed} has its support included in some ball (indeed if $u$ solves \eqref{cheeger_relaxed}, then there exists $\alpha$ such that~${\alpha\,\eta\in \partial J(u)}$, and the result follows
from \Cref{lvlset_sousdiff} and \Cref{lemma_bounds}). We can hence solve \eqref{cheeger_relaxed} in~${[-R,R]^2}$ \revision{(with Dirichlet boundary conditions)} for a sufficiently large $R>0$. We now
proceed as in \cite{carlierApproximationMaximalCheeger2009}. Let~$N$ be a positive integer and~${h\eqdef 2R/N}$. We denote~$E^h$ the set of $N$ by $N$ matrices. For every matrix~${u=(u_{i,j})_{(i,j)\in[1,N]^2}\in E^h}$ we define
\begin{equation}
   \partial_x^h u_{i,j}\eqdef
      u_{i+1,j}-u_{i,j}
   ~~~~~
   \partial_y^h u_{i,j}\eqdef
      u_{i,j+1}-u_{i,j}
      \label{discrete_grad}
\end{equation}
for all $(i,j)\in[0,N]^2$,
with the convention $u_{i,j}=0$ if either $i$ or $j$ is in $\{0,N+1\}$. We now define
\begin{equation*}
   \nabla^h u_{i,j}\eqdef\left(\partial_x^h u_{i,j},\partial_y^h u_{i,j}\right),
\end{equation*} and set
\begin{equation*}
   J^h(u)\eqdef h\sum\limits_{i=0}^N\sum\limits_{j=0}^N ||\nabla^h u_{i,j}||_2=h\,\|\nabla^h\,u\|_{2,1}\,.
\end{equation*}

We then solve the following discretized version of~\eqref{cheeger_relaxed} for increasingly small values of $h$
\begin{equation}
    \underset{u \in E^h}{\text{min}}~ h^2\,\langle \overline{\eta}^h,u\rangle ~\text{ s.t. }~ J^h(u)\leq 1\,,
  \label{cheeger_relaxed_discrete}
\end{equation}
where $\overline{\eta}^h=\left(\frac{1}{h^2}\int_{C^h_{i,j}}\eta\right)_{(i,j)\in[1,N]^2}$ and $(C^h_{i,j})_{(i,j)\in[ 1,N]^2}$ is a partition of $[-R,R]^2$ composed of squares of equal size, i.e.
\begin{equation*}
   C^h_{i,j}\eqdef[-R+(i-1)h,-R+ih]\times [-R+(j-1)h,-R+jh]\,.
\end{equation*}
For convenience reasons, we will also use the above expression to define $C^h_{i,j}$ if $i$ or $j$ belongs to $\{0,N+1\}$.

In practice we solve \eqref{cheeger_relaxed_discrete} using the primal-dual algorithm introduced in \cite{chambolleFirstOrderPrimalDualAlgorithm2011}: we take $(\tau,\sigma)$
such that $\tau\,\sigma\,\|D\|^2<1$ holds with $D\eqdef h\nabla^h$ and define
\begin{equation}
   \left\{
   \begin{aligned}
      \phi^{n+1}&=\text{prox}_{\sigma  \|\cdot\|_{2,\infty}}(\phi^n+\sigma\,D\bar{u}^n)\,, \\
      u^{n+1}&=(u^n-\tau\,D^*\phi^{n+1})-\tau\,h^2\,\bar{\eta}^h\,, \\
      \bar{u}^{n+1}&=2\,u^{n+1}-u^n\,,
   \end{aligned}\right.
\end{equation}
where $\text{prox}_{\sigma \|\cdot\|_{2,\infty}}$ is given by:
\begin{equation*}
   \text{prox}_{\sigma \|\cdot\|_{2,\infty}}(\phi)=\phi - \sigma\,\text{proj}_{\{\|\cdot\|_{2,1}\leq 1\}}\left(\frac{\phi}{\sigma}\right).
\end{equation*}
See \cite{condatFastProjectionSimplex2016a} for the computation of the
projection onto the $(2,1)$-unit ball.

The following proposition shows that, when the grid becomes finer,
solutions of \eqref{cheeger_relaxed_discrete} converge to a solution of \eqref{cheeger_relaxed}. Its proof is almost the same as the one of \cite[Theorem 4.1]{carlierApproximationMaximalCheeger2009}. Since the latter however gives a slightly different result about the minimization of a quadratic objective (linear in our case) on the total variation unit ball, we decided to include it in \Cref{proof_cheeger_grid} for the sake of completeness.
\begin{proposition}
   \label{cheeger_grid}
   Let $u^h$ be the piecewise constant function on~$(C^h_{i,j})_{(i,j)\in[ 1,N]^2}$, extended to $0$ outside $[-R,R]^2$ associated to a solution of \eqref{cheeger_relaxed_discrete}. Then there exists a (not relabeled) subsequence converging strongly in $\mathrm{L}^1(\RR^2)$ and weakly in $\LD$ to a solution
   $u^*$ of \eqref{cheeger_relaxed} when $h\to 0$, with moreover $\Diff u^h\overset{\ast}{\rightharpoonup}\Diff u$.
\end{proposition}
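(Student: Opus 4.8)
The plan is to adapt the $\Gamma$-convergence argument behind \cite[Theorem 4.1]{carlierApproximationMaximalCheeger2009}, splitting the proof into a compactness step, a lower bound (liminf) step, and an upper bound (recovery sequence) step, then combining them to identify the limit as a minimizer of \eqref{cheeger_relaxed}. Throughout I would work with the piecewise constant extension $u^h$ and the discrete functional $J^h$, and use the key comparison between $J^h$ and the continuous total variation of the piecewise constant interpolant.

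First, for the \textbf{compactness} step, I would note that the constraint $J^h(u^h)\leq 1$ controls the continuous total variation $|\Diff u^h|(\RR^2)$ up to a constant (the discrete gradient norm dominates, or is comparable to, the total variation of the piecewise constant function, with the usual care at cell interfaces), and that all $u^h$ are supported in the fixed box $[-R,R]^2$. Hence $(u^h)_h$ is bounded in $\mathrm{BV}(\RR^2)$ with uniformly bounded support, so by the standard $\mathrm{BV}$ compactness theorem there is a (not relabeled) subsequence converging strongly in $\mathrm{L}^1(\RR^2)$ and, since the sequence is also bounded in $\LD$ (again using the bounded support plus the perimeter bound via the isoperimetric inequality, as in \Cref{lemma_bounds}), weakly in $\LD$, to some $u^*$, with $\Diff u^h\overset{\ast}{\rightharpoonup}\Diff u^*$.

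Next, the \textbf{lower bound}: by lower semicontinuity of the total variation under $\mathrm{L}^1$ convergence, and the comparison $|\Diff u^*|(\RR^2)\leq \liminf_h |\Diff u^h|(\RR^2) \leq \liminf_h J^h(u^h) \leq 1$ (here one must check the discrete-to-continuous inequality $|\Diff (\text{p.c. interp. of }u^h)|(\RR^2)\le J^h(u^h)$, or at worst with a constant that tends to $1$), so $u^*$ is admissible for \eqref{cheeger_relaxed}. For the objective, strong $\mathrm{L}^1$ convergence of $u^h$ together with the fact that $\overline\eta^h$ is the cell-averaging of $\eta$ — which converges to $\eta$ in $\LD$, hence in particular the averaged pairing $h^2\langle\overline\eta^h,u^h\rangle = \int_{\RR^2}\eta\, u^h$ (by definition of the cell averages) — gives $\lim_h h^2\langle\overline\eta^h,u^h\rangle = \int_{\RR^2}\eta\, u^*$. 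Then for the \textbf{upper bound}, given any competitor $v\in\LD$ with $|\Diff v|(\RR^2)\le 1$ and support in $[-R,R]^2$ (a general competitor can first be truncated/rescaled at negligible cost, exactly as in \cite{carlierApproximationMaximalCheeger2009}), I would construct a recovery sequence $v^h\in E^h$ by mollifying $v$ at scale comparable to $h$ and sampling on the grid, arranging $J^h(v^h)\leq 1$ (possibly after dividing by a factor $\to 1$) and $h^2\langle\overline\eta^h,v^h\rangle\to\int\eta\,v$; by optimality of $u^h$ we get $h^2\langle\overline\eta^h,u^h\rangle\leq h^2\langle\overline\eta^h,v^h\rangle$, and passing to the limit yields $\int\eta\, u^*\le\int\eta\, v$. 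Combined with admissibility of $u^*$, this shows $u^*$ solves \eqref{cheeger_relaxed}.

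The main obstacle I expect is the careful bookkeeping in the discrete-continuous comparison of the total variation — matching the anisotropic-looking discrete gradient norm $J^h$ with the (isotropic) continuous total variation of the piecewise constant interpolant, both for the liminf inequality (to guarantee the limit lies in the unit ball, not merely in a ball of radius $1+o(1)$) and for the construction of the recovery sequence (to keep $J^h(v^h)\leq 1$ exactly, or to absorb a vanishing correction by rescaling). In the one-dimensional-slice / forward-difference setting used here this is classical but slightly delicate; the cleanest route is probably to compare $J^h$ with the total variation of a suitable piecewise affine ($Q^1$) interpolant rather than the piecewise constant one, then relate the two interpolants in $\mathrm{L}^1$, as is done in \cite{carlierApproximationMaximalCheeger2009}. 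Everything else — compactness, the boundedness in $\LD$, and the convergence of the linear objective — is routine given the fixed bounding box and the results already established in the paper.
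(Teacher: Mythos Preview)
Your overall strategy (compactness, liminf, recovery sequence) is correct and matches the paper's. The compactness step, the identity $h^2\langle\overline\eta^h,u^h\rangle=\int\eta\,u^h$, and the recovery sequence construction (sample a smooth competitor, rescale by $1/(1+\delta)$) are all essentially what the paper does.

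There is, however, a genuine error in your liminf step as written. For the piecewise constant interpolant on a square grid one has $|\Diff u^h|(\RR^2)=h\,\|\nabla^h u^h\|_{1,1}$, whereas $J^h(u^h)=h\,\|\nabla^h u^h\|_{2,1}$. Since $\|\cdot\|_2\le\|\cdot\|_1$ pointwise, the inequality goes the \emph{wrong} way: $J^h(u^h)\le |\Diff u^h|(\RR^2)\le\sqrt{2}\,J^h(u^h)$, and the constant $\sqrt{2}$ does \emph{not} tend to $1$. So the chain $|\Diff u^*|\le\liminf|\Diff u^h|\le\liminf J^h(u^h)\le 1$ fails at the middle step, and ``at worst with a constant that tends to $1$'' is false for the piecewise constant interpolant. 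You correctly flag this as the delicate point and propose the $Q^1$ interpolant; that route does work and is what \cite{carlierApproximationMaximalCheeger2009} uses.

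The paper takes a different and more direct route for the liminf: it tests $\Diff u$ against an arbitrary $\phi\in C_c^\infty(\RR^2,\RR^2)$ with $\|\phi\|_\infty\le 1$, uses $\Diff u^h\overset{\ast}{\rightharpoonup}\Diff u$, writes $\int\phi\cdot d\Diff u^h$ explicitly as a sum over cell edges, and shows that the $2$-norm of the edge-integral vector at each $(i,j)$ is bounded by $h\sqrt{1+C'h}$ (via a Taylor estimate on $\phi$), so that the whole sum is $\le\sqrt{1+C'h}\,J^h(u^h)$. Taking the supremum over $\phi$ and then $h\to 0$ gives $J(u)\le 1$ directly, with no auxiliary interpolant. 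This avoids the $Q^1$ machinery entirely; your approach would require introducing and controlling a second interpolant, which is more bookkeeping but equally valid.
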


Since we are interested in finding a simple set $E$ that approximately solves \eqref{cheeger}, and now have a good way of approximating solutions of \eqref{cheeger_relaxed}, we make use of the following result:
\begin{proposition}
   Let $u$ be a solution of \eqref{cheeger_relaxed}. Then the level sets of~$u$ are such that for all $t\in\RR^*$ with~${|U^{(t)}|>0}$, the set $U^{(t)}$ solves \eqref{cheeger}.
\end{proposition}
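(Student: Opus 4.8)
The plan is to show that any solution $u$ of \eqref{cheeger_relaxed} admits $-\eta/M$ as a subgradient of $J$, where $M$ denotes the optimal value of \eqref{cheeger}, and then to read off the optimality of the level sets of $u$ from \Cref{lvlset_sousdiff}. I first record two elementary facts about $M=\sup\{\,|\int_E\eta|/P(E)\;:\;0<|E|<+\infty,\ P(E)<+\infty\,\}$. By Cauchy--Schwarz and the isoperimetric inequality $P(E)\ge 2\sqrt{\pi}\,\sqrt{|E|}$ one gets $|\int_E\eta|/P(E)\le\|\eta\|_{L^2}/(2\sqrt{\pi})$, so $0\le M<+\infty$; and decomposing a set of finite perimeter into its $M$-connected components, and each of those into interiors of rectifiable Jordan curves as in \Cref{preli_sets}, and using $\frac{\sum_l a_l}{\sum_l b_l}\le\max_l\frac{a_l}{b_l}$ for $a_l\ge 0$, $b_l>0$, one sees that the supremum is unchanged if one restricts to \emph{simple} sets. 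Since a weakly continuous linear functional attains its infimum over the (weakly compact, convex) total variation unit ball at an extreme point, and since by Fleming's theorem \cite{flemingFunctionsGeneralizedGradient1957} these extreme points are exactly the $\pm\mathbf{1}_E/P(E)$ with $E$ simple and $0<|E|<+\infty$, the value of \eqref{cheeger_relaxed} equals $-M$. If $\eta=0$ then $M=0$ and the statement is trivial, so I assume from now on $\eta\neq 0$, hence $M>0$.

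To build the subgradient, set $G(v)\eqdef\int_{\RR^2}\eta\,v+M\,J(v)$ for $v\in\LD$. A scaling argument gives $G\ge 0$: if $0<J(v)<+\infty$, then $v/J(v)$ is feasible for \eqref{cheeger_relaxed}, so $\int_{\RR^2}\eta\,(v/J(v))\ge -M$, i.e. $G(v)\ge 0$; the cases $J(v)=0$ (then $v=0$ and $G(0)=0$) and $J(v)=+\infty$ are immediate. Conversely, since $u$ is a solution, $\int_{\RR^2}\eta\,u=-M<0$; hence $u\neq 0$, so $J(u)>0$, and in fact $J(u)=1$, since otherwise $u/J(u)$ would be feasible with strictly smaller objective. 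Therefore $G(u)=-M+M=0$, so $u$ minimizes the convex function $G$, i.e. $0\in\partial G(u)$; as $v\mapsto\int_{\RR^2}\eta\,v$ is continuous and linear, the sum rule yields $\partial G(u)=\eta+M\,\partial J(u)$, whence $-\eta/M\in\partial J(u)$.

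It remains to conclude. Since $J(u)=1<+\infty$ and $-\eta/M\in\partial J(u)$, \Cref{lvlset_sousdiff} gives $P(U^{(t)})=-\frac1M\int_{U^{(t)}}\eta$ for every $t>0$ and $P(U^{(t)})=\frac1M\int_{U^{(t)}}\eta$ for every $t<0$, so that $|\int_{U^{(t)}}\eta|=M\,P(U^{(t)})$ in both cases. Now fix $t\in\RR^*$ with $|U^{(t)}|>0$: then $|U^{(t)}|<+\infty$ (because $u\in\LD$), $P(U^{(t)})>0$ (a set of finite positive measure has nonzero perimeter), and $P(U^{(t)})<+\infty$ (the left-hand side above is finite). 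Thus $U^{(t)}$ is admissible for \eqref{cheeger} and attains the value $|\int_{U^{(t)}}\eta|/P(U^{(t)})=M$, which is optimal; hence $U^{(t)}$ solves \eqref{cheeger}.

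The only genuinely delicate point is getting the conclusion for \emph{every} $t\in\RR^*$ rather than for almost every $t$: a direct proof using the coarea formula together with the layer-cake decomposition of $\int_{\RR^2}\eta\,u$ yields $\int_{U^{(t)}}\eta=\pm M\,P(U^{(t)})$ only for almost every $t$, and upgrading this to all $t$ would require an additional monotone-approximation and lower-semicontinuity argument on the level sets; routing instead through the subgradient and \Cref{lvlset_sousdiff}, which is stated for every level, avoids this. The remaining ingredients --- the isoperimetric bound on $M$, the reduction to simple sets, the scaling argument, and the subdifferential sum rule --- are routine.
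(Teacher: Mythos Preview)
Your proof is correct and follows the same route as the paper, which simply states that the result is a direct consequence of \Cref{lvlset_sousdiff}; you have merely made explicit the step the paper leaves implicit, namely that optimality of $u$ in \eqref{cheeger_relaxed} forces $-\eta/M\in\partial J(u)$, after which \Cref{lvlset_sousdiff} gives the claim for every $t\in\RR^*$.
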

\begin{proof}
   This is a direct consequence of \Cref{lvlset_sousdiff}.
\end{proof}
If we have $v^h$ converging strongly in $\mathrm{L}^1(\RR^2)$ to a solution $v^*$ of \eqref{cheeger_relaxed}, then up to the extraction of a (not relabeled) subsequence, for almost every $t\in\RR$ we have that \begin{equation*}\underset{h\to 0}{\text{lim}}~\left|V_h^{(t)}\triangle V_*^{(t)}\right|=0\,.\end{equation*}
The above results hence show we can construct a sequence of sets
$(E_k)_{k\geq 0}$ such that $|E_k\triangle E_*|$ converges to~$0$, with $E_*$
a solution of \eqref{cheeger}. However, this convergence only implies that \begin{equation*}
\underset{k\to\infty}{\text{lim sup}}~\mathcal{J}(E_k)\leq \mathcal{J}(E_*)\,,
\end{equation*}
and given that $E_k$ is a union of squares this inequality is likely to be
strict, with the perimeter of~$E_k$ not converging to the perimeter
of $E_*$. From \Cref{approx_jaggi}, we know we have to design a numerical
method that allows to find a set at which the value of $\mathcal{J}$ is
arbitrarily close to $\mathcal{J}(E_*)$. This hence motivates the
introduction of the refinement step described in the next subsection.

\revision{As a final remark, we note that, even for $k$ large enough, $E_k$ could be non-simple. However, \revisionbis{using the notations of \Cref{preli_sets}}, since for every set of finite perimeter $E$,
$\mathcal{J}(E)$ is a convex combination of the \begin{equation*}
\left(\mathcal{J}(\mathrm{int}(\gamma_i^+))\right)_{i\in I},~\left(\mathcal{J}(\mathrm{int}(\gamma_{i,j}^-))\right)_{i\in I,j\in J_i}\,,
\end{equation*}
there is a simple set $F$ in the decomposition of
$E$ which is such that $\mathcal{J}(F)\geq \mathcal{J}(E)$. In practice, such a set can be found by extracting all the contours of the binary image~$\mathbf{1}_{E}$, and finding the one with highest objective value. This procedure guarantees that the output of the fixed grid step is a simple polygon. We stress that in all our experiments, $v^h$ is close to being (proportional to) the indicator of a simple set for~$h$ large enough, so that its non-trivial level sets are all simple.}

\subsubsection{Refinement step}
\label{refinement_cheeger}

We use a shape gradient algorithm (see \cite{allaireChapterShapeTopology2021}) to refine the output of the fixed grid step. It
consists in iteratively constructing a sequence of simple polygons by finding at each step a displacement of steepest ascent for $\mathcal{J}$, along which the vertices of the previous polygon are moved. Given \revision{$x^t\in\mathcal{X}_n$} and a step size~$\alpha^t$, we define the next iterate by:
\begin{equation}
   \begin{aligned}
      x_j^{t+1}&\eqdef x_j^t + \alpha^t\,\theta_j^t\,,\\
      \theta_j^t &\eqdef \frac{1}{P(E_{x^t})}
      \left(\theta_{\text{area},j}^t
      -\frac{\int_{E_{x^t}}\eta}{P(E_{x^t})}\,\theta_{\text{per},j}^t\right)\,, \\
      \theta_{\text{area},j}^t&\eqdef w_{j}^{t-}\nu_{j-1}^t+w_{j}^{t+}\nu_{j}^t \,,\\
      \theta_{\text{per},j}^t&\eqdef-(\tau_{j}^t-\tau_{j-1}^t)\,,
   \end{aligned}
   \label{cheeger_update}
\end{equation}
where, for all $j$, $\tau_j^t$ and $\nu_j^t$ are respectively the unit tangent and outer normal vectors on $[x_j^t,x_{j+1}^t]$ and
\begin{equation*}
\begin{aligned}
w_{j}^{t+}&\eqdef\int_{[x_j^t,x_{j+1}^t]}\eta(x)~
\frac{||x-x_{j+1}^t||}{||x_{j}^t-x_{j+1}^t||}\,d\mathcal{H}^1(x)\,, \\
w_{j}^{t-}&\eqdef\int_{[x_{j}^t,x_{j-1}^t]}\eta(x)~
\frac{||x-
x_{j-1}^t||}{||x_{j}^t-x_{j-1}^t||}\,d\mathcal{H}^1(x)\,.
   \end{aligned}
\end{equation*}
One can actually show that the displacement $\theta^t$ we apply to the vertices of $E_{x^t}$ is such that
\begin{equation}
   \theta^t=\underset{||\theta||\leq 1}{\text{Argmax}}~~\underset{\alpha\to 0^+}{\text{lim}}~\frac{\mathcal{J}(E_{x^t+\alpha\theta})-\mathcal{J}(E_{x^t})}{\alpha}\,,
   \label{steepest_descent}
\end{equation}
i.e. that it is the displacement of steepest ascent for $\mathcal{J}$ at $E_{x^t}$. We provide a proof of this result in \Cref{shape_grad}.

To compute the integral of $\eta$ over $E_{x^t}$, we integrate~$\eta$ on each triangle of a sufficiently fine triangulation of~$E_{x^t}$ (this triangulation must be updated at each iteration, and sometimes re-computed from scratch to avoid the presence of ill-shaped triangles). The integral of $\eta$ on a triangle and $w_{j}^{t+}$, $w_{j}^{t-}$ are computed using standard numerical integration schemes for triangles and line segments. \revision{If $|\mathcal{T}|$ denotes the number of triangles in the triangulation of $E_{x^t}$, $|\mathcal{S}_T|$ (resp. $|\mathcal{S}_L|$) the number of points used in the numerical integration scheme for triangles (resp. line segments), the complexity of each iteration is of order $\mathcal{O}\left(m\left(|\mathcal{T}|\,|\mathcal{S}_T|+n\,|\mathcal{S}_L|\right)\right)$.}

\paragraph{Comments.} Two potential concerns about the above procedure are whether the iterates remain simple polygons \revision{(i.e. $x^t\in\mathcal{X}_n$ for all $t$)} and whether they converge to a global maximizer of $\mathcal{J}$ \revision{over $\mathcal{P}_n$}. We could not prove that the iterates remain simple polygons along the process, but since the initial polygon can be taken
arbitrarily close to a simple set solving \eqref{cheeger} (in terms of the Lebesgue measure of the symmetric difference), we do not expect nor observe in practice any change of topology during the optimization. Moreover, even if $\mathcal{J}$ could have non-optimal critical points\footnote{Here critical point is to be understood in the sense that the limit appearing in \eqref{steepest_descent} is equal to zero for every $\theta$.}, the above initialization allows us to start our local descent with a polygon that hopefully lies in the basin of attraction of a global maximizer. Additionally, we stress again that to carry out Line 5 of Algorithm \ref{sfw_sliding}, thanks to \Cref{approx_jaggi}, we only need to find a set with near optimal value in \eqref{cheeger}.

An interesting problem is to quantify the distance (e.g. in the Hausdorff sense) of a maximizer of $\mathcal{J}$ \revision{over~$\mathcal{P}_n$} to a maximizer of $\mathcal{J}$. We discuss in Section \ref{toy} the simpler case of radial measurements. In the general case, if the sequence of polygons defined above converges to a simple polygon $E_x$, then $E_x$ is such that
\begin{equation}
   w_{j}^+=w_{j}^-=\frac{\int_{E_x} \eta}{P(E_x)}~\text{tan}\left(\frac{\theta_j}{2}\right)
	\label{polygon_opt_cheeger}
\end{equation}
for all $j$, where $\theta_j$ is the $j$-th exterior
angle of the polygon (the angle between $x_{j}-x_{j-1}$ and $x_{j+1}-x_j$). This can be seen as a discrete version of the following first order optimality condition for solutions of~\eqref{cheeger}:
\begin{equation}
\eta=\frac{\int_{E}\eta}{P(E)}~H_E \text{ on } \partial^*E\,.
\label{opt_cheeger}
\end{equation}
Note that \eqref{opt_cheeger} is similar to the optimality condition for the classical Cheeger problem (i.e.\ with~${\eta=1}$ and the additional constraint $E \subseteq \Omega$), namely $H_E= P(E)/\abs{E}$ in the free boundary of $E$ (see \cite{alterEvolutionCharacteristicFunctions2005} or \cite[Prop. 2.4]{pariniIntroductionCheegerProblem2011}).

\subsection{Sliding step}
The implementation of the sliding step (Line 13 in Algorithm~\ref{sfw_sliding}) is similar to what is described above for refining crude approximations of Cheeger sets. We use a first order optimization method on the mapping
\begin{equation}
   (a,x)\mapsto T_{\lambda}\left(\sum\limits_{i=1}^N a_i\,\mathbf{1}_{E_{x_i}}\right).
   \label{discrete_sliding_obj}
\end{equation}
Given a step size $\alpha^t$, a vector $a^t\in\mathbb{R}^{N}$ and \revision{$x_1^t,...,x_N^t$ in~$\mathcal{X}_n$}, we set~$u^t\eqdef\sum_{i=1}^N a^t_i\,\mathbf{1}_{E_{x_i^t}}$ and perform the following update:
\begin{equation*}
 \begin{aligned}
a_i^{t+1}&\eqdef a^t_i-
\alpha^t\,h_i^t\,,\\
h_i^t&\eqdef \left\langle\Phi\mathbf{1}_{E_{x_i^t}},\Phi u^t -
y\right\rangle+\lambda\,P\left(E_{x_i^t}\right)~\text{sign}\left(a_i^t\right
),\\
x_{i,j}^{t+1}&\eqdef x_{i,j}^t-
\alpha^t\,\theta^t_{i,j}\,, \\
\theta^t_{i,j}&\eqdef a_i^t\left[
\theta^t_{\text{data},i,j} -
\lambda\,\text{sign}(a_i^t)\left(\tau_{i,j}^t-\tau_{i,j-1}^t\right)\right],\\
\theta^t_{\text{data},i,j}&\eqdef \langle\Phi u^t-y,\,{w_{i,j}^{t-}}\rangle \,\nu_{i,j-1}^t+\langle\Phi u^t-y,\, w_{i,j}^{t+}\rangle\,\nu_{i,j}^t\,,
\end{aligned}
\end{equation*}
where $\tau_{i,j}^t$, $\nu_{i,j}^t$ are respectively the unit tangent and outer normal vectors on the edge $[x_{i,j}^t,x_{i,j+1}^t]$ and
\begin{equation*}
\begin{aligned}
w_{i,j}^{t+}&\eqdef\int_{[x_{i,j}^t,x_{i,j+1}^t]}\varphi(x)~
\frac{||x-x_{i,j+1}^t||}{||x_{i,j}^t-x_{i,j+1}^t||}\,d\mathcal{H}^1(x)\,, \\
w_{i,j}^{t-}&\eqdef\int_{[x_{i,j}^t,x_{i,j-1}^t]}\varphi(x)~
\frac{||x-
x_{i,j-1}^t||}{||x_{i,j}^t-x_{i,j-1}^t||}\,d\mathcal{H}^1(x)\,.
   \end{aligned}
\end{equation*}

\revision{Using the notations of \Cref{refinement_cheeger}, the complexity of each iteration is of order $\mathcal{O}\left(N\,m\left(|\mathcal{T}|\,|\mathcal{S}_T|+n\,|\mathcal{S}_L|\right)\right)$.}

\paragraph{Comments.}
\revision{We first stress that the above update is similar to the evolution formally described in \eqref{grad_flow}.} Now, unlike the local optimization we perform to approximate Cheeger sets, the sliding step may tend to induce topology changes (see \Cref{topo_change} for an
example). This is of course linked to the possible appearance of singularities  mentioned in \Cref{pres_sliding}. Typically, a simple set may tend to split in two simple sets over the
course of the descent. This is a major difference (and challenge) compared to
the sliding steps used in sparse spikes recovery (where the optimization is
carried out over the space of Radon measures) \cite{brediesInverseProblemsSpaces2013,boydAlternatingDescentConditional2017,denoyelleSlidingFrankWolfeAlgorithm2019}.
This phenomenon is closely linked to topological properties of the faces of the total (gradient) variation unit ball: its extreme points do not
form a closed set for any reasonable topology (e.g. the weak $\LD$ topology),
nor do its faces of dimension $d\leq k$ for any~${k \in  \NN}$.
As a result, when moving continuously on the set of faces of dimension $d=k$, it is possible to ``stumble upon'' a point which only belongs to a  face of dimension~$d>k$.

Our current implementation does not allow to handle these topology changes in a consistent way, and finding a way to deal with them ``off-the-grid'' is an interesting avenue for future research. It is important to note that not allowing topological changes during the sliding step is not an issue, since all convergence guarantees hold as soon as the output of the sliding step decreases the energy more than the standard update. One can hence stop the local descent at any point before any change of topology occurs, which avoids having to treat them. Still, in order to yield iterates that are as sparse as possible (and probably to decrease the objective as quickly as possible), it seems preferable to allow topological changes.

\section{Numerical experiments}

\subsection{Recovery examples}

Here, we investigate the practical performance of Algorithm \ref{sfw_sliding}. We focus on the case where $\Phi$ is a sampled Gaussian convolution operator, i.e.
\begin{equation*}
   \forall x\in\mathbb{R}^2,~\varphi(x)= \left(\text{exp}\left(-\frac{||x-x_i||^2}{2\sigma^2}\right)\right)_{i=1}^m
\end{equation*}
for a given $\sigma> 0$ and a sampling grid $(x_i)_{i=1}^m$. The noise is drawn from a multivariate Gaussian with zero mean and isotropic covariance matrix $\tau^2\,I_m$. We take $\lambda$ of the order of $\sqrt{2\,\text{log}(m)\,\tau^2}$.

Numerically certifying that a given function is an approximate solution of~\eqref{prob} is difficult. However, as the sampling grid becomes finer, $\Phi$ tends to the convolution with the Gaussian kernel, which is injective. Relying on a $\Gamma$-convergence argument, one may expect that if $u_0$ is a piecewise constant image and~$w$ is some small additive noise, the solutions of \eqref{prob} with~${y=\Phi u_0+w}$ are all close to $u_0$, modulo the regularization effects of the total variation.

We also assess the performance of our algorithm by comparing its output to that of a primal dual algorithm minimizing a discretized version of~\eqref{prob} on a pixel grid, \revisionbis{where the total variation term is replaced by the discrete isotropic total variation or Condat's discrete total variation}\footnote{\revisionbis{Condat's total variation is introduced in \cite{condatDiscreteTotalVariation2017}. See also \cite{chambolleChapterApproximatingTotal2021} for a review of discretizations of the total variation.}}. \revision{To minimize discretization artifacts, we artificially introduce a downsampling in the forward operator, so that the reconstruction is performed on a grid four times larger than the sampling one.}

Our first experiment consists in recovering a function $u_0$ that is a linear combination of three indicator functions (see \Cref{exp1_res,exp1_res_unfolding}). During each of the three iterations required to obtain a good approximation of~$u_0$, a new atom is added to its support. One can see the sliding step is crucial: the large atom on the left, added during the second iteration, is significantly refined during the sliding step of the third iteration, when enough atoms have been introduced.

\begin{figure*}
	\centering
	\includegraphics[width=0.99\textwidth]{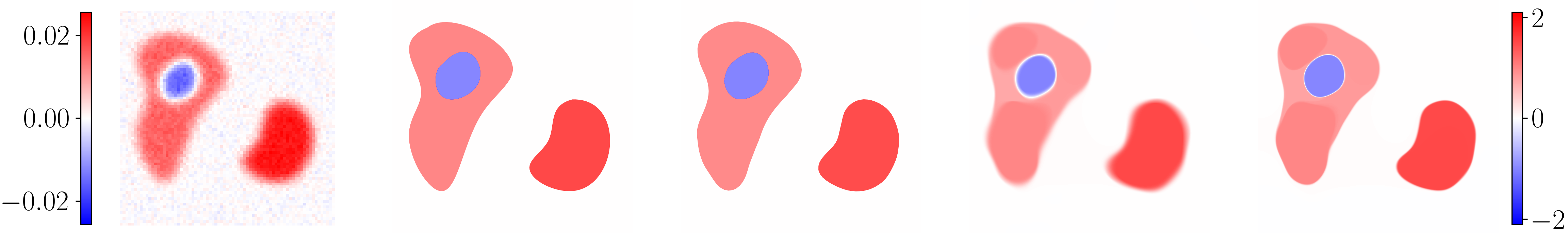}
	\caption{From left to right: observations, unknown function, output of Algorithm \ref{sfw_sliding}, outputs of the fixed grid method using the isotropic and Condat's total variation}
	\label{exp1_res}
\end{figure*}

\begin{figure*}
	\centering
	\includegraphics[width=0.99\textwidth]{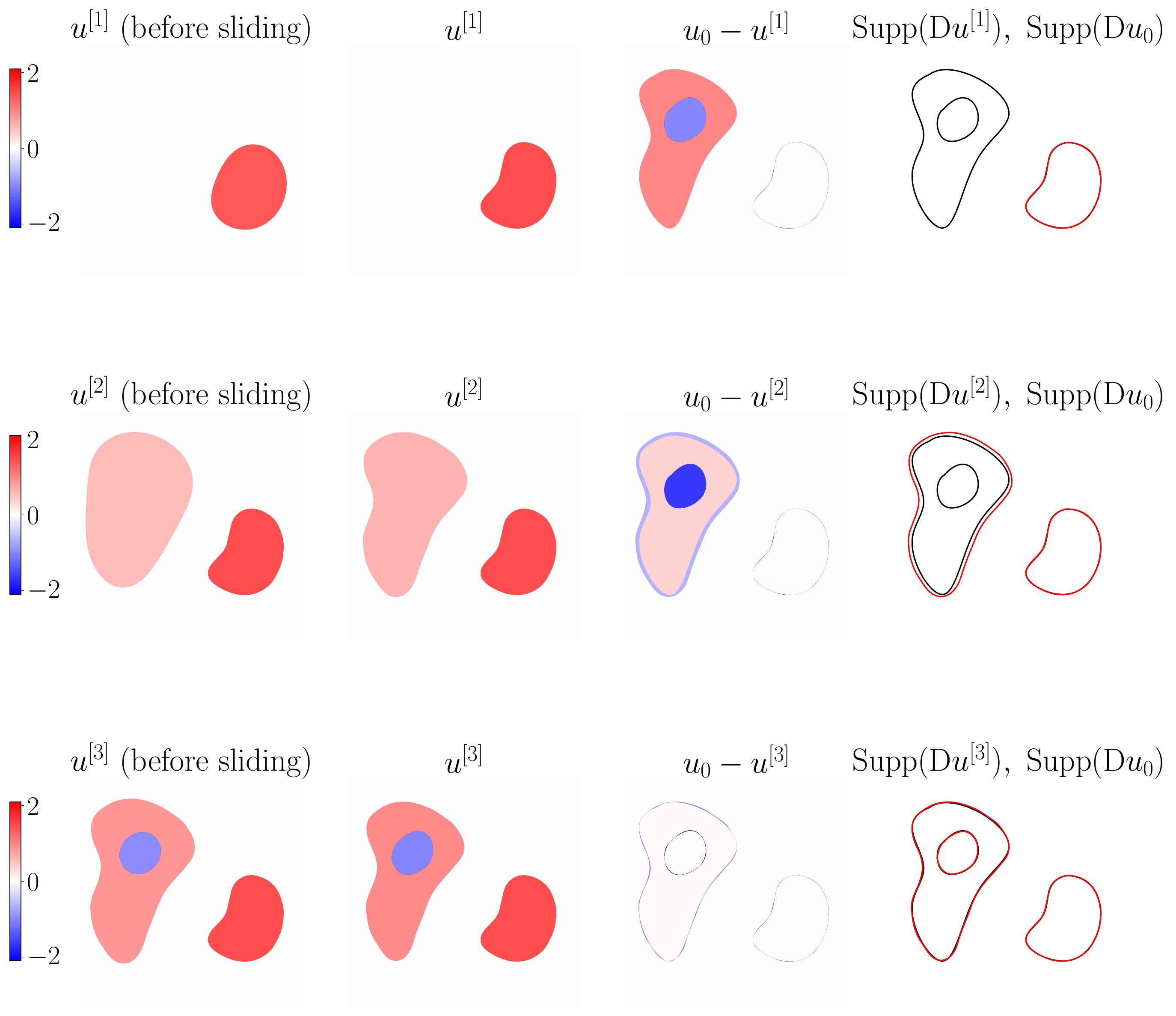}
	\caption{Unfolding of Algorithm \ref{sfw_sliding} for the first experiment ($u^{[k]}$ denotes the $k$-th iterate)}
	\label{exp1_res_unfolding}
\end{figure*}

The second experiment (see Figure \ref{pacman}) consists in recovering the indicator function of a set with a hole (which can also be seen as the sum of two indicator functions of simple sets). The support of $u_0$ and its gradient are accurately estimated. Still, the typical effects of total (gradient) variation regularization are noticeable: corners are slightly rounded, and there is a ``loss of contrast'' in the eye of the pacman.

\begin{figure*}
	\centering
   \newcommand\x{0.165}
	\subfloat{{\includegraphics[height=\x\textwidth]{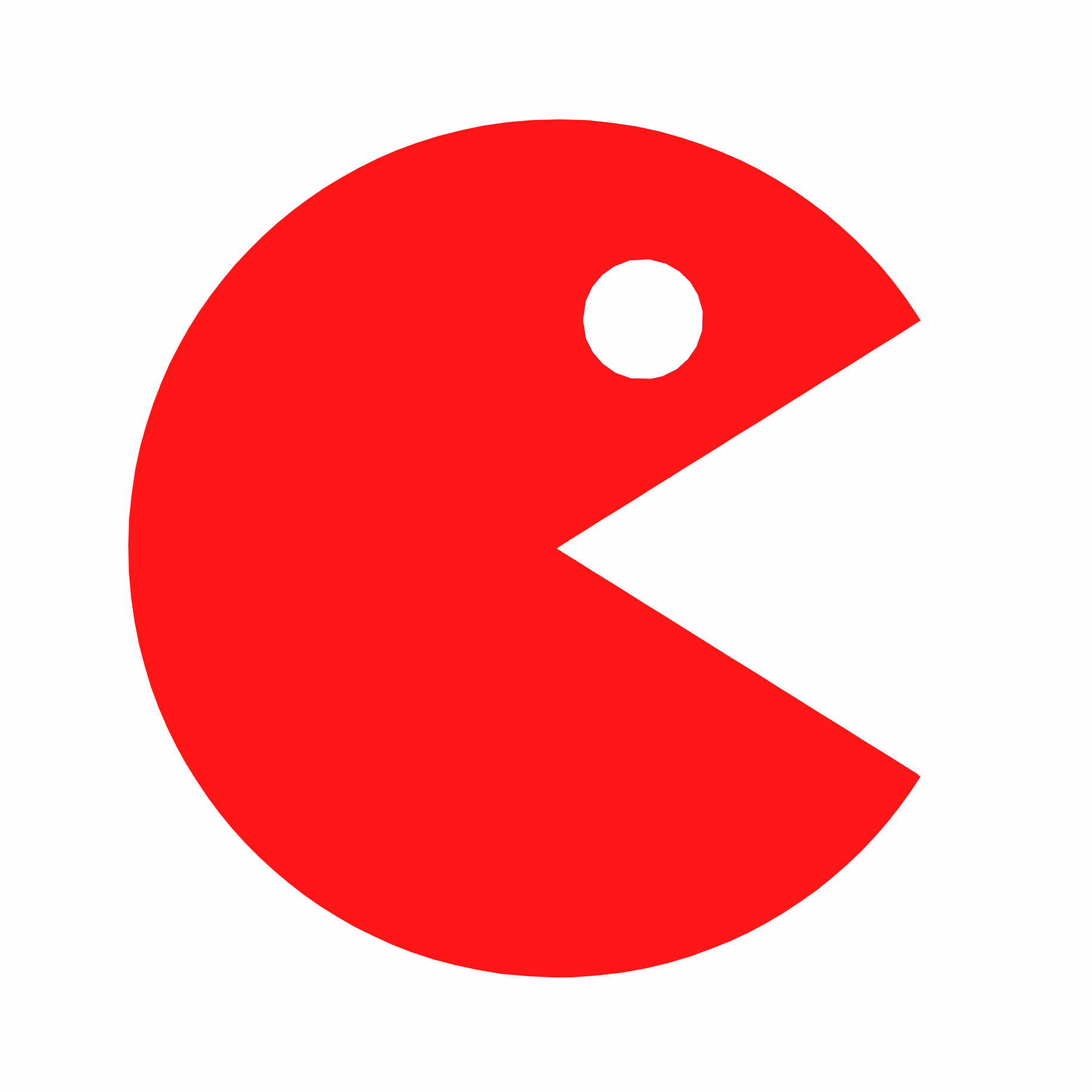}}}
	\subfloat{{\includegraphics[height=\x\textwidth]{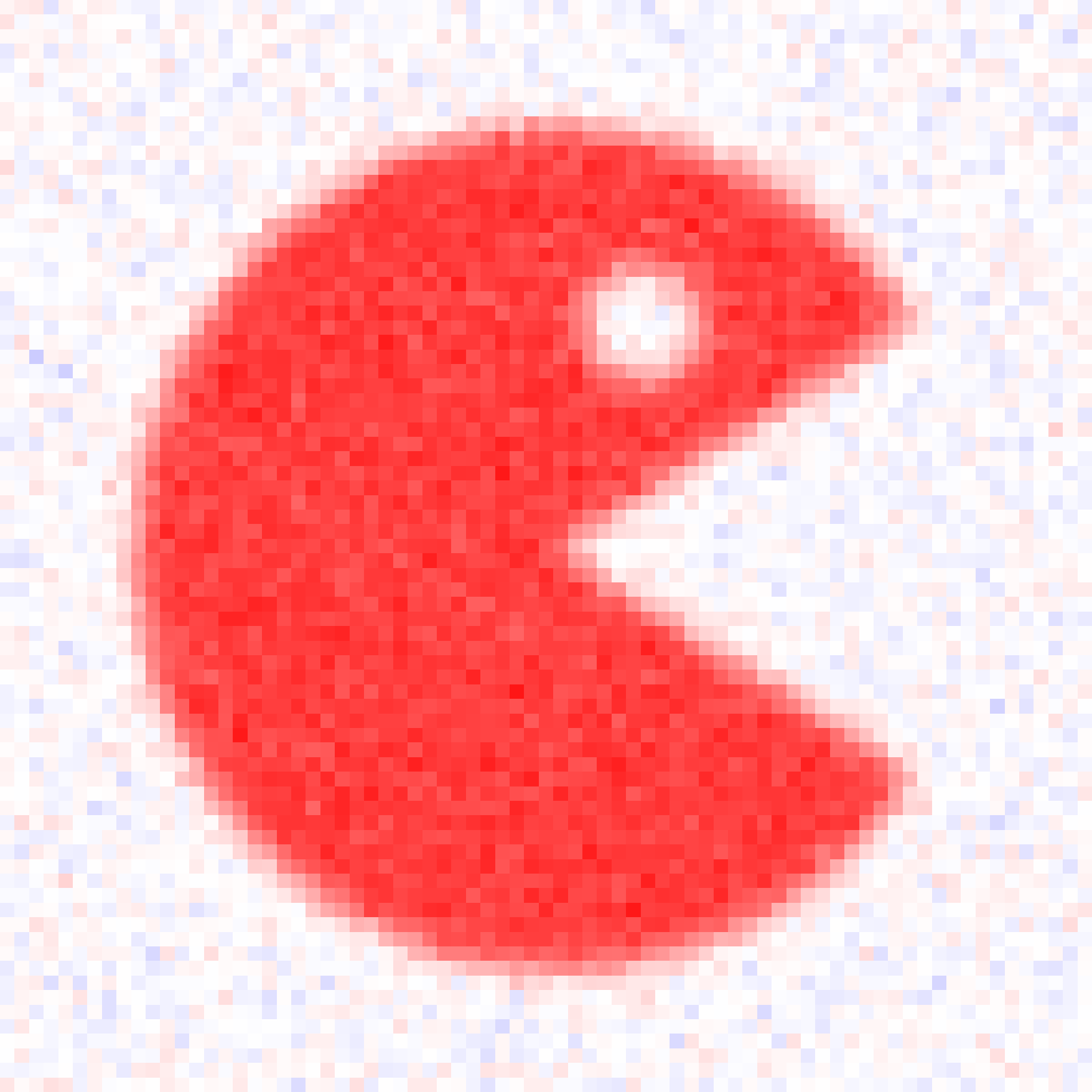}}}
	\subfloat{{\includegraphics[height=\x\textwidth]{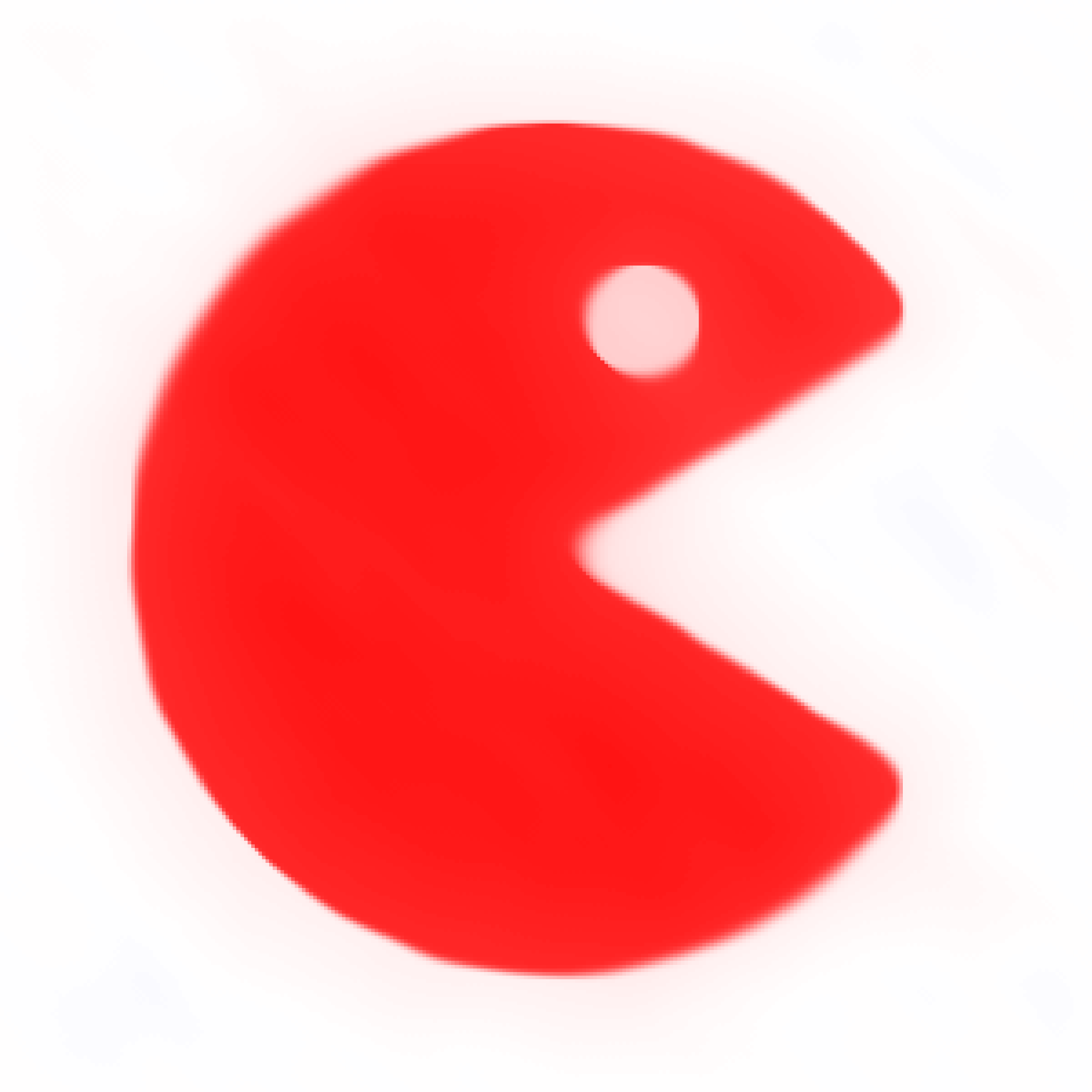}}}
   \subfloat{{\includegraphics[height=\x\textwidth]{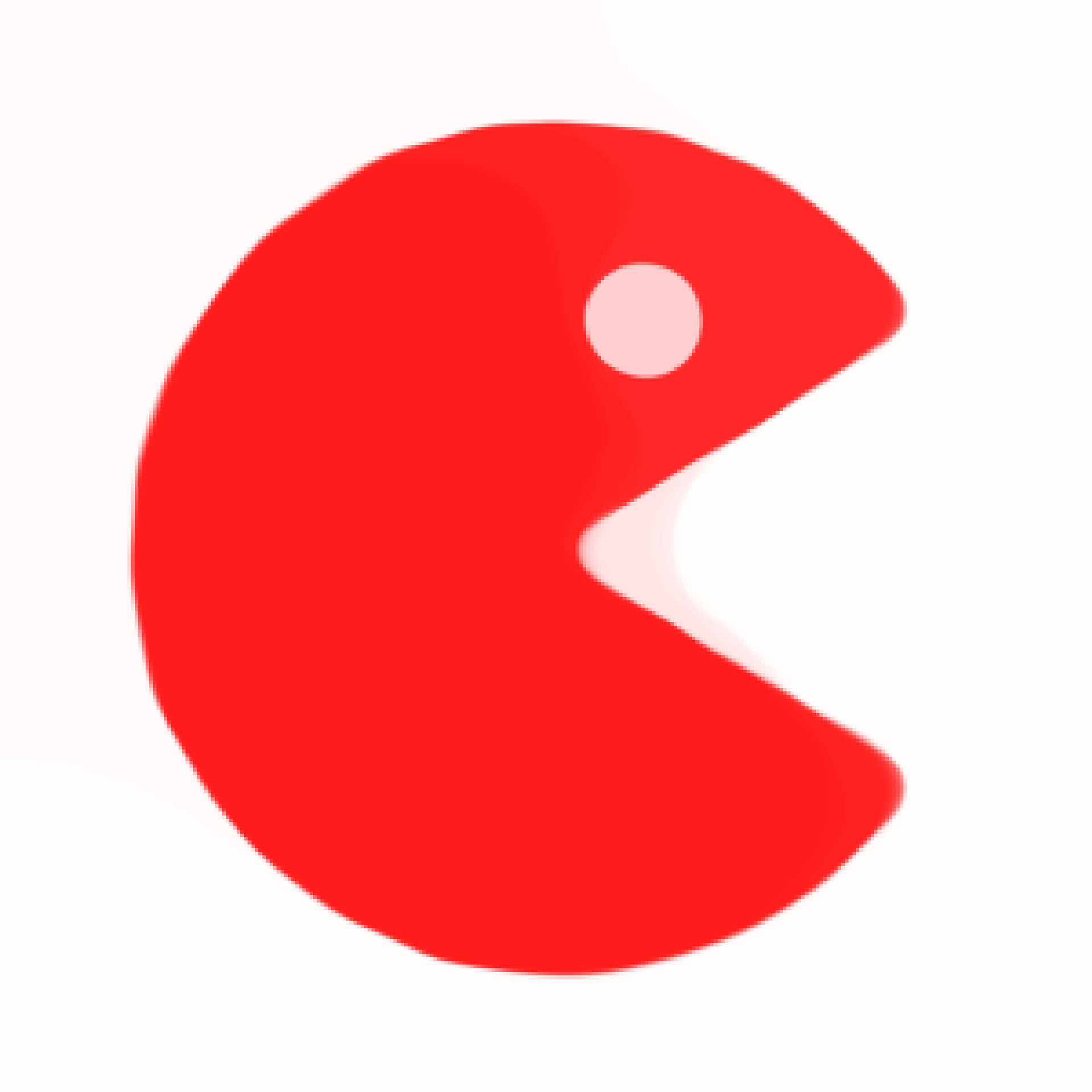}}}
	\subfloat{{\includegraphics[height=\x\textwidth]{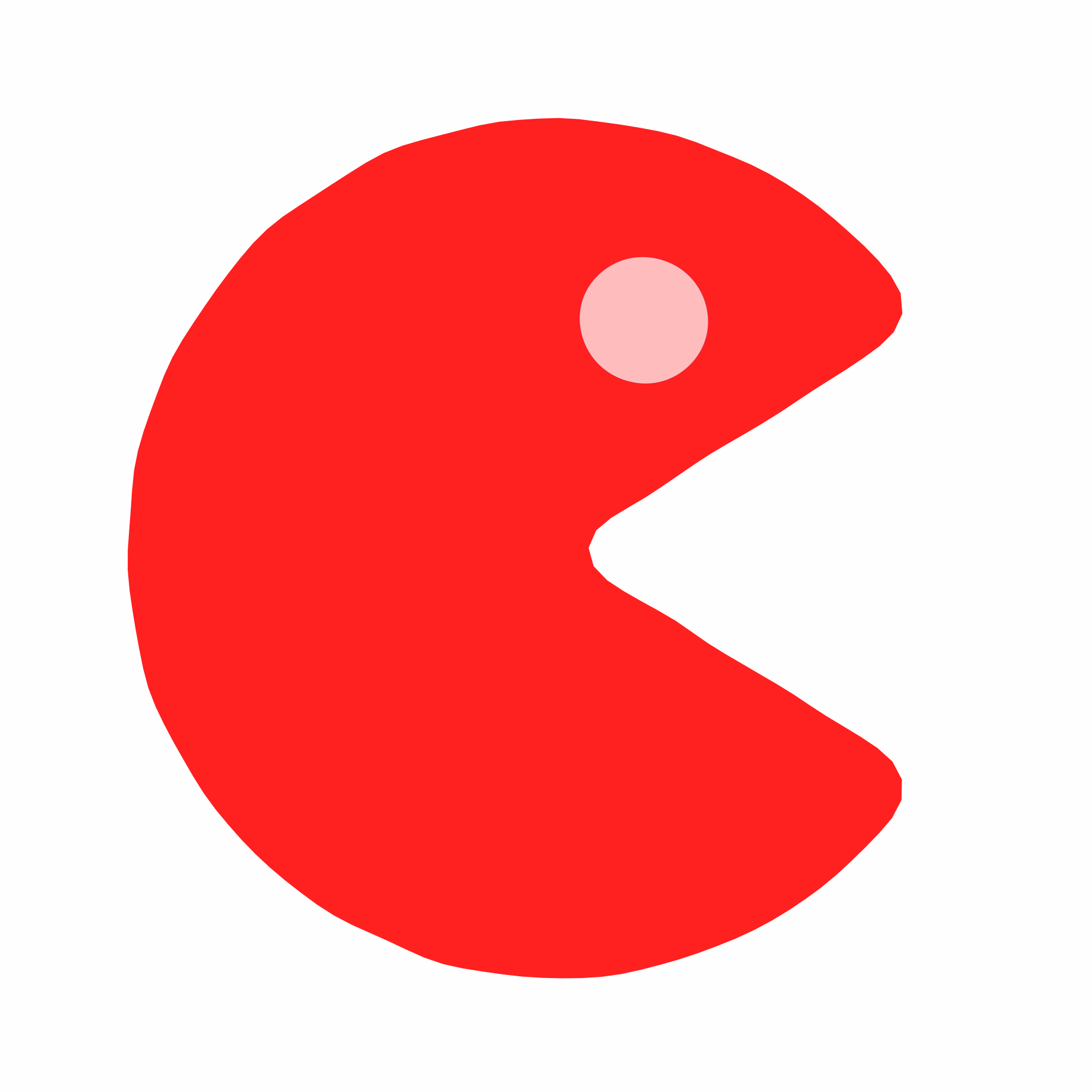}}}
	\subfloat{{\includegraphics[height=\x\textwidth]{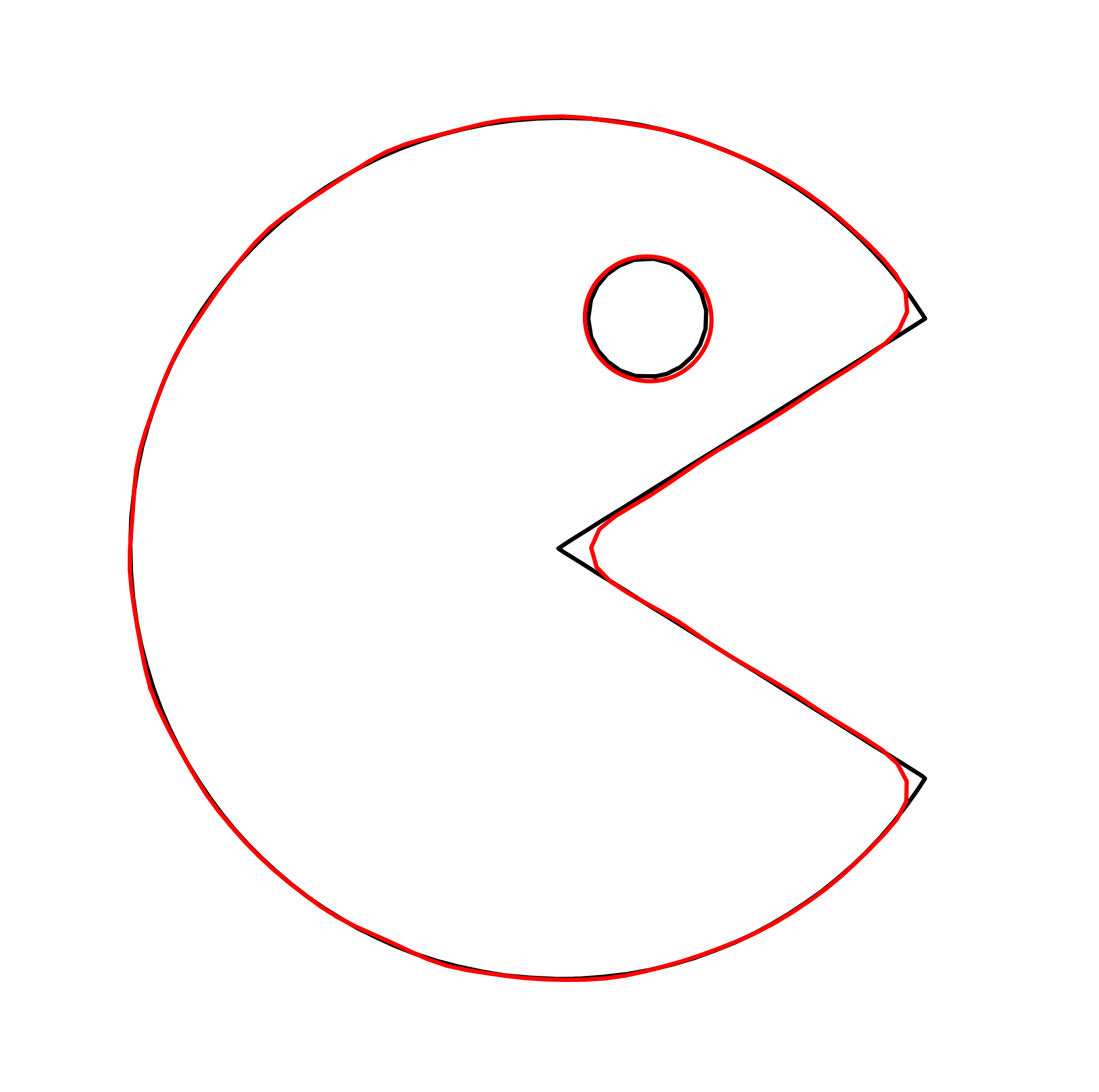}}}
	\caption{From left to right: unknown function, observations, outputs of the fixed grid method using the isotropic and Condat's total variation, output of Algorithm \ref{sfw_sliding}, gradients support (red: output of Algorithm \ref{sfw_sliding}, black: unknown)}
	\label{pacman}
\end{figure*}

The third experiment (Fig.~\ref{cross}) also showcases the rounding of corners, and highlights the influence of the regularization parameter: as $\lambda$ decreases, the curvature of the edge set increases.

\begin{figure*}
	\centering
	\includegraphics[width=.95\textwidth]{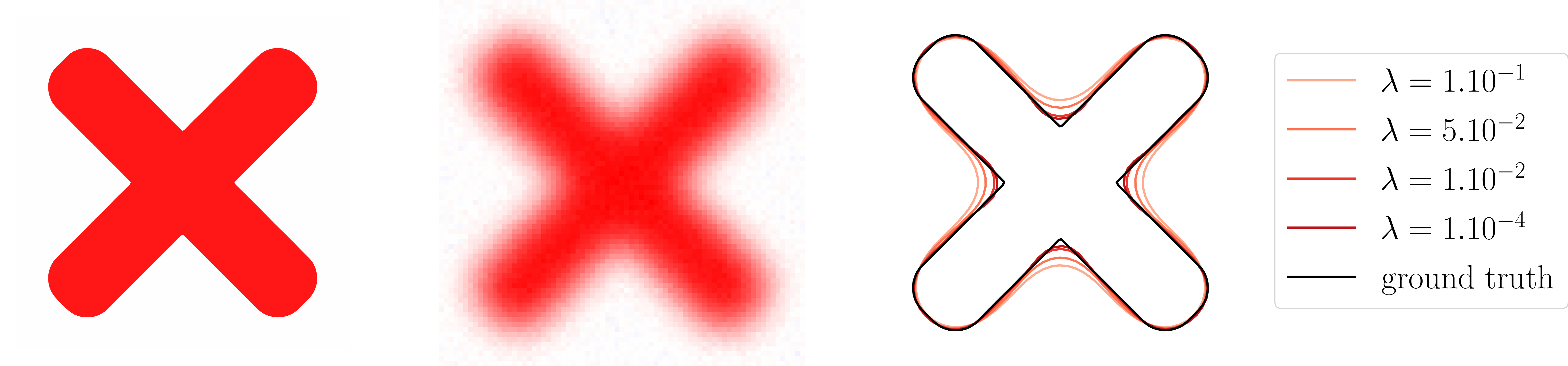}
	\caption{Left: unknown function, middle: observations, right: output of Algorithm \ref{sfw_sliding} for different values of $\lambda$}
	\label{cross}
\end{figure*}

\revision{Finally, we provide in Fig. \ref{cameraman} the results of an experiment on a more challenging task, which consists in reconstructing a natural grayscale image.}

\revisionbis{}

\revision{\paragraph{Choice of parameters.} The number of observations in the first experiment is $60\times 60$, $75\times 75$ in the second and third ones, and $64\times 64$ in the last one. In all experiments, we solved \eqref{cheeger_relaxed_discrete} on a grid of size $80\times 80$. In both local descent steps (for approximating Cheeger sets and for the sliding step), the  simple polygons have a number of vertices of order $30$ times the length of their boundary ($100$ for the last experiment), and the maximum area of triangles in their inner mesh is $10^{-2}$ (the domain being a square of side $1$). \revisionbis{The inner triangulation of a simple polygon is obtained by using Richard Shewchuk's Triangle library.} The boundary of the polygons are resampled every $30$ iterations. Line
integrals are computed using the Gauss-Patterson scheme of order $3$ ($15$ points) and triangle integrals using the Hammer-Marlowe-Stroud scheme of order $5$ ($7$ points).}

\begin{figure*}
	\centering
	 \includegraphics[width= 0.99\textwidth]{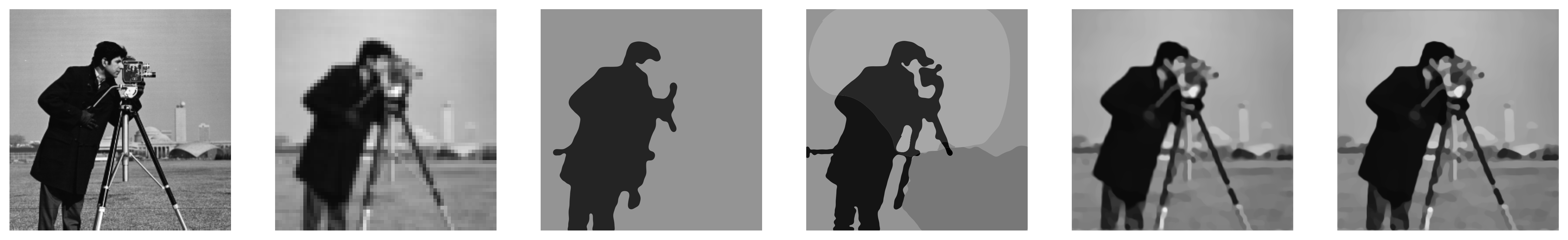}
	\caption{From left to right: original image, observations, iterates $u^{[k]}$ ($k=1,4$) produced by Algorithm \ref{sfw_sliding}, outputs of the fixed grid method using the isotropic and Condat's total variation}
	\label{cameraman}
\end{figure*}

\subsection{Topology changes during the sliding step}
\label{topo_change}

Here, we illustrate the changes of topology that may occur during the sliding step (Line 13 of Algorithm \ref{sfw_sliding}). All relevant plots are given in Figure \ref{splitting}. The unknown function (see (a)) is the sum of two indicator functions: \begin{equation*}u_0=\mathbf{1}_{B((-1, 0), 0.6)}+\mathbf{1}_{B((1, 0), 0.6)}\,,\end{equation*}
and observations are shown in (b). The Cheeger set computed at Line 5 of the first iteration covers the two disks (see (c)).

In this setting, our implementation of the sliding step converges to a function similar to (f)\footnote{This only occurs when $\lambda$ is small enough. For
higher values of $\lambda$, the output is similar to (d) or (e).}, and we obtain a valid update that decreases the objective more than the standard Frank-Wolfe update. The next iteration of the algorithm will then consist in adding a new atom to the approximation, with negative amplitude, so as to compensate for the presence of the small bottleneck.

However, it seems natural that the support of (f) should split into two disjoint simple sets, which is not possible with our current implementation. To investigate what would happen in this case, we manually split the two
sets (see (g)) and let them evolve independently. The support of the approximation converges to the union of the two disks, which produces an update that decreases the objective even more than (f).

\begin{figure*}
	\centering
	\includegraphics[width=.99\textwidth]{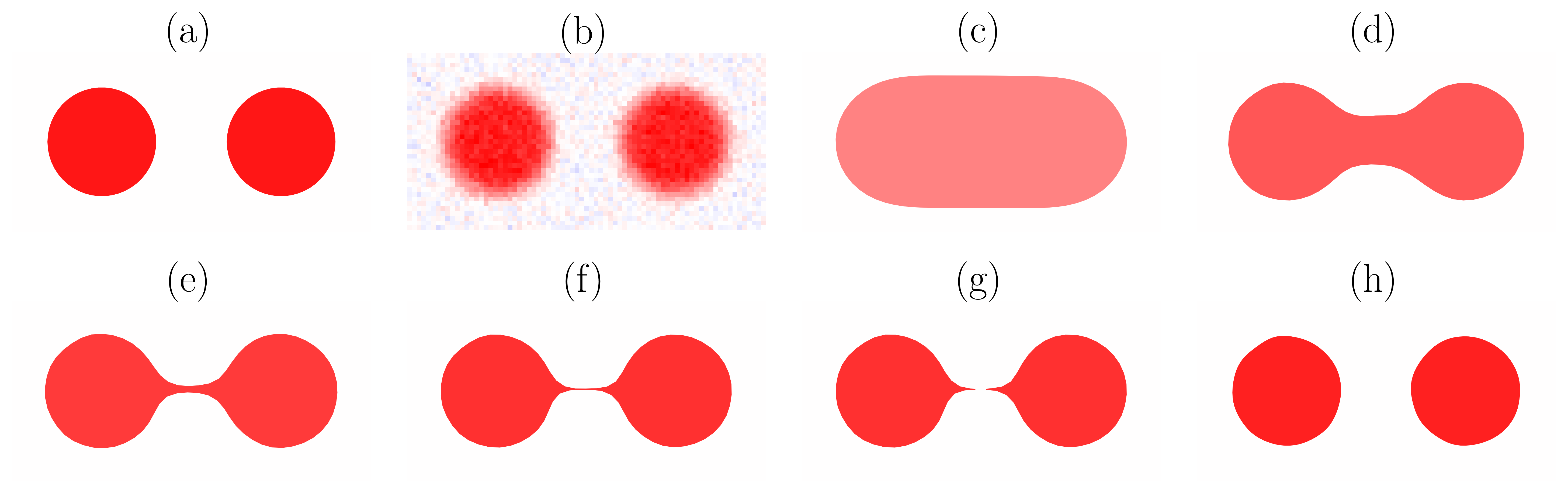}
	\caption{Topology change experiment. (a): unknown signal, (b): observations, (c): weighted Cheeger set, (d,e,f,g): sliding step iterations (with splitting), (h): final function.}
	\label{splitting}
\end{figure*}

\section{The case of a single radial measurement}
\label{toy}

In this section, we study a particular setting, where the number of
observations $m$ is equal to $1$, and the unique sensing function is
radial, i.e. the measurement operator is given by \eqref{measurement_op}
with $\varphi:\mathbb{R}^2\to\mathbb{R}$ a radial function\footnote{We
say that~${f:\RR^2\to\RR}$ is radial if there
exists~${g:[0,+\infty[\to\RR}$ such that~$f(x)=g(\|x\|)$ for almost
every~$x\in\RR^2$.}. We first state a proposition
about the solutions of \eqref{prob} in this setting, before carrying on
with
results that will require more assumptions on $\varphi$. Unless otherwise
specified, sets that differ by a Lebesgue negligible set and functions
that are equal almost everywhere are identified.

For every $u\in \LD$, we define the radialisation $\tilde{u}$ of $u$ by
\begin{equation*}
  \tilde{u}(x)=\int_{\mathbb{S}^1}u(\|x\|\,e)\,d\mathcal{H}^1(e)\,.
\end{equation*}
We note that in our setting $\Phi u$ only depends on $u$ through $\tilde{u}$, that is:
\begin{equation*}\Phi u =\int_{\RR^2}\varphi\,u=\int_{\RR^2}\tilde{\varphi}\,\tilde{u}\,.
\end{equation*}
Using the fact $|\Diff \tilde{u}|(\RR^2)\leq |\Diff u|(\RR^2)$ for any $u\in \LD$ such that $|\Diff u|(\RR^2)<+\infty$ with equality if and only if $u$ is radial (see \Cref{radial_tv} for a proof of this statement), we may state the following result:
\begin{proposition}
	Every solution of \eqref{prob} is radial, and there exists a solution that is proportional to the indicator of a disk centered at the origin.
\end{proposition}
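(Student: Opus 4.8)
The plan is to exploit the two facts already highlighted just before the statement: (i) $\Phi u$ depends on $u$ only through its radialisation $\tilde u$, and (ii) $|\Diff\tilde u|(\RR^2)\le|\Diff u|(\RR^2)$ with equality precisely when $u$ is radial (proved in the referenced lemma). First I would show \emph{every} solution is radial: let $u_\lambda$ solve \eqref{prob} and consider $\tilde u_\lambda$. Since $\Phi\tilde u_\lambda=\Phi u_\lambda$, the data-fidelity term $\tfrac12\|\Phi u-y\|^2$ is unchanged, while the regularizer satisfies $\lambda|\Diff\tilde u_\lambda|(\RR^2)\le\lambda|\Diff u_\lambda|(\RR^2)$. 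Hence $T_\lambda(\tilde u_\lambda)\le T_\lambda(u_\lambda)$, so $\tilde u_\lambda$ is also a minimizer and in fact equality must hold in the total variation inequality; by the equality case of the cited lemma, $u_\lambda$ itself is radial. (One should check $\tilde u_\lambda\in\LD$, which follows from Jensen/Cauchy--Schwarz applied to the spherical average, and that the map $u\mapsto\tilde u$ is well defined on $\LD$.)

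Next I would produce a radial solution that is proportional to the indicator of a centered disk. The natural route is through the structure of solutions recalled in the introduction: by \cite{brediesSparsitySolutionsVariational2019,boyerRepresenterTheoremsConvex2019} (or, within this paper, via the extreme-point description used for the Frank--Wolfe step), there is a solution that is a finite linear combination of indicators of simple sets. Combining this with the first part, one can push it to a radial one; but a radial simple set is, up to a Lebesgue-null set, either a centered disk or a centered annulus. So a radial solution can be written as a combination of indicators of centered disks (writing an annulus as a difference of two disks). It then remains to argue that among such radial competitors the optimum is attained by a \emph{single} centered disk $B(0,r)$ with a suitable amplitude. I would do this by a direct one-dimensional argument: parametrize radial candidates by their ``profile'' and observe that, with $m=1$, the fidelity term depends only on the scalar $\langle\tilde\varphi,\tilde u\rangle$, so the problem reduces to minimizing over radial $u$ with prescribed value of this linear functional the quantity $|\Diff u|(\RR^2)$, i.e. a constrained isoperimetric-type problem in the radial variable; its solution is a single jump, giving $u=a\,\mathbf 1_{B(0,r)}$.

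Concretely, for the last step I would argue as follows. Fix the optimal value $c^\star=\langle\tilde\varphi,\tilde u_\lambda\rangle$ and the optimal total variation $T^\star=|\Diff u_\lambda|(\RR^2)$ (these are common to all solutions by the remark after \Cref{prop_dualite}). Among radial functions $u$ with $\langle\tilde\varphi,u\rangle=c^\star$, I want one with $|\Diff u|(\RR^2)\le T^\star$ of the form $a\,\mathbf 1_{B(0,r)}$. Writing a general radial $u$ via its layer-cake decomposition into level sets $U^{(t)}$, each of which (being radial and of finite perimeter) is a countable disjoint union of centered disks/annuli, one sees that replacing $U^{(t)}$ by the single centered disk of the same measure can only decrease the perimeter (isoperimetry in the radial class) while, if $\tilde\varphi$ is suitably monotone, it can be arranged not to change — or to change in a controlled, correctable way — the value of the linear functional. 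The cleanest statement needs an assumption on $\varphi$ (monotonicity of the radial profile $g$), which is exactly why the paper says ``results that will require more assumptions on $\varphi$'' appear afterward: for this proposition one only claims \emph{existence} of such a solution, so it suffices to exhibit one candidate $a\,\mathbf 1_{B(0,r)}$ matching $(c^\star,T^\star)$ and invoke convexity/optimality.

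The main obstacle I expect is precisely this last reduction to a \emph{single} disk: the layer-cake argument easily yields a radial solution built from centered disks and annuli, but collapsing it to one indicator requires either an extra monotonicity hypothesis on $\tilde\varphi$ or a soft compactness/extreme-point argument showing that the face of the TV-unit ball selected by the dual certificate $\Phi^\ast p_\lambda$ contains an indicator of a centered disk. I would therefore first try the soft route: from \eqref{eq_dualite}, $\Phi^\ast p_\lambda=\alpha\tilde\varphi$ (radial) lies in $\partial J(u_\lambda)$; by \Cref{lvlset_sousdiff} every level set of $u_\lambda$ saturates the subdifferential inequality, hence is a Cheeger set for $\alpha\tilde\varphi$; among radial Cheeger sets, a centered disk of the optimal radius is one, and using it as the single atom (with the amplitude fixed by the fidelity term, i.e. $a=-\tfrac1\lambda P(B(0,r))\,p_\lambda$ up to sign) produces a bona fide solution. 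Verifying that this particular $a\,\mathbf 1_{B(0,r)}$ indeed satisfies both lines of \eqref{eq_dualite} is then a short computation, and it is the step I would write out in full.
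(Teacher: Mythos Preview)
Your first part is correct and is exactly the paper's argument: radialise a minimiser, observe that the data term is unchanged while the total variation cannot increase, and invoke the equality case of the cited lemma.

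Your second part, however, is far more complicated than necessary and rests on a misconception. You write that ``a radial simple set is, up to a Lebesgue-null set, either a centered disk or a centered annulus'', but in this paper a \emph{simple} set is by definition the interior of a single rectifiable Jordan curve, i.e.\ it has no hole; an annulus is therefore \emph{not} simple. This is precisely what makes the argument short. The paper proceeds as follows: since $m=1$, the representer theorem of Boyer et al.\ (cited in the proof) yields a solution of \eqref{prob} that is proportional to the indicator of \emph{one} simple set. By the first part this solution is radial, hence the simple set in question has a radial indicator function, hence it is a disk centered at the origin. That is the whole proof.

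All of your subsequent work --- layer-cake decompositions, collapsing annuli to disks, checking \eqref{eq_dualite} by hand, worrying about monotonicity of $\tilde\varphi$ --- is unnecessary here; these considerations only enter later in the section (after Assumption~1 is introduced) when one wants uniqueness and an explicit radius. The gap in your plan is not a fatal error, but a missed simplification stemming from the misreading of ``simple'': once you recall the definition, the reduction to a single disk is immediate.
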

\begin{proof}
	The first part of the result is a direct consequence of the above statements.
	Then, using \cite[Corollary 2 and Theorem 2]{boyerRepresenterTheoremsConvex2019}, we have that there exists a solution of \eqref{prob} which is proportional to the indicator function of a simple set. The result follows from the fact that every simple set whose indicator function is radial is a disk centered at the origin.
\end{proof}

We will now assume $\varphi$ is positive, continuous and decreasing\footnote{\revision{In all the following, by decreasing we mean \emph{strictly} decreasing.}} along rays. For any~${r\in\RR_+}$, we will denote by an abuse of notation~${\tilde{\varphi}(r)}$ the value of~${\tilde{\varphi}}$ at any point~${x\in\RR^2}$ such that $\|x\|=r$. We may also invoke the following assumption:

\medskip

\noindent\textbf{Assumption 1.} The function $f:r\mapsto r\,\tilde{\varphi}(r)$ is continuously differentiable on~$\mathbb{R}_+^*$, ${rf(r)\to 0}$ when~ ${r\to+\infty}$, and there exists $\rho_0>0$ such that~${f'(r)>0}$ on $]0, \rho_0[$ and $f'(r)<0$ on $]\rho_0, +\infty[$.\footnote{Assumption 1 is for example satisfied
by~${\varphi:x\mapsto \text{exp}\left(-||x||^2/(2\sigma^2)\right)}$ for any~${\sigma>0}$.}

\medskip

In the rest of this section we first explain what each step of Algorithm
\ref{sfw} should theoretically return in this particular setting, without
worrying about approximations made for implementation matters. Then, we compare
those with the output of each step of the practical algorithm.

\subsection{Theoretical behavior of the algorithm}

The first step of Algorithm \ref{sfw_sliding} consists in solving the Cheeger problem \eqref{cheeger} associated to  $\eta\overset{\text{def}}{=}\frac{1}{\lambda}\Phi^*y=\frac{y}{\lambda}\varphi$ (or equivalently to $\varphi$). \revision{To describe the solutions of this problem, we rely on Steiner symmetrization. If $E$ is a set of finite perimeter with finite measure, $\nu\in\mathbb{S}^1$ and~${z\in\RR}$, we denote
\begin{equation*}
   E_{\nu,z}\eqdef\{t\in\RR\,\rvert\,z\,\nu+t\,\nu^\perp\in E\}\,.
\end{equation*}
The Steiner symmetrization of $E$ with respect to the line through the origin and directed by $\nu$, denoted $E_{\nu}^s$, is then defined by
\begin{equation*}
   E_{\nu}^s\eqdef \{x\in\RR^2\,\rvert\,|\langle x,\nu^\perp\rangle|\leq \mathcal{L}^1(E_{\nu,\langle x,\nu\rangle})/2\}\,,
\end{equation*}
where $\mathcal{L}^1$ denotes the Lebesgue measure on $\RR$. The fundamental property of Steiner symmetrization is that it preserves volume and does not increase perimeter (see \cite[section 14.1]{maggiSetsFinitePerimeter2012} for more details).}
Using this, and denoting by $B(0,R)$ the disk of radius $R$ centered at the origin, we may state\footnote{This result can be proved using the radialisation operation previously introduced. We here however rely on classical arguments used in the analysis of geometric variational problems, which we will moreover also use later in this section.}:
\begin{proposition}\label{cheeger-radial}
All the solutions of the Cheeger problem \eqref{cheeger} associated
to~${\eta\overset{\text{def}}{=}\varphi}$ are disks centered at the origin. Under Assumption $1$ the unique solution is the disk $B(0,R^*)$ with $R^*$ the unique maximizer of \begin{equation*}R\mapsto \left[\int_{0}^R r\,\tilde{\varphi}(r)\,dr\right]/R\,.\end{equation*}
\end{proposition}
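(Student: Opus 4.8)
The plan is to prove the two assertions separately, relying in both cases on Steiner symmetrization. For the first assertion, let $E$ be any solution of \eqref{cheeger} associated to $\eta=\varphi$; up to replacing $E$ by the component of its decomposition realizing the supremum of $\mathcal{J}$, we may assume $E$ is simple, and since $\varphi>0$ we may drop the absolute value, so $E$ maximizes $\mathcal{J}(E)=\left(\int_E\varphi\right)/P(E)$ among simple sets of finite positive measure. First I would apply Steiner symmetrization with respect to an arbitrary line through the origin: because $\tilde{\varphi}$ is radial and decreasing along rays, the Hardy--Littlewood type inequality for Steiner symmetrization gives $\int_{E^s}\varphi\geq\int_E\varphi$, while the symmetrization does not increase the perimeter, $P(E^s)\leq P(E)$. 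Hence $\mathcal{J}(E^s)\geq\mathcal{J}(E)$, so $E^s$ is also a maximizer; since $E$ was a maximizer, equality must hold in both inequalities. The equality case in the perimeter inequality under Steiner symmetrization (see e.g.\ \cite{maggiSetsFinitePerimeter2012}) forces $E$ to be, up to a negligible set and a translation along the symmetrization axis, symmetric with respect to a line orthogonal to that axis; letting the axis vary over all directions and using that $\varphi$ is \emph{strictly} decreasing (so that equality in the Hardy--Littlewood inequality pins down the center) one concludes $E$ is symmetric about every line through the origin, hence a disk centered at the origin.

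For the second assertion, once we know the maximizer is a disk $B(0,R)$, the problem reduces to the one-dimensional optimization
\begin{equation*}
\underset{R>0}{\max}~\frac{\int_{B(0,R)}\varphi}{P(B(0,R))}=\underset{R>0}{\max}~\frac{2\pi\int_0^R r\,\tilde{\varphi}(r)\,dr}{2\pi R}=\underset{R>0}{\max}~g(R),\qquad g(R)\eqdef\frac{1}{R}\int_0^R r\,\tilde{\varphi}(r)\,dr.
\end{equation*}
Then I would show $g$ has a unique maximizer under Assumption~1. Writing $F(R)\eqdef\int_0^R f(r)\,dr$ with $f(r)=r\tilde\varphi(r)$, we have $g=F/R$ and $g'(R)=\left(Rf(R)-F(R)\right)/R^2$, so critical points solve $h(R)\eqdef Rf(R)-F(R)=0$. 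Since $h'(R)=Rf'(R)$, Assumption~1 gives $h'>0$ on $(0,\rho_0)$ and $h'<0$ on $(\rho_0,\infty)$, so $h$ is strictly unimodal; as $h(0)=0$ and $h$ increases initially, $h>0$ on $(0,\rho_0]$, and $h$ then strictly decreases. The decay hypothesis $rf(r)\to0$ (together with $f>0$, so $F$ is increasing) gives $h(R)=Rf(R)-F(R)\to-\infty$ or at least $h<0$ for $R$ large, so $h$ has exactly one zero $R^*\in(\rho_0,\infty)$, with $g'>0$ before and $g'<0$ after; hence $R^*$ is the unique maximizer. I should also check $g$ does not attain its sup at $0$ or $\infty$: $g(R)\to0$ as $R\to0^+$ (since $f$ is continuous and $f(0^+)=0$) and $g(R)\to0$ as $R\to\infty$ (from $F$ bounded, which follows from $rf(r)\to0$ by a comparison), so the supremum is attained in the interior at $R^*$.

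The main obstacle is the rigorous treatment of the equality cases in Steiner symmetrization in the first part: one must argue that equality in the perimeter inequality forces symmetry of $E$ about a hyperplane, and that having this symmetry simultaneously for \emph{every} direction through the origin — which requires also exploiting strict monotonicity of $\varphi$ to fix the symmetry hyperplanes to pass through the origin — implies $E$ is a centered disk. The footnote in the statement signals that an alternative, softer route via the radialisation operator $\tilde{u}$ is available: since $\Phi u$ depends only on $\tilde u$ and $|\Diff\tilde u|(\RR^2)\le|\Diff u|(\RR^2)$ with equality iff $u$ is radial, any non-radial competitor can be strictly improved, which directly yields radiality of the maximizer and hence (being simple) that it is a centered disk; I would mention this as the quick argument and use symmetrization mainly to set up the techniques reused later in the section. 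The second part is entirely elementary calculus once Assumption~1 is in force, so no real difficulty there beyond bookkeeping the boundary behavior at $0$ and $\infty$.
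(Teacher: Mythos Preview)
Your proposal follows the same route as the paper: Steiner symmetrization about lines through the origin for the first assertion, and the identical calculus argument for the second (the paper also writes $\mathcal{G}'(R)=\bigl(Rf(R)-\int_0^R f\bigr)/R^2$ and observes that the numerator has derivative $Rf'(R)$, then uses Assumption~1 exactly as you do).

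One point to tighten in the first part, precisely at the place you flag as the main obstacle: equality in the \emph{perimeter} inequality under Steiner symmetrization does not in general force $E$ to be a translate of $E^s_\nu$ (the rigidity question there is delicate). The paper avoids this by deriving the strict inequality $\mathcal{J}(E^s_\nu)>\mathcal{J}(E)$ whenever $|E\triangle E^s_\nu|>0$ from the \emph{weighted-area} term alone: strict monotonicity of $\varphi$ along rays makes the one-dimensional Hardy--Littlewood rearrangement inequality strict on every slice where the slice differs from its symmetrized interval, so $\int_{E^s_\nu}\varphi>\int_E\varphi$; combined with $P(E^s_\nu)\le P(E)$ this gives the ratio inequality with the clean equality characterization $|E\triangle E^s_\nu|=0$. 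From there the paper cites \cite[Section~14.2]{maggiSetsFinitePerimeter2012} to pass from ``$E$ equals all its Steiner symmetrizations through the origin'' to ``$E$ is a ball centered at the origin''. Your radialisation alternative is also valid and is exactly the soft argument the paper alludes to in the footnote.
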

\begin{proof}
	We first stress that existence of solutions was already briefly discussed in \Cref{sec_fw} (it can either be obtained by purely geometric arguments, or by showing the existence of solutions of \eqref{cheeger_relaxed} by the direct method of calculus of variations and then using Krein-Milman theorem).

	Now if $E\subset\RR^2$ is such that $0<P(E)<+\infty$ and~${\nu\in\mathbb{S}^1}$ we have (see \Cref{steiner}):
	\begin{equation*}
		\frac{\int_{E}\eta}{P(E)}\leq \frac{\int_{E^s_{\nu}}\eta}{P(E^s_{\nu})}\,,
	\end{equation*}
	with equality if and only if $|E\triangle E^s_{\nu}|=0$. Hence if $E^*$ solves \eqref{cheeger}, arguing as in \cite[section 14.2]{maggiSetsFinitePerimeter2012}, we get that $E^*$ is a convex set which is invariant by reflection with respect to any line through the origin, and hence that $E^*$ is a ball centered at the origin.

 	Now for any $R>0$
	we have \begin{equation*}
	\mathcal{G}(R)\eqdef\frac{\int_{B(0,R)}\varphi}{P(B(0,R))}=\frac{1}{R}\int_{0}^R r\,\tilde{\varphi}(r)dr\,,
\end{equation*}
	and the last part of the result follows from a simple analysis of the variations of $\mathcal{G}$ under Assumption 1, which is given in \Cref{proofs_toy}.

\end{proof}

The second step (Line 10) of the algorithm then consists in solving
\begin{equation}
	\underset{a \in \mathbb{R}}{\text{inf}} ~~ \frac{1}{2}\left(a\int_{E^*}\varphi-y\right)^2+\lambda\,P(E^*)\,|a|\,,
	\label{amplitude_prob}
\end{equation}
where $E^*=B(0,R^*)$. The solution $a^*$ has a closed form which writes:
\begin{equation}
	a^*=\frac{\text{sign}(y)}{\int_{E^*}\varphi}\,\left(|y|-\lambda\, \frac{P(E^*)}{\int_{E^*}\varphi}\right)^+,
	\label{amplitude}
\end{equation}
where $x^+=\text{max}(x,0)$.

The next step should be the sliding one (Line 13). However, in this specific setting, one can show that the constructed function is already optimal, as stated by the following proposition:
\begin{proposition}
	Under Assumption 1, Problem \eqref{prob} has a unique solution  $a^*\,\mathbf{1}_{E^*}$ with~${E^*=B(0,R^*)}$ the solution of the Cheeger problem given by Prop. \ref{cheeger-radial}, and~$a^*$ given by \eqref{amplitude}.
\end{proposition}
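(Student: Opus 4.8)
The plan is to show that the function $a^*\mathbf{1}_{E^*}$ constructed by the first two steps of Algorithm \ref{sfw} already satisfies the extremality conditions \eqref{eq_dualite}, hence solves \eqref{prob}; uniqueness will then follow from the structure of the problem. First I would treat the degenerate case $|y|\leq \lambda\,P(E^*)/\int_{E^*}\varphi$ separately: there $a^*=0$, and one checks directly that $\eta = \frac{1}{\lambda}\Phi^* y = \frac{y}{\lambda}\varphi$ lies in $\partial J(0)$, using \Cref{cheeger-radial} which says the Cheeger ratio is maximized by $E^*=B(0,R^*)$ and that this maximal value equals $\left|\int_{E^*}\eta\right|/P(E^*)\leq 1$ precisely in this regime; then $0$ is a solution of \eqref{prob} and we are done. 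So assume from now on $a^*\neq 0$.

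The core of the argument is to verify $\Phi^* p_\lambda \in \partial J(u_\lambda)$ with $u_\lambda = a^*\mathbf{1}_{E^*}$ and $p_\lambda = -\frac{1}{\lambda}(\Phi u_\lambda - y)$. By the closed form \eqref{amplitude} of $a^*$ coming from the LASSO-type subproblem \eqref{amplitude_prob}, the scalar $p_\lambda$ equals $\mathrm{sign}(y)\,P(E^*)/\int_{E^*}\varphi$, so $\Phi^* p_\lambda = \mathrm{sign}(a^*)\,\frac{P(E^*)}{\int_{E^*}\varphi}\,\varphi$. I would then use \Cref{lvlset_sousdiff}: since $u_\lambda$ has a single nonzero level value, its only relevant level set is $E^*$ itself, and the condition to check reduces to (i) $\Phi^*p_\lambda \in \partial J(0)$, and (ii) $P(E^*) = \mathrm{sign}(a^*)\int_{E^*}\Phi^*p_\lambda$. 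Condition (ii) is immediate from the explicit form of $\Phi^*p_\lambda$. For condition (i), using the second characterization of $\partial J(0)$ recalled in the preliminaries, I must show $\left|\int_E \Phi^*p_\lambda\right| \leq P(E)$ for every set $E$ of finite perimeter and finite positive measure; equivalently, $\frac{P(E^*)}{\int_{E^*}\varphi}\cdot\frac{\left|\int_E\varphi\right|}{P(E)}\leq 1$, which is exactly the statement that $E^*$ maximizes the Cheeger ratio $\mathcal{J}$ — precisely the content of \Cref{cheeger-radial}. Thus $\Phi^*p_\lambda \in \partial J(u_\lambda)$, and by \Cref{prop_dualite} (the extremality condition is sufficient as well as necessary) $u_\lambda$ solves \eqref{prob}.

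For uniqueness I would argue as follows. By the remark following \Cref{prop_dualite}, all solutions share the same image under $\Phi$ and the same total variation, say $\Phi u = \Phi u_\lambda$ and $|\Diff u|(\RR^2) = |a^*|P(E^*)$. Any solution $u$ must be radial (by the first proposition of this section), so $u$ is determined by its radial profile, and $\Phi u = \int_0^\infty g(r)\,\tilde u(r)\,\tilde\varphi(r)\,r\,dr$ for the appropriate measure; more to the point, $u$ must satisfy the level-set conditions of \Cref{lvlset_sousdiff} with the same $\eta = \mathrm{sign}(a^*)\,\frac{P(E^*)}{\int_{E^*}\varphi}\,\varphi$, forcing every nonzero level set $U^{(t)}$ to itself be a Cheeger set for $\eta$; under Assumption 1, \Cref{cheeger-radial} gives that the \emph{unique} Cheeger set is $B(0,R^*)$, so every nonzero level set equals $E^*$, i.e. $u = c\,\mathbf{1}_{E^*}$ for a single constant $c$, and then matching $\Phi u$ forces $c = a^*$ via the strict monotonicity coming from $\varphi>0$. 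The main obstacle I anticipate is this last uniqueness step: one has to rule out solutions whose level sets are \emph{not} all equal to $E^*$, and the cleanest route is to invoke the strict inequality in \Cref{cheeger-radial} (uniqueness of the Cheeger set under Assumption 1) together with the fact that $\partial J(u_\lambda)\cap\partial J(u)$ being nonempty forces each $U^{(t)}$ into the subdifferential constraint — the bookkeeping of which level values can be nonzero, and checking that a strictly positive $\varphi$ makes the amplitude rigid, is where the care is needed rather than in any deep estimate.
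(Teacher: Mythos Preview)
Your proposal is correct and uses the same key ingredients as the paper --- the level-set characterization of $\partial J$ (\Cref{lvlset_sousdiff}), the uniqueness of the Cheeger set from \Cref{cheeger-radial}, and the extremality conditions \eqref{eq_dualite} --- but organizes them differently. You first verify explicitly that $a^*\mathbf{1}_{E^*}$ satisfies the extremality conditions (existence), and then argue uniqueness separately. The paper instead takes an arbitrary solution $u^*$, reads off from \eqref{eq_dualite} and \Cref{lvlset_sousdiff} that every nontrivial level set of $u^*$ must solve the Cheeger problem for $\varphi$, hence equals $B(0,R^*)$ by \Cref{cheeger-radial}, so $u^*=a\,\mathbf{1}_{B(0,R^*)}$ for some scalar $a$; existence and the value $a=a^*$ then follow at once. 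The paper's route is shorter because it never needs the explicit verification of $\Phi^*p_\lambda\in\partial J(0)$ or the separate treatment of the degenerate case $a^*=0$; your route has the virtue of making the sufficiency direction of the extremality conditions fully explicit. Note also that your appeal to radiality of solutions in the uniqueness step is unnecessary: once you use that the dual solution $p_\lambda$ is unique (so $\eta$ is the same for every primal solution), the level-set argument alone forces $u=c\,\mathbf{1}_{E^*}$ without ever invoking radiality.
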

\begin{proof}
If $u^*\in\LD$ solves \eqref{prob} then \begin{equation*}
\Phi^*p^*=p^*\varphi\in\partial J(u^*)\,,\end{equation*} with $p^*=-\frac{1}{\lambda}(\Phi u^*-y)$. Now from \Cref{lvlset_sousdiff} we know $p^*\varphi\in\partial J(u^*)$ implies that  $p^*\varphi\in\partial J(0)$ and that the level sets of $u^*$ satisfy
\begin{equation*}
	P(U_*^{(t)})=\left|\int_{U_*^{(t)}}p^*\varphi\right|.
\end{equation*}
This means that the non trivial level sets of $u^*$ are all solutions of the Cheeger problem associated to~${p^*\varphi}$ (or equivalently to $\varphi$), and are hence equal to $B(0,R^*)$. This shows there exists $a\in\RR$ such that~${u^*=a\,\mathbf{1}_{B(0,R^*)}}$, and the result easily follows.
\end{proof}

To summarize, with a single observation and a radial sensing function, a solution is found in a single iteration, and its support is directly identified by solving the Cheeger problem.

\subsection{Study of implementation approximations}
\label{implement_approx}
In practice, instead of solving \eqref{cheeger}, we look for \revision{an element of $\mathcal{P}_n$ (a simple polygon with at most $n$ sides)}
maximizing $\mathcal{J}$, for some given integer~${n\geq 3}$. It is hence natural to investigate the proximity of this optimal polygon  with~${B(0, R^*)}$. Solving classical geometric
variational problems restricted to the set of $n$-gons is involved, as the
Steiner symmetrization procedure might increase the number of sides \cite[Sec. 7.4]{polyaIsoperimetricInequalitiesMathematical1951a}.
However, using a trick from P\'olya and Szeg\"o, one may prove:
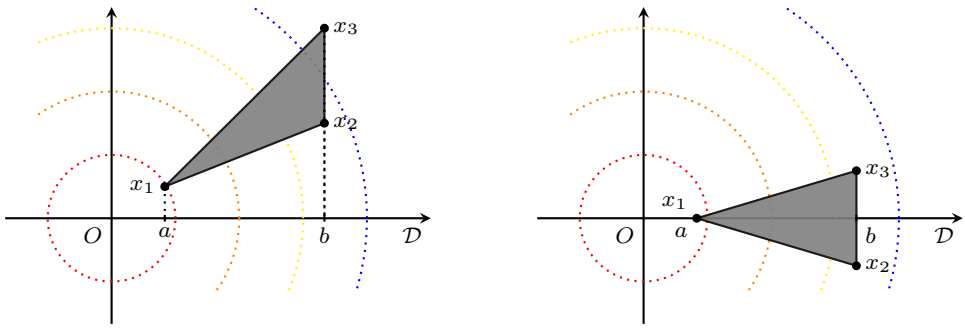
\begin{figure*}
	\centering
	\begin{tikzpicture}[scale=1.4,axis/.style={->,dashed},thick]
      \coordinate  (Ox1) at (-1,0){};
      \coordinate  (Ox2) at (3,0){};
      \coordinate  (Oy1) at (0,-1){};
      \coordinate  (Oy2) at (0,2){};

      \draw[->,>=stealth] (Ox1) -- (Ox2) node[anchor=north east]{};
      \draw[->,>=stealth] (Oy1) -- (Oy2) node[anchor=north east]{};
      \node[below left]  at (0,0) {$O$};
      \node[below left]  at (Ox2) {$\mathcal{D}$};

      \begin{scope}[color=blue]
      \clip (-0.7,-0.7) rectangle (3,2);
      \draw[dotted,color=red] (0,0) circle (0.6);
      \draw[dotted,color=orange] (0,0) circle (1.2);
      \draw[dotted,color=yellow] (0,0) circle (1.8);
      \draw[dotted,color=blue] (0,0) circle (2.4);
      \end{scope}

      \coordinate  (A) at (0.5,0.3){};
      \coordinate  (PA) at (0.5,0){};
      \coordinate  (B) at (2,0.9){};
      \coordinate  (PB) at (2,0){};
      \coordinate  (C) at (2,1.8){};

      \draw[fill=gray,opacity=0.9] (A) -- (B) -- (C) --cycle;
      \node[left] at (A) {$x_1$};
      \node[right] at (B) {$x_2$};
      \node[right] at (C) {$x_3$};
      \draw[fill=black] (A) circle (0.1em);
      \draw[fill=black] (B) circle (0.1em);
      \draw[fill=black] (C) circle (0.1em);

      \node[below] at (PA) {$a$};
      \node[below] at (PB) {$b$};

      \draw[dotted] (A) -- (PA);
      \draw[dotted] (B) -- (PB);
      \draw[dotted] (C) -- (PB);
      \draw[thick] ($(PA)+(0,0.1em)$) -- ($(PA)+(0,-0.1em)$);
      \draw[thick] ($(PB)+(0,0.1em)$) -- ($(PB)+(0,-0.1em)$);

	\begin{scope}[shift={(5,0)}]
      \coordinate  (Ox1) at (-1,0){};
      \coordinate  (Ox2) at (3,0){};
      \coordinate  (Oy1) at (0,-1){};
      \coordinate  (Oy2) at (0,2){};
      \draw[->,>=stealth] (Ox1) -- (Ox2) node[anchor=north east]{};
      \draw[->,>=stealth] (Oy1) -- (Oy2) node[anchor=north east]{};
      \node[below left]  at (0,0) {$O$};
      \node[below left]  at (Ox2) {$\mathcal{D}$};
      \begin{scope}[color=blue]
      \clip (-0.7,-0.7) rectangle (3,2);
      \draw[dotted,color=red] (0,0) circle (0.6);
      \draw[dotted,color=orange] (0,0) circle (1.2);
      \draw[dotted,color=yellow] (0,0) circle (1.8);
      \draw[dotted,color=blue] (0,0) circle (2.4);
      \end{scope}

      \coordinate  (PA) at (0.5,0){};
      \coordinate  (PB) at (2,0){};

      \coordinate (SA) at (0.5,0){};
      \coordinate (SB) at (2,-0.45){};
      \coordinate (SC) at (2,0.45){};

      \coordinate  (PA) at (0.5,0){};
      \coordinate  (PB) at (2,0){};

      \draw[fill=gray,opacity=0.9] (SA) -- (SB) -- (SC) --cycle;
      \node[above left] at (SA) {$x_1$};
      \node[right] at (SB) {$x_2$};
      \node[right] at (SC) {$x_3$};
      \draw[fill=black] (SA) circle (0.1em);
      \draw[fill=black] (SB) circle (0.1em);
      \draw[fill=black] (SC) circle (0.1em);

      \node[below left] at (PA) {$a$};
      \node[below right] at (PB) {$b$};

      \draw[thick] ($(PA)+(0,0.1em)$) -- ($(PA)+(0,-0.1em)$);
      \draw[thick] ($(PB)+(0,0.1em)$) -- ($(PB)+(0,-0.1em)$);
	\end{scope}

	\end{tikzpicture}
	\caption{Steiner symmetrization of triangles}
	\label{fig:steiner_triangles}
\end{figure*}
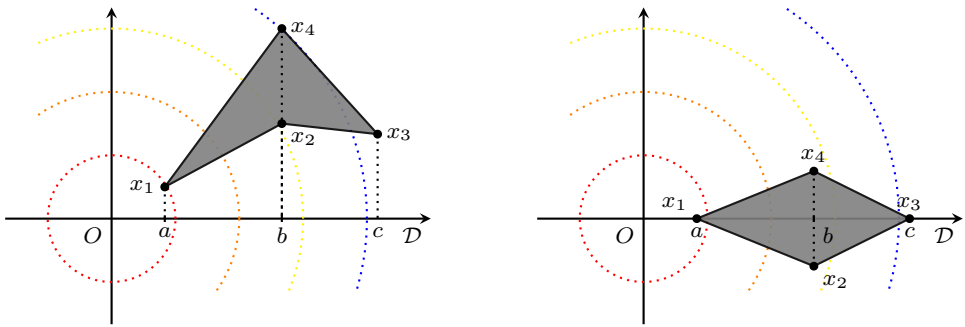
\begin{figure*}
	\centering
	\begin{tikzpicture}[scale=1.4,axis/.style={->,dashed},thick]
      % Coordonnées axes
      \coordinate  (Ox1) at (-1,0){};
      \coordinate  (Ox2) at (3,0){};
      \coordinate  (Oy1) at (0,-1){};
      \coordinate  (Oy2) at (0,2){};

      %Repere
      \draw[->,>=stealth] (Ox1) -- (Ox2) node[anchor=north east]{};
      \draw[->,>=stealth] (Oy1) -- (Oy2) node[anchor=north east]{};
      \node[below left]  at (0,0) {$O$};
      \node[below left]  at (Ox2) {$\mathcal{D}$};

      % ensembles de niveau de la fonction
      \begin{scope}[color=blue]
      \clip (-0.7,-0.7) rectangle (3,2);
      \draw[dotted,color=red] (0,0) circle (0.6);
      \draw[dotted,color=orange] (0,0) circle (1.2);
      \draw[dotted,color=yellow] (0,0) circle (1.8);
      \draw[dotted,color=blue] (0,0) circle (2.4);
      \end{scope}

      \coordinate  (A) at (0.5,0.3){};
      \coordinate  (PA) at (0.5,0){};
      \coordinate  (B) at (1.6,0.9){};
      \coordinate  (PB) at (1.6,0){};
      \coordinate  (C) at (2.5,0.8){};
      \coordinate  (PC) at (2.5,0){};
      \coordinate  (D) at (1.6,1.8){};

      \draw[fill=gray,opacity=0.9] (A) -- (B) -- (C) -- (D) --cycle;
      \node[left] at (A) {$x_1$};
      \node[below right] at (B) {$x_2$};
      \node[right] at (C) {$x_3$};
      \node[right] at (D) {$x_4$};
      \draw[fill=black] (A) circle (0.1em);
      \draw[fill=black] (B) circle (0.1em);
      \draw[fill=black] (C) circle (0.1em);
      \draw[fill=black] (D) circle (0.1em);

      \node[below] at (PA) {$a$};
      \node[below] at (PB) {$b$};
      \node[below] at (PC) {$c$};

      \draw[dotted] (A) -- (PA);
      \draw[dotted] (B) -- (PB);
      \draw[dotted] (D) -- (PB);
      \draw[dotted] (C) -- (PC);

      \draw[thick] ($(PA)+(0,0.1em)$) -- ($(PA)+(0,-0.1em)$);
      \draw[thick] ($(PB)+(0,0.1em)$) -- ($(PB)+(0,-0.1em)$);

	\begin{scope}[shift={(5,0)}]
      \coordinate  (Ox1) at (-1,0){};
      \coordinate  (Ox2) at (3,0){};
      \coordinate  (Oy1) at (0,-1){};
      \coordinate  (Oy2) at (0,2){};

      \draw[->,>=stealth] (Ox1) -- (Ox2) node[anchor=north east]{};
      \draw[->,>=stealth] (Oy1) -- (Oy2) node[anchor=north east]{};
      \node[below left]  at (0,0) {$O$};
      \node[below left]  at (Ox2) {$\mathcal{D}$};

      \begin{scope}[color=blue]
      \clip (-0.7,-0.7) rectangle (3,2);
      \draw[dotted,color=red] (0,0) circle (0.6);
      \draw[dotted,color=orange] (0,0) circle (1.2);
      \draw[dotted,color=yellow] (0,0) circle (1.8);
      \draw[dotted,color=blue] (0,0) circle (2.4);
      \end{scope}

      \coordinate  (PA) at (0.5,0){};
      \coordinate  (PB) at (1.6,0){};
      \coordinate  (PC) at (2.5,0){};
      \coordinate (SA) at (0.5,0){};
      \coordinate (SB) at (1.6,-0.45){};
      \coordinate (SC) at (2.5,0){};
      \coordinate (SD) at (1.6,0.45){};

      \draw[fill=gray,opacity=0.9] (SA) -- (SB) -- (SC) -- (SD) --cycle;
      \node[above left] at (SA) {$x_1$};
      \node[below right] at (SB) {$x_2$};
      \node[above] at (SC) {$x_3$};
      \node[above] at (SD) {$x_4$};
      \draw[fill=black] (SA) circle (0.1em);
      \draw[fill=black] (SB) circle (0.1em);
      \draw[fill=black] (SC) circle (0.1em);
      \draw[fill=black] (SD) circle (0.1em);

      \node[below] at (PA) {$a$};
      \node[below right] at (PB) {$b$};
      \node[below] at (PC) {$c$};

      \draw[dotted] (SA) -- (PA);
      \draw[dotted] (SB) -- (PB);
      \draw[dotted] (SD) -- (PB);
      \draw[dotted] (SC) -- (PC);

      \draw[thick] ($(PA)+(0,0.1em)$) -- ($(PA)+(0,-0.1em)$);
      \draw[thick] ($(PB)+(0,0.1em)$) -- ($(PB)+(0,-0.1em)$);

	\end{scope}

	\end{tikzpicture}
	\caption{Steiner symmetrization of quadrilaterals}
	\label{fig:steiner_quadrilaterals}
\end{figure*}
\begin{proposition}
	Let $n\in\{3,4\}$. Then all the maximizers of $\mathcal{J}$ \revision{over $\mathcal{P}_n$} are regular and inscribed in a circle centered at the origin.
	\label{cheeger_polygon}
\end{proposition}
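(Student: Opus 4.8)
The plan is to argue by Steiner symmetrization about well-chosen lines through the origin, following the trick of P\'olya and Szeg\H{o} that for polygons with few sides the symmetrization axis can be picked so as not to increase the number of sides. The only analytic ingredient is the inequality already used in the proof of \Cref{cheeger-radial} (see \Cref{steiner}): for every measurable $E$ with $0<P(E)<+\infty$ and every $\nu\in\mathbb{S}^1$, if $E^s_\nu$ denotes the Steiner symmetrization of $E$ about the line $\mathbb{R}\nu$ through the origin, then $\mathcal{J}(E)\le\mathcal{J}(E^s_\nu)$, with equality if and only if $|E\triangle E^s_\nu|=0$. Before invoking it I would reduce to convex polygons: for $n\le 4$ the convex hull of a simple polygon with at most $n$ sides again lies in $\mathcal{P}_n$ (the hull of a simple triangle or of a convex quadrilateral is itself; that of a non-convex simple quadrilateral is a triangle), while passing to the hull strictly increases $\int_E\eta$ when $\eta>0$ and $E$ is non-convex, and never increases $P(E)$. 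Hence a non-convex simple quadrilateral is strictly beaten by its hull, so every maximizer of $\mathcal{J}$ over $\mathcal{P}_n$ (which exists by \Cref{ex_cheeger_polygon}) is a convex polygon with at most $n$ vertices.

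For $n=3$, let $T$ be such a maximizer, a nondegenerate triangle with edges $e_1,e_2,e_3$, and for each $i$ let $\nu_i$ be a unit normal of $e_i$. In coordinates in which $\mathbb{R}\nu_i$ is the horizontal axis, $e_i$ is vertical and the opposite vertex has a different abscissa, so the vertical width of $T$ is affine on the abscissa range of $T$ and vanishes at one end; hence $T^s_{\nu_i}$ is again a triangle, so $T^s_{\nu_i}\in\mathcal{P}_3$ and the maximality of $T$ forces $|T\triangle T^s_{\nu_i}|=0$, i.e. $T$ is symmetric about $\mathbb{R}\nu_i$. A reflection about a line normal to $e_i$ that preserves $T$ must swap the endpoints of $e_i$ and fix the opposite vertex, so $\mathbb{R}\nu_i$ is the perpendicular bisector of $e_i$ through that vertex, i.e. $T$ is isosceles with base $e_i$. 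Applying this for $i=1,2,3$ shows $T$ is equilateral, and its three axes of symmetry (the medians) all pass through the origin, so the centroid of $T$ is the origin; hence $T$ is regular and inscribed in a circle centered at the origin.

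For $n=4$, let $Q$ be a maximizer; it is convex, and if it is a triangle we conclude as above, so assume $Q$ is a genuine convex quadrilateral with diagonals $d_1=[p_1,p_3]$ and $d_2=[p_2,p_4]$. Let $\ell_1$ be the line through the origin normal to $d_1$. In coordinates in which $\ell_1$ is horizontal, $d_1$ is vertical, $p_2$ and $p_4$ lie strictly on opposite sides of it by convexity, and the vertical width of $Q$ is affine on each side of the abscissa of $d_1$ and vanishes at the two ends; hence $Q^s_{\ell_1}$ is a (kite-shaped) quadrilateral in $\mathcal{P}_4$, so by maximality $Q=Q^s_{\ell_1}$ and $Q$ is symmetric about $\ell_1$. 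The reflection about $\ell_1$ maps the line $\overline{p_1p_3}$ (perpendicular to $\ell_1$) to itself, hence swaps $p_1,p_3$ (so $\ell_1$ is the perpendicular bisector of $d_1$), and it maps $\{p_2,p_4\}$ to itself; it cannot swap $p_2,p_4$, for that would make $\ell_1\perp d_2$, hence $d_1\parallel d_2$, impossible since the diagonals of a convex quadrilateral meet. So it fixes $p_2$ and $p_4$, meaning $\ell_1$ carries $d_2$ and $d_1\perp d_2$. Doing the same with the line $\ell_2$ through the origin normal to $d_2$ shows $\ell_2$ carries $d_1$. Therefore $d_1$ and $d_2$ lie on the two perpendicular lines $\ell_2$ and $\ell_1$, whose only common point is the origin; since the diagonals of $Q$ intersect, the origin is their intersection point, and since $\ell_1$ is the perpendicular bisector of $d_1$ while containing $d_2$, the midpoint of $d_1$ lies on $d_2$ and is therefore the origin, and likewise for $d_2$. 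Thus $Q$ is a parallelogram centered at the origin with perpendicular diagonals, i.e. a rhombus centered at the origin. Finally, let $\ell_3$ be the line through the origin normal to the edge $[p_1,p_2]$: since $Q$ is a parallelogram, in coordinates where $\ell_3$ is horizontal both $[p_1,p_2]$ and its opposite edge $[p_3,p_4]$ are vertical and bound the abscissa range, and in between $Q$ is bounded by two parallel edges, so its vertical width is constant; hence $Q^s_{\ell_3}$ is a rectangle in $\mathcal{P}_4$ and $Q=Q^s_{\ell_3}$. So $Q$ is simultaneously a rhombus and a rectangle, i.e. a square, centered at the origin, hence regular and inscribed in a circle centered at the origin.

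The \emph{genuinely} delicate point is the choice of symmetrization axis in the P\'olya--Szeg\H{o} step: normal to an edge for triangles, normal to a diagonal for quadrilaterals, and normal to an edge only once the quadrilateral is known to be a rhombus. Indeed, Steiner-symmetrizing an arbitrary convex quadrilateral about a line normal to one of its edges typically produces a pentagon and leaves $\mathcal{P}_4$, which is exactly the obstruction noted in the text. Everything else is elementary plane geometry (the shape of the width function in the chosen coordinates, the fact that a reflection preserving a polygon permutes its vertices, and that the diagonals of a convex quadrilateral cross transversally), and the only place where strict radial monotonicity of $\varphi$ is used is the equality case of the Steiner inequality.
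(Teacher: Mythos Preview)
Your argument is correct and follows essentially the same Steiner-symmetrization route as the paper: perpendicular to the sides for triangles, perpendicular to the diagonals for quadrilaterals (yielding a rhombus centered at the origin), then perpendicular to a side once the rhombus structure is in hand (yielding a square). The only noticeable differences are cosmetic. First, you open with a reduction to convex polygons via the convex hull; the paper instead asserts directly that the Steiner symmetrization of any simple quadrilateral about a line perpendicular to one of its diagonals is again a simple quadrilateral, which covers the non-convex case as well---your reduction is therefore not needed, but it is valid and arguably makes the width-function computation cleaner. Second, you spell out the vertex-permutation analysis (why the reflection about $\ell_1$ must swap $p_1,p_3$ and fix $p_2,p_4$) more carefully than the paper, which simply concludes that the maximizer is symmetric with respect to every such line and hence a rhombus. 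Both versions lead to the same conclusion with the same key ingredient, the strict-inequality case of \Cref{steiner}.
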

\begin{proof}~\medskip

	\noindent\textbf{Triangles:} let $E^*$ be a maximizer of $\mathcal{J}$ among triangles. Then the Steiner symmetrization of $E^*$ with respect to any of its heights through the origin (see \Cref{fig:steiner_triangles}) is still a triangle, and \Cref{steiner} ensures it has a higher energy except if this operation leaves $E^*$ unchanged. As a consequence, $E^*$ must be symmetric with respect to all its heights through the origin. This shows $E^*$ is equilateral and inscribed in a circle centered at the origin.

\medskip
\noindent\textbf{Quadrilaterals:} we notice that if $E$ is a simple quadrilateral, then its Steiner symmetrization with respect to any line perpendicular to one of its diagonals (see \Cref{fig:steiner_quadrilaterals}) is still a simple quadrilateral. We can then proceed exactly as for triangles to prove any maximizer~$E^*$ of~$\mathcal{J}$ \revision{over $\mathcal{P}_4$} is symmetric with respect to every line through the origin and perpendicular to one of its diagonals. This shows $E^*$ is a rhombus centered at the origin. We can now symmetrize with respect to any line through the origin perpendicular to one of its sides to finally obtain that $E^*$ must be a square centered at the origin.
\end{proof}

Our proof does not extend to $n\geq 5$, but  the following conjecture is natural:
\begin{conjecture}\label{conj-polygon}
	The result stated in \Cref{cheeger_polygon} holds for all~${n\geq 3}$.
\end{conjecture}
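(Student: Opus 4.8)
The plan is to extend the symmetrization argument used for $n\in\{3,4\}$ in \Cref{cheeger_polygon}. First I would reduce to convex polygons: if $E\in\mathcal{P}_n$ is not convex then $\mathrm{conv}(E)\in\mathcal{P}_n$ (its vertices form a subset of those of $E$), and since $\varphi>0$ and $|\mathrm{conv}(E)\setminus E|>0$ we get $\int_{\mathrm{conv}(E)}\varphi>\int_{E}\varphi$ while $P(\mathrm{conv}(E))\le P(E)$, so $\mathcal{J}(\mathrm{conv}(E))>\mathcal{J}(E)$; hence every maximizer is convex. Let $E^*$ be a maximizer, a convex polygon with $k\le n$ vertices. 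As for $k=3,4$, the goal would then be to produce a finite list of Steiner symmetrizations with respect to lines through the origin, each determined by the current shape (perpendicular to a side, to a diagonal, \dots), such that (i) each symmetrization keeps the polygon inside $\mathcal{P}_n$, hence by \Cref{steiner} cannot increase $\mathcal{J}$, so that $E^*$ must be a common fixed point of all of them, and (ii) the only polygons fixed by the whole list are the regular $k$-gons centered at the origin (for triangles this is achieved by the three heights; for quadrilaterals by first the two perpendiculars to the diagonals, which force a rhombus, then the two perpendiculars to the sides, which force a square). Once $E^*$ is known to be a regular $k$-gon centered at the origin the statement follows; for a sharper result one may additionally observe that $\sup_{\mathcal{P}_m}\mathcal{J}$ is strictly increasing in $m$ — pushing the midpoint of one edge of a maximizer outward by $\epsilon$ changes the perimeter only to order $\epsilon^2$ but increases $\int\varphi$ to order $\epsilon$ — so that in fact $k=n$, but this is not needed for the conjecture as stated.

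The main obstacle is step (i) when $n\ge 5$: there is no evident Steiner symmetrization preserving $\mathcal{P}_n$. After rotating so that the symmetrization axis is horizontal, the symmetral of a convex $k$-gon is the region between $\pm g/2$, where $g$ is the (concave) vertical width; the breakpoints of $g$ are inherited from those of the upper and lower boundary chains, and for $k\ge 5$ they do not cancel in general, so new vertices appear and the symmetral leaves $\mathcal{P}_n$. For $k=3$ one chain degenerates to a single vertex, and for $k=4$ symmetrizing across a perpendicular to a diagonal makes the breakpoints of the two chains coincide and cancel; neither phenomenon survives for $k\ge 5$. This is exactly the Pólya--Szegő difficulty alluded to before the conjecture. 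Overcoming it would require either identifying, for each convex $k$-gon, a clever finite sequence of symmetrization directions along which the side count never exceeds $n$ and whose only common fixed points are regular $k$-gons centered at the origin, or replacing Steiner symmetrization by a genuinely side-number-preserving rearrangement (in the spirit of Pólya and Szegő, for instance iterating the local move that replaces three consecutive vertices $x_{j-1},x_j,x_{j+1}$ by $x_{j-1},x_j',x_{j+1}$, with $x_j'$ chosen on a level line of $\varphi$ so as to maximize $\mathcal{J}$) together with a proof that such moves strictly increase $\mathcal{J}$ unless the configuration is already regular.

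As a fallback I would attempt a route that avoids symmetrization and argues directly from the first-order optimality condition \eqref{polygon_opt_cheeger}: a convex-polygon maximizer satisfies $w_j^+=w_j^-=c\,\tan(\theta_j/2)$ for all $j$, with $c=\int_{E^*}\varphi/P(E^*)$, $\theta_j$ the $j$-th exterior angle, and $\sum_j\theta_j=2\pi$. The aim would be to show, using the positivity and radial monotonicity of $\varphi$, that this nonlinear system together with convexity forces $E^*$ to be, up to rotation, the regular $k$-gon centered at the origin, and then that among the critical points of $\mathcal{J}$ over $\mathcal{P}_k$ this one is the global maximizer (e.g.\ by a second-variation computation, or by comparing with nearby polygons inscribed in the same circle). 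I expect the rigidity of this system — in particular decoupling the radial positions of the vertices from the angular data — to be the crux of that approach, and it seems at least as delicate as the symmetrization obstruction above.
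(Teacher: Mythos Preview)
The statement you are attempting to prove is a \emph{conjecture} in the paper, not a theorem: the authors explicitly write that their proof ``does not extend to $n\geq 5$'' and leave the general case open. There is therefore no proof in the paper to compare against.

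Your proposal is not a proof either, and you are candid about this. You correctly isolate the central obstruction --- Steiner symmetrization of a convex $k$-gon with $k\geq 5$ generically increases the number of vertices, so the argument used for $n\in\{3,4\}$ does not stay inside $\mathcal{P}_n$ --- and this is exactly the difficulty the paper flags by citing P\'olya--Szeg\H{o}. Your convexity reduction is sound, and the observation that $\sup_{\mathcal{P}_m}\mathcal{J}$ is strictly increasing in $m$ is a nice aside, but neither circumvents the main issue.

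The two routes you sketch (a side-number-preserving rearrangement in the P\'olya--Szeg\H{o} spirit, or rigidity of the first-order system \eqref{polygon_opt_cheeger}) are reasonable research directions, but as you yourself note, neither is carried out. In particular, for the optimality-condition approach you would need to rule out non-regular critical polygons, and the paper's \Cref{critical_poly_cheeger} (which does this for $n\in\{3,4\}$) is itself proved via the same symmetrization trick and is likewise only conjectured for general $n$ (Conjecture~\ref{critical_poly_cheeger_conj}). So that fallback currently rests on an open problem of comparable difficulty.

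In short: your write-up accurately diagnoses why the conjecture is open and matches the paper's own assessment, but it does not close the gap.
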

For $n \in \{3,4\}$ or if Conjecture \ref{conj-polygon} holds, it remains to  compare the optimal polygons with $B(0,R^*)$. If we define~${\mathcal{G}(R)\eqdef \mathcal{J}(B(0,R))}$ and
$\mathcal{G}_n(R)$ the value of $\mathcal{J}$ at any regular $n$-gon inscribed in a circle of radius $R$ centered at the origin, then we can state the following result (its proof is given in \Cref{proof_cheeger_polygon_radius}):
\begin{proposition}
	\label{cheeger_polygon_radius}
	Under Assumption 1, we have that \begin{equation*}||\mathcal{G}_n-\mathcal{G}||_{\infty}=O\left(\frac{1}{n^2}\right).\end{equation*}
	Moreover, if $f$ is of class~$C^2$ and $f''(\rho_0)<0$ , then for $n$ large enough $\mathcal{G}_n$ has a unique maximizer $R^*_n$ and \begin{equation*}
	|R^*_n-R^*|=O\left(\frac{1}{n}\right).
\end{equation*}
	If $\varphi$ is the function defined by
	\begin{equation*}
	\varphi:x\mapsto\text{exp}\left(-||x||^2/(2\sigma^2)\right),
 	\end{equation*}
	then this last result holds for all $n\geq 3$.
\end{proposition}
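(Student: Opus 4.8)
The plan is to reduce everything to a one–variable computation over the circumradius. Set $\beta=\pi/n$ and let $E_n(R)$ be the regular $n$–gon of circumradius $R$ centred at the origin, so $P(E_n(R))=2nR\sin\beta$. Since $\varphi$ is radial and positive, decomposing $E_n(R)$ into $n$ congruent triangles with apex at the origin and passing to polar coordinates gives, with $\Phi_0(s)\eqdef\int_0^s f(r)\,dr$,
\begin{equation*}
   \mathcal{G}_n(R)=\frac{1}{R\sin\beta}\int_0^{\beta}\Phi_0\!\left(\frac{R\cos\beta}{\cos\theta}\right)\!d\theta ,\qquad \mathcal{G}(R)=\frac{\Phi_0(R)}{R}.
\end{equation*}
Differentiating yields the convenient identities
\begin{equation*}
   \mathcal{G}_n'(R)=\frac{1}{R^2\sin\beta}\int_0^{\beta}\psi\!\left(\frac{R\cos\beta}{\cos\theta}\right)\!d\theta ,\qquad \mathcal{G}'(R)=\frac{\psi(R)}{R^2},\qquad \psi(s)\eqdef sf(s)-\Phi_0(s).
\end{equation*}
Since $\psi(0)=0$ and $\psi'(s)=sf'(s)$, Assumption~1 makes $\psi$ strictly increasing on $(0,\rho_0)$, strictly decreasing on $(\rho_0,+\infty)$, with limit $-\int_0^{+\infty}f<0$ (or $-\infty$) at infinity; hence $\psi>0$ on $(0,R^*)$, $\psi(R^*)=0$, $\psi<0$ on $(R^*,+\infty)$, where $R^*>\rho_0$ is the maximiser of $\mathcal{G}$ from \Cref{cheeger-radial}. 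I also record that Assumption~1 forces $\|f\|_\infty<+\infty$ and therefore $\|\mathcal{G}\|_\infty\le\|f\|_\infty$.

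For the uniform bound $\|\mathcal{G}_n-\mathcal{G}\|_\infty=O(1/n^2)$, I would avoid Taylor expansions and instead use the geometric sandwiching $B(0,R\cos\beta)\subseteq E_n(R)\subseteq B(0,R)$: since $\varphi>0$, dividing $\int_{E_n(R)}\varphi$ by $P(E_n(R))=2nR\sin\beta$ gives
\begin{equation*}
   \cos\beta\,\frac{\beta}{\sin\beta}\,\mathcal{G}(R\cos\beta)\ \le\ \mathcal{G}_n(R)\ \le\ \frac{\beta}{\sin\beta}\,\mathcal{G}(R).
\end{equation*}
Now $\frac{\beta}{\sin\beta}-1=O(1/n^2)$ and $1-\cos\beta=O(1/n^2)$, and from $\mathcal{G}'(s)=(f(s)-\mathcal{G}(s))/s$ one gets $|\mathcal{G}'(s)|\le 2\|f\|_\infty/s$, whence $|\mathcal{G}(R\cos\beta)-\mathcal{G}(R)|\le 2\|f\|_\infty\log\sec\beta=O(1/n^2)$ uniformly in $R$. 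Combining these with $\|\mathcal{G}\|_\infty<+\infty$ gives the estimate.

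For the second part, the decisive remark is that for $\theta\in[0,\beta]$ the argument $R\cos\beta/\cos\theta$ lies in $[R\cos\beta,R]$. With the sign pattern of $\psi$ this immediately gives $\mathcal{G}_n'(R)>0$ for every $R\le R^*$ and $\mathcal{G}_n'(R)<0$ for every $R\ge R^*/\cos\beta$. As $\mathcal{G}_n$ is continuous, positive somewhere, and tends to $0$ at $0$ and at $+\infty$, it attains its maximum, and every maximiser lies in $I_n\eqdef\big(R^*,\,R^*/\cos(\pi/n)\big)$, an interval of length $R^*(\sec(\pi/n)-1)=O(1/n^2)$. This already yields $R^*<R_n^*<R^*/\cos(\pi/n)$, hence $|R_n^*-R^*|=O(1/n^2)$, a fortiori $O(1/n)$, independently of uniqueness. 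To obtain uniqueness I would differentiate once more: on $I_n$ the quantity $R\cos\beta/\cos\theta$ stays in $[R^*\cos\beta,\,R^*/\cos\beta]$, a neighbourhood of $R^*$ shrinking as $n\to\infty$; since $\psi'(R^*)=R^*f'(R^*)<0$ and $\psi'$ is continuous (here $f\in C^1$ from Assumption~1 already suffices, the extra hypotheses $f\in C^2$, $f''(\rho_0)<0$ only making the estimate more comfortable), for $n$ large $\psi'<0$ on that neighbourhood, so $R\mapsto R^2\sin\beta\,\mathcal{G}_n'(R)=\int_0^{\beta}\psi(R\cos\beta/\cos\theta)\,d\theta$ is strictly decreasing on $I_n$ and passes from positive to negative; it has a unique zero $R_n^*$, which together with the sign information outside $I_n$ makes $\mathcal{G}_n$ strictly increasing then strictly decreasing, so $R_n^*$ is its unique maximiser.

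Finally, for the Gaussian one computes $\rho_0=\sigma$, $\psi(s)=(s^2+\sigma^2)e^{-s^2/(2\sigma^2)}-\sigma^2$, $\psi'(s)=\sigma^{-2}s(\sigma^2-s^2)e^{-s^2/(2\sigma^2)}$, while $\psi(R^*)=0$ reduces to $e^{u}=1+2u$ with $u=(R^*)^2/(2\sigma^2)$, so $R^*\approx1.585\,\sigma$. Since $\cos(\pi/n)\ge\cos(\pi/4)>\sigma/R^*$ for $n\ge4$, the neighbourhood $[R^*\cos(\pi/n),\,R^*/\cos(\pi/n)]$ stays inside $(\sigma,+\infty)$ where $\psi'<0$, and the uniqueness argument runs verbatim for every $n\ge4$; the case $n=3$ must be treated by a direct one–variable estimate, since then $[R^*/2,\,2R^*]$ straddles $\rho_0=\sigma$, $\psi'$ changes sign on it, and monotonicity of the integrand is lost — one checks by hand that $R\mapsto\int_0^{\pi/3}\psi\big(\tfrac{R}{2\cos\theta}\big)\,d\theta$ still vanishes exactly once on $I_3=(R^*,2R^*)$. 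The main obstacle is precisely this last point: locating $R_n^*$ and getting the $O(1/n)$ (in fact $O(1/n^2)$) rate are essentially free from the sandwiching $R^*<R_n^*<R^*\sec(\pi/n)$, but proving \emph{uniqueness} of the maximiser when the localising interval is not small — in particular the Gaussian with $n=3$ — requires a genuinely quantitative control of the sign of $\mathcal{G}_n'$, which cannot come from a soft perturbative argument because $\mathcal{G}'$ vanishes at both ends of $(0,+\infty)$.
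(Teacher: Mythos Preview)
Your argument is correct and in several respects cleaner than the paper's, though the overall strategy---polar coordinates, reduction to a one-variable problem in $R$, and a sign analysis of the derivative---is the same.

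The substantive differences are these. For the uniform bound the paper estimates the difference of the two integrals directly and bounds it by $\|f\|_\infty$ times $(1-\cos(\pi/n))+\bigl(1-\tfrac{\tan(\pi/n)}{\pi/n}\bigr)$; your geometric sandwiching $B(0,R\cos\beta)\subset E_n(R)\subset B(0,R)$ gives the same order with less bookkeeping. For the localisation and rate, the paper rewrites $\mathcal{G}_n(R)$ as $\tfrac{\pi/n}{\tan(\pi/n)}\cdot\tfrac{1}{R}\int_0^R f_n$, shows $f_n'$ changes sign once on an interval around $\rho_0$ (this is where $f\in C^2$ and $f''(\rho_0)<0$ enter), and then invokes the implicit function theorem to get $|R_n^*-R^*|=O(1/n^2)$. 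You instead work with $\psi(s)=sf(s)-\Phi_0(s)$ and its sign change at $R^*$: this immediately traps every maximiser in $(R^*,R^*\sec\beta)$, giving the $O(1/n^2)$ rate for free, and uniqueness follows from $\psi'(R^*)=R^*f'(R^*)<0$ and continuity of $\psi'$---so you are right that $f\in C^1$ (already in Assumption~1) suffices, and the extra hypotheses in the statement are not used by your route. This is a genuine simplification.

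On the Gaussian claim for all $n\ge 3$: your treatment of $n\ge 4$ via $\cos(\pi/4)>\sigma/R^*$ is clean, and you are honest that $n=3$ needs a direct check because the localising interval $[R^*/2,2R^*]$ straddles $\rho_0=\sigma$. The paper's own appendix does not spell out the Gaussian case either---its argument needs $f''<0$ on $[\rho_0/2,2\rho_0]$, which fails for the Gaussian since $f''(2\sigma)>0$---so neither proof is complete on this point without an explicit computation for the triangle.
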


Now, the output of our method for approximating Cheeger sets, described in \Cref{implementation}, is a polygon that is obtained by locally maximizing $\mathcal{J}$ using a first order method. Even if we carefully
initialize this first order method, the possible existence of non-optimal
critical points makes
its analysis challenging. However, in our setting (a
radial weight function), the simple polygons that are critical points\footnote{We recall that critical point is here to be understood
in the sense that the limit appearing in \eqref{steepest_descent} is equal to
zero for every $\theta$.} of $\mathcal{J}$ coincide with its
global maximizers \revision{over $\mathcal{P}_n$} (at least for small $n$). The proof of this result is given in \Cref{critical_poly_cheeger_proof}.
\begin{proposition}
   \begin{sloppypar}
	Let $n\in\{3,4\}$. Under Assumption 1, if $f$ is of class $C^2$ and~$f''(\rho_0)<0$, the \revision{elements of $\mathcal{P}_n$} that are critical points of $\mathcal{J}$ are the regular $n$-gons inscribed in the circle of radius $R^*_n$ centered at the origin.
   \label{critical_poly_cheeger}
   \end{sloppypar}
\end{proposition}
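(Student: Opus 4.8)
The plan is to (i) rewrite criticality as the explicit first-order system \eqref{polygon_opt_cheeger}, (ii) upgrade the Steiner symmetrization argument of \Cref{cheeger_polygon} to a \emph{local} statement valid at critical points, and (iii) pin down the circumradius via \Cref{cheeger_polygon_radius}. For (i): by the shape-gradient computation of \Cref{shape_grad} behind \eqref{cheeger_update}--\eqref{steepest_descent}, a polygon $E_x\in\mathcal{P}_n$ is a critical point of $\mathcal{J}$ if and only if the gradient of $\mathcal{J}$ in the vertices vanishes, that is,
\begin{equation*}
w_j^+=w_j^-=\frac{\int_{E_x}\varphi}{P(E_x)}\,\tan\!\Big(\tfrac{\theta_j}{2}\Big),\qquad j=1,\dots,n .
\end{equation*}
Since $\varphi>0$, the weights $w_j^{\pm}$ and the common ratio $\int_{E_x}\varphi/P(E_x)$ are strictly positive, so $\tan(\theta_j/2)>0$, every exterior angle $\theta_j$ lies in $(0,\pi)$, and $E_x$ is necessarily convex. (One also rules out degenerate configurations: if a vertex lay on the relative interior of an edge, pushing it outward would increase $\int_{E_x}\varphi$ while leaving $P(E_x)$ stationary to first order, so $E_x$ is a genuine convex $n$-gon.)

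\textbf{A critical point is a regular $n$-gon inscribed in a circle centered at the origin.} I would run the symmetrization steps from the proof of \Cref{cheeger_polygon} — with respect to the line $\ell$ through the origin orthogonal to a side when $n=3$, and to a diagonal then a side when $n=4$ — but replace each Steiner symmetrization by an explicit continuous path inside $\mathcal{P}_n$. In coordinates in which $\ell$ is the first axis, write $E_x=\{a\le p\le b,\ c(p)-\tfrac{1}{2}\delta(p)\le y\le c(p)+\tfrac{1}{2}\delta(p)\}$ with $c$ and $\delta\ge0$ piecewise affine, and set
\begin{equation*}
E_t\eqdef\big\{(p,y): a\le p\le b,\ |y-(1-t)c(p)|\le\tfrac{1}{2}\delta(p)\big\},\qquad t\in[0,1],
\end{equation*}
so that $E_0=E_x$, $E_1$ is the symmetrization of $E_x$, $|E_t|$ is constant, and $E_t$ has the same number of vertices as $E_x$. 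This last point is exactly where $n\le4$ enters — for $n\ge5$ Steiner symmetrization creates new vertices, which is the obstruction reflected in \Cref{conj-polygon}. A slicewise convexity estimate gives $\frac{d}{dt}P(E_t)\big|_{t=0}\le0$, and the unimodality of $y\mapsto\varphi(p,y)$, which holds because $\ell$ contains the origin, gives $\frac{d}{dt}\int_{E_t}\varphi\,\big|_{t=0}\ge0$, both inequalities strict unless $c\equiv0$, i.e.\ unless $E_x$ is already symmetric with respect to $\ell$. Hence $\frac{d}{dt}\mathcal{J}(E_t)\big|_{t=0}>0$ whenever $E_x$ is not symmetric with respect to $\ell$, contradicting criticality; so $E_x$ is symmetric with respect to every such $\ell$, and the elementary geometry in the proof of \Cref{cheeger_polygon} then forces $E_x$ to be a regular $n$-gon inscribed in some circle $\partial B(0,R)$.

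\textbf{Identification of the radius, and the converse inclusion.} Along the dilations $s\mapsto(1+s)E_x$, which remain regular $n$-gons inscribed in circles centered at the origin, one has $\mathcal{J}=\mathcal{G}_n((1+s)R)$, so criticality of $E_x$ forces $\mathcal{G}_n'(R)=0$; under Assumption 1 with $f\in C^2$ and $f''(\rho_0)<0$, \Cref{cheeger_polygon_radius} (and the variation analysis of $\mathcal{G}_n$ in its proof) shows $R^*_n$ is the unique critical point of $\mathcal{G}_n$, whence $R=R^*_n$. Conversely, any regular $n$-gon inscribed in $\partial B(0,R^*_n)$ is, by \Cref{cheeger_polygon} and \Cref{cheeger_polygon_radius}, a global maximizer of $\mathcal{J}$ over $\mathcal{P}_n$ attained at an interior point of $\mathcal{P}_n$, hence a critical point.

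\textbf{Main obstacle.} The delicate step is the middle one: constructing the deformation $E_t$ so that it provably stays in $\mathcal{P}_n$ and rigorously establishing the one-sided bounds $\frac{d}{dt}P(E_t)\big|_{0}\le0$ and $\frac{d}{dt}\int_{E_t}\varphi\,\big|_{0}\ge0$ with the correct strictness. The restriction $n\in\{3,4\}$ is precisely the range of $n$ for which this deformation exists, which is why the general case remains only a conjecture.
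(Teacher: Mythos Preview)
Your proposal is correct and follows essentially the same approach as the paper. The heart of both arguments is the continuous Steiner symmetrization path---your $E_t$ with center $(1-t)c(p)$ is, up to reparametrization, exactly the paper's $T_\theta$ (resp.\ $Q_\theta$) built from $u_\theta=(1-\theta)u-\theta v$, $v_\theta=(1-\theta)v-\theta u$---and both conclude by showing the derivative of the weighted area and of the perimeter at $t=0$ have the right signs, with strictness unless the polygon is already symmetric with respect to the chosen axis. Your preliminary step (i) establishing convexity from \eqref{polygon_opt_cheeger} and your explicit step (iii) identifying $R=R_n^*$ via $\mathcal{G}_n'$ are natural additions that the paper's proof leaves implicit (the paper's parametrization via the diagonal decomposition for $n=4$ sidesteps the need for convexity, which is why it does not appear there).
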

We make the following conjecture:
\begin{conjecture}
   The result stated in \Cref{critical_poly_cheeger} holds for all $n\geq 3$.
	 \label{critical_poly_cheeger_conj}
\end{conjecture}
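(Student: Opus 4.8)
The plan is to work directly with the first-order optimality system defining a critical point of $\mathcal{J}$ over $\mathcal{P}_n$ — the vanishing of the steepest-ascent field \eqref{cheeger_update} — and to prove the rigidity statement that, for $\eta=\varphi$ radial and decreasing, this system has no solution other than the regular $n$-gons centered at the origin. Writing $c\eqdef\mathcal{J}(E_x)$ and $t_j\eqdef\tan(\theta_j/2)$, and projecting the vector equation $\theta_j=0$ onto the directions $\tau_{j-1}+\tau_j$ and $\tau_j-\tau_{j-1}$, one recovers \eqref{polygon_opt_cheeger} in the sharper form
\begin{equation}
   w_j^+=w_j^-=c\,t_j \qquad \text{for all } j\,,
   \label{crit_system}
\end{equation}
together with the structural identity $w_j^-=I_{j-1}-w_{j-1}^+$, where $I_j\eqdef\int_{[x_j,x_{j+1}]}\varphi\,d\mathcal{H}^1$ is the total $\varphi$-mass of the edge. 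Eliminating the weights gives the edge-mass relation
\begin{equation}
   I_j=c\,(t_j+t_{j+1}) \qquad \text{for all } j\,,
   \label{crit_mass}
\end{equation}
while the individual identities in \eqref{crit_system} additionally fix the $\varphi$-weighted barycenter of each edge in terms of $(t_j,t_{j+1})$. Together these play the role of a \emph{discrete prescribed-curvature condition}, the polygonal analogue of \eqref{opt_cheeger}, and the whole problem becomes: show that \eqref{crit_system} forces $E_x$ regular, centered, and of radius $R_n^*$.

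A first, essentially free, observation is that every critical polygon is \textbf{convex}. Indeed $\varphi>0$ forces $w_j^+>0$, and $c=\mathcal{J}(E_x)>0$, so \eqref{crit_system} gives $t_j>0$, i.e. $\theta_j\in(0,\pi)$ at every vertex; since the exterior angles of a simple polygon sum to $2\pi$, all of them being positive means $E_x$ has no reflex vertex and is convex. This already excludes self-intersecting or star-shaped critical configurations and makes comparison arguments for convex curves available. I would next establish a priori compactness: combining convexity with the edge-mass relation \eqref{crit_mass}, the barycenter condition, and the decay $rf(r)\to0$ (with $f(r)=r\,\tilde\varphi(r)$), one confines the vertices to a fixed annulus bounded away from the origin, so that \eqref{crit_system} is non-degenerate. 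Heuristically, the discrete analogue of the tangent comparison at the vertices of extremal radius should yield $f(\max_j\|x_j\|)\geq c\geq f(\min_j\|x_j\|)$.

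The core is the rigidity step, for which I would adapt the method of moving planes (in the spirit of Gidas--Ni--Nirenberg) to the discrete condition \eqref{crit_system}, \emph{instead of} Steiner symmetrization, whose tendency to increase the number of sides is exactly what blocks the passage to $n\geq5$ in \Cref{cheeger_polygon} and \Cref{critical_poly_cheeger}. Fixing a line $\Delta$ through the origin with reflection $\sigma_\Delta$, and sliding $\Delta$ over all directions of $\mathbb{S}^1$, the aim is to show — through a discrete maximum principle comparing the angles and edge-masses of $E_x$ and $\sigma_\Delta(E_x)$ via \eqref{crit_mass} — that $E_x$ is invariant under $\sigma_\Delta$ for every such $\Delta$. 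Invariance under the full group of reflections fixing the origin forces $E_x$ to be a regular $n$-gon centered at the origin. The radius is then pinned down by specializing \eqref{crit_mass} to the equal angles $\theta_j=2\pi/n$: it reduces to $\mathcal{G}_n'(R)=0$, whose unique solution $R_n^*$ is provided (for $n$ large in general, and for all $n\geq3$ in the Gaussian case) by \Cref{cheeger_polygon_radius}.

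The hard part will be the discrete maximum principle underlying the moving-plane step. Unlike the smooth setting, the reflected polygon $\sigma_\Delta(E_x)$ generally shares no vertex with $E_x$, so the comparison of discrete curvatures through \eqref{crit_system} is genuinely combinatorial and must track how reflected edges overlap the original ones; controlling this requires monotonicity of the edge-mass $I_j$ and of the $\varphi$-barycenter under the relevant reflections, which is where the \emph{strict} radial monotonicity of $\varphi$ must be exploited. A second difficulty is that, because $f$ is non-monotone under Assumption 1 (increasing then decreasing), the naive extremal-vertex inequality above only traps the radii in an annulus and cannot by itself conclude $\max_j\|x_j\|=\min_j\|x_j\|$; it is precisely to bypass this that the moving-plane machinery — or, equivalently, a global symmetrization that stays within $\mathcal{P}_n$ — seems unavoidable. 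Producing such a vertex-count--preserving symmetry argument is, I expect, the genuine crux of the conjecture for arbitrary $n$, with the convexity and compactness reductions above serving as the secure starting point.
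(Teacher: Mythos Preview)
The statement you are trying to prove is labeled \emph{Conjecture} in the paper (\Cref{critical_poly_cheeger_conj}), and the paper offers no proof of it; the authors establish the result only for $n\in\{3,4\}$ (via Steiner symmetrization applied to the continuous family $T_\theta$, $Q_\theta$ in \Cref{critical_poly_cheeger_proof}) and explicitly leave the general case open. There is therefore nothing in the paper to compare your proposal against.

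What you have written is not a proof either, and you are candid about this: the moving-plane step is presented as a programme, not an argument, and you yourself identify the discrete maximum principle as the ``genuine crux'' still to be produced. Your preliminary observations are sound --- the convexity deduction from $w_j^+>0$ and $c>0$ is correct and useful, and the edge-mass relation \eqref{crit_mass} is the right reformulation of the criticality condition --- but these reductions do not close the gap. The difficulty you flag is real: the paper's method for $n\leq 4$ works because Steiner symmetrization along a height (triangles) or a perpendicular to a diagonal (quadrilaterals) preserves the vertex count, and no such canonical direction is available for $n\geq 5$. Your proposed substitute, a discrete moving-plane argument, is a reasonable line of attack, but until the comparison of $(t_j,I_j)$ between $E_x$ and $\sigma_\Delta(E_x)$ is actually carried out --- with a mechanism for handling the mismatch of vertex positions under reflection --- the conjecture remains open, in your write-up as in the paper.
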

If $n\in\{3,4\}$, or if Conjecture \ref{critical_poly_cheeger_conj} holds, we may therefore expect our polygonal approximation to be at Hausdorff distance of order $O\left(\frac{1}{n}\right)$ to $B(0,R^*)$.

\section{Conclusion}

As shown in the present exploratory work, solving total variation regularized inverse problems in a gridless manner is highly beneficial, as it allows to preserve structural properties of their solutions, which cannot be achieved by traditional numerical solvers. The price to pay for going ``off-the-grid'' is an increased complexity of the analysis and the implementation of the algorithms. Furthering their theoretical study and improving their practical efficiency and reliability is an interesting avenue for future research. \revision{Investigating extensions to higher dimensions (e.g. 3D) could also be promising. Although the computational cost of each step might be large, it seems that the proposed algorithm could be transposed to this new setting.}

\begin{acknowledgements}
The authors thank Robert Tovey for carefuly reviewing the code used in the numerical experiments section, and for suggesting several modifications that significantly improved the results presented therein.
\end{acknowledgements}

\bibliography{ref}

\begin{thebibliography}{}

\bibitem[Allaire et~al., 2021]{allaireChapterShapeTopology2021}
Allaire, G., Dapogny, C., and Jouve, F. (2021).
\newblock Chapter 1 - {{Shape}} and topology optimization.
\newblock In Bonito, A. and Nochetto, R.~H., editors, {\em Handbook of
  {{Numerical Analysis}}}, volume~22 of {\em Geometric {{Partial Differential
  Equations}} - {{Part II}}}, pages 1--132. {Elsevier}.

\bibitem[Alter et~al., 2005]{alterEvolutionCharacteristicFunctions2005}
Alter, F., Caselles, V., and Chambolle, A. (2005).
\newblock Evolution of characteristic functions of convex sets in the plane by
  the minimizing total variation flow.
\newblock {\em Interfaces and Free Boundaries}, 7(1):29--53.

\bibitem[Ambrosio et~al., 2001]{ambrosioConnectedComponentsSets2001}
Ambrosio, L., Caselles, V., Masnou, S., and Morel, J.-M. (2001).
\newblock Connected components of sets of finite perimeter and applications to
  image processing.
\newblock {\em Journal of the European Mathematical Society}, 3(1):39--92.

\bibitem[Ambrosio et~al., 2000]{ambrosioFunctionsBoundedVariation2000}
Ambrosio, L., Fusco, N., and Pallara, D. (2000).
\newblock {\em Functions of {{Bounded Variation}} and {{Free Discontinuity
  Problems}}}.
\newblock Oxford {{Mathematical Monographs}}. {Oxford University Press},
  {Oxford, New York}.

\bibitem[Bartels et~al., 2021]{bartelsSingularSolutionsGraded2021}
Bartels, S., Tovey, R., and Wassmer, F. (2021).
\newblock Singular solutions, graded meshes, and adaptivity for total-variation
  regularized minimization problems.

\bibitem[Boyd et~al., 2017]{boydAlternatingDescentConditional2017}
Boyd, N., Schiebinger, G., and Recht, B. (2017).
\newblock The {{Alternating Descent Conditional Gradient Method}} for {{Sparse
  Inverse Problems}}.
\newblock {\em SIAM Journal on Optimization}, 27(2):616--639.

\bibitem[Boyer et~al., 2019]{boyerRepresenterTheoremsConvex2019}
Boyer, C., Chambolle, A., De~Castro, Y., Duval, V., {de Gournay}, F., and
  Weiss, P. (2019).
\newblock On {{Representer Theorems}} and {{Convex Regularization}}.
\newblock {\em SIAM Journal on Optimization}, 29(2):1260--1281.

\bibitem[Bredies and Carioni, 2019]{brediesSparsitySolutionsVariational2019}
Bredies, K. and Carioni, M. (2019).
\newblock Sparsity of solutions for variational inverse problems with
  finite-dimensional data.
\newblock {\em Calculus of Variations and Partial Differential Equations},
  59(1):14.

\bibitem[Bredies and Pikkarainen, 2013]{brediesInverseProblemsSpaces2013}
Bredies, K. and Pikkarainen, H.~K. (2013).
\newblock Inverse problems in spaces of measures.
\newblock {\em ESAIM: Control, Optimisation and Calculus of Variations},
  19(1):190--218.

\bibitem[Cand{\`e}s and Fernandez-Granda,
  2014]{candesMathematicalTheorySuperresolution2014}
Cand{\`e}s, E.~J. and Fernandez-Granda, C. (2014).
\newblock Towards a {{Mathematical Theory}} of {{Super-resolution}}.
\newblock {\em Communications on Pure and Applied Mathematics}, 67(6):906--956.

\bibitem[Carlier et~al., 2009]{carlierApproximationMaximalCheeger2009}
Carlier, G., Comte, M., and Peyr{\'e}, G. (2009).
\newblock Approximation of maximal {{Cheeger}} sets by projection.
\newblock {\em ESAIM: Mathematical Modelling and Numerical Analysis},
  43(1):139--150.

\bibitem[Castro et~al., 2017]{castroExactSolutionsSuper2017}
Castro, Y.~D., Gamboa, F., Henrion, D., and Lasserre, J. (2017).
\newblock Exact {{Solutions}} to {{Super Resolution}} on {{Semi-Algebraic
  Domains}} in {{Higher Dimensions}}.
\newblock {\em IEEE Transactions on Information Theory}, 63(1):621--630.

\bibitem[Chambolle et~al., 2016]{chambolleGeometricPropertiesSolutions2016}
Chambolle, A., Duval, V., Peyr{\'e}, G., and Poon, C. (2016).
\newblock Geometric properties of solutions to the total variation denoising
  problem.
\newblock {\em Inverse Problems}, 33(1):015002.

\bibitem[Chambolle and Pock, 2011]{chambolleFirstOrderPrimalDualAlgorithm2011}
Chambolle, A. and Pock, T. (2011).
\newblock A {{First-Order Primal-Dual Algorithm}} for {{Convex Problems}}
  with~{{Applications}} to {{Imaging}}.
\newblock {\em Journal of Mathematical Imaging and Vision}, 40(1):120--145.

\bibitem[Chambolle and Pock, 2021]{chambolleChapterApproximatingTotal2021}
Chambolle, A. and Pock, T. (2021).
\newblock Chapter 6 - {{Approximating}} the total variation with finite
  differences or finite elements.
\newblock In Bonito, A. and Nochetto, R.~H., editors, {\em Handbook of
  {{Numerical Analysis}}}, volume~22 of {\em Geometric {{Partial Differential
  Equations}} - {{Part II}}}, pages 383--417. {Elsevier}.

\bibitem[Condat, 2016]{condatFastProjectionSimplex2016a}
Condat, L. (2016).
\newblock Fast projection onto the simplex and the $\ell_1$ ball.
\newblock {\em Mathematical Programming}, 158(1):575--585.

\bibitem[Condat, 2017]{condatDiscreteTotalVariation2017}
Condat, L. (2017).
\newblock Discrete {{Total Variation}}: {{New Definition}} and
  {{Minimization}}.
\newblock {\em SIAM Journal on Imaging Sciences}, 10(3):1258--1290.

\bibitem[Dautray and Lions, 2012]{dautrayMathematicalAnalysisNumerical2012}
Dautray, R. and Lions, J.-L. (2012).
\newblock {\em Mathematical {{Analysis}} and {{Numerical Methods}} for
  {{Science}} and {{Technology}}: {{Volume}} 1 {{Physical Origins}} and
  {{Classical Methods}}}.
\newblock {Springer Science \& Business Media}.

\bibitem[Denoyelle et~al., 2019]{denoyelleSlidingFrankWolfeAlgorithm2019}
Denoyelle, Q., Duval, V., Peyre, G., and Soubies, E. (2019).
\newblock The {{Sliding Frank-Wolfe Algorithm}} and its {{Application}} to
  {{Super-Resolution Microscopy}}.
\newblock {\em Inverse Problems}.

\bibitem[Duval, 2022]{hdr}
Duval, V. (2022).
\newblock {\em Faces and extreme points of convex sets for the resolution of
  inverse problems}.
\newblock Habilitation \`a diriger des recherches.
\newblock In preparation.

\bibitem[Fleming, 1957]{flemingFunctionsGeneralizedGradient1957}
Fleming, W.~H. (1957).
\newblock Functions with generalized gradient and generalized surfaces.
\newblock {\em Annali di Matematica Pura ed Applicata}, 44(1):93--103.

\bibitem[Giusti, 1984]{giustiMinimalSurfacesFunctions1984}
Giusti (1984).
\newblock {\em Minimal {{Surfaces}} and {{Functions}} of {{Bounded
  Variation}}}.
\newblock Monographs in {{Mathematics}}. {Birkh\"auser Basel}.

\bibitem[Henrot and Pierre, 2018]{henrotShapeVariationOptimization2018}
Henrot, A. and Pierre, M. (2018).
\newblock {\em Shape {{Variation}} and {{Optimization}} : A {{Geometrical
  Analysis}}}.
\newblock Number~28 in Tracts in {{Mathematics}}. {European Mathematical
  Society}.

\bibitem[Hormann and Agathos, 2001]{hormannPointPolygonProblem2001}
Hormann, K. and Agathos, A. (2001).
\newblock The point in polygon problem for arbitrary polygons.
\newblock {\em Computational Geometry}, 20(3):131--144.

\bibitem[Iglesias et~al., 2018]{iglesiasNoteConvergenceSolutions2018}
Iglesias, J.~A., Mercier, G., and Scherzer, O. (2018).
\newblock A note on convergence of solutions of total variation regularized
  linear inverse problems.
\newblock {\em Inverse Problems}, 34(5):055011.

\bibitem[Jaggi, 2013]{jaggiRevisitingFrankWolfeProjectionFree2013}
Jaggi, M. (2013).
\newblock Revisiting {{Frank-Wolfe}}: {{Projection-Free Sparse Convex
  Optimization}}.
\newblock In {\em International {{Conference}} on {{Machine Learning}}}, pages
  427--435. {PMLR}.

\bibitem[Maggi, 2012]{maggiSetsFinitePerimeter2012}
Maggi, F. (2012).
\newblock {\em Sets of {{Finite Perimeter}} and {{Geometric Variational
  Problems}}: {{An Introduction}} to {{Geometric Measure Theory}}}.
\newblock Cambridge {{Studies}} in {{Advanced Mathematics}}. {Cambridge
  University Press}, {Cambridge}.

\bibitem[Ongie and Jacob, 2016]{ongieOfftheGridRecoveryPiecewise2016}
Ongie, G. and Jacob, M. (2016).
\newblock Off-the-{{Grid Recovery}} of {{Piecewise Constant Images}} from {{Few
  Fourier Samples}}.
\newblock {\em SIAM Journal on Imaging Sciences}, 9(3):1004--1041.

\bibitem[Parini, 2011]{pariniIntroductionCheegerProblem2011}
Parini, E. (2011).
\newblock An introduction to the {{Cheeger}} problem.
\newblock {\em Surveys in Mathematics and its Applications}, 6:9--21.

\bibitem[P{\'o}lya and Szeg{\"o},
  1951]{polyaIsoperimetricInequalitiesMathematical1951a}
P{\'o}lya, G. and Szeg{\"o}, G. (1951).
\newblock {\em Isoperimetric {{Inequalities}} in {{Mathematical Physics}}.
  ({{AM-27}})}.
\newblock {Princeton University Press}.

\bibitem[Rao et~al., 2015]{raoForwardBackwardGreedy2015}
Rao, N., Shah, P., and Wright, S. (2015).
\newblock Forward\textendash{{Backward Greedy Algorithms}} for {{Atomic Norm
  Regularization}}.
\newblock {\em IEEE Transactions on Signal Processing}, 63(21):5798--5811.

\bibitem[Rockafellar and Wets, 1998]{rockafellarVariationalAnalysis1998}
Rockafellar, R.~T. and Wets, R. J.-B. (1998).
\newblock {\em Variational {{Analysis}}}.
\newblock Grundlehren Der Mathematischen {{Wissenschaften}}. {Springer-Verlag},
  {Berlin Heidelberg}.

\bibitem[Rudin et~al., 1992]{rudinNonlinearTotalVariation1992}
Rudin, L.~I., Osher, S., and Fatemi, E. (1992).
\newblock Nonlinear total variation based noise removal algorithms.
\newblock {\em Physica D: Nonlinear Phenomena}, 60(1):259--268.

\bibitem[Tabti et~al., 2018]{tabtiSymmetricUpwindScheme2018}
Tabti, S., Rabin, J., and Elmoata, A. (2018).
\newblock Symmetric {{Upwind Scheme}} for {{Discrete Weighted Total
  Variation}}.
\newblock In {\em 2018 {{IEEE International Conference}} on {{Acoustics}},
  {{Speech}} and {{Signal Processing}} ({{ICASSP}})}, pages 1827--1831.

\bibitem[Viola et~al., 2012]{violaUnifyingResolutionindependentFormulation2012}
Viola, F., Fitzgibbon, A., and Cipolla, R. (2012).
\newblock A unifying resolution-independent formulation for early vision.
\newblock In {\em 2012 {{IEEE Conference}} on {{Computer Vision}} and {{Pattern
  Recognition}}}, pages 494--501.

\end{thebibliography}
\bibliographystyle{apalike}

\appendix

\section{Derivation of Algorithm \ref{sfw}}
\label{derivation_algo}

See \cite[Sec. 4.1]{denoyelleSlidingFrankWolfeAlgorithm2019} for the case of the sparse spikes problem.

\begin{lemma} Problem
  \eqref{prob} is equivalent to
  \begin{equation}
     \begin{aligned}
      & \underset{(u,t)\in C}{\text{min}} &&
      \tilde{T}_{\lambda}(u,t)\overset{def}{=}\frac{1}{2}||\Phi u-y||^2 +\lambda t \\
     \end{aligned}
     \label{prob_epi}
     \tag{$\tilde{\mathcal{P}}_{\lambda}$}
  \end{equation}
  with \begin{equation*}C\eqdef\left\{(u,t)\in
  \LD\times\mathbb{R}\,\big\rvert\, |\Diff u|(\RR^2)\leq t\leq
  M \right\}.\end{equation*} and $M\eqdef||y||^2/(2\lambda)$,
  i.e. if $u$ is a solution of \eqref{prob} then we have that ${(u,|\Diff u|(\mathbb{R}^2))}$ is a solution of
  \eqref{prob_epi}, and conversely any solution of \eqref{prob_epi} is of the form $(u,|\Diff u|(\mathbb{R}^2))$ with $u$ a solution of \eqref{prob}.
\end{lemma}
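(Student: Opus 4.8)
The plan is to reduce \eqref{prob_epi} to the obvious epigraphical reformulation of \eqref{prob} in two steps. First, temporarily drop the upper bound $t\le M$ and work with $C'\eqdef\{(u,t)\in\LD\times\RR \mid \abs{\Diff u}(\RR^2)\le t\}$: since $\lambda>0$, for a fixed $u$ the map $t\mapsto \tfrac12\norm{\Phi u-y}^2+\lambda t$ is strictly increasing, so its minimum over $\{t\mid \abs{\Diff u}(\RR^2)\le t\}$ is attained exactly at $t=\abs{\Diff u}(\RR^2)$, with value $T_\lambda(u)$. This immediately gives that $(u,t)$ minimizes $\tilde T_\lambda$ over $C'$ if and only if $t=\abs{\Diff u}(\RR^2)$ and $u$ minimizes $T_\lambda$. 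Second, I would show that the constraint $t\le M$ is inactive, i.e. it can be added to $C'$ without changing the minimizers, because every competitor that can possibly be optimal satisfies $\abs{\Diff u}(\RR^2)\le M$.

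The a priori bound is the only real ingredient. Comparing any solution $u_\lambda$ of \eqref{prob} with the admissible point $u=0$ gives $T_\lambda(u_\lambda)\le T_\lambda(0)=\tfrac12\norm{y}^2$, whence $\lambda\,\abs{\Diff u_\lambda}(\RR^2)\le\tfrac12\norm{y}^2$, that is $\abs{\Diff u_\lambda}(\RR^2)\le\norm{y}^2/(2\lambda)=M$, so $(u_\lambda,\abs{\Diff u_\lambda}(\RR^2))\in C$. Symmetrically, for a solution $(u,t)$ of \eqref{prob_epi}, comparison with $(0,0)\in C$ yields $\tfrac12\norm{\Phi u-y}^2+\lambda t\le\tfrac12\norm{y}^2$; together with $\abs{\Diff u}(\RR^2)\le t$ this forces $\abs{\Diff u}(\RR^2)\le M$, and the strict monotonicity in $t$ forces $t=\abs{\Diff u}(\RR^2)$.

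It then remains to check the two implications spelled out in the statement. If $u$ solves \eqref{prob}, then $(u,\abs{\Diff u}(\RR^2))\in C$ by the bound above, and for every $(v,s)\in C$ one has $\tilde T_\lambda(v,s)\ge\tfrac12\norm{\Phi v-y}^2+\lambda\,\abs{\Diff v}(\RR^2)=T_\lambda(v)\ge T_\lambda(u)=\tilde T_\lambda(u,\abs{\Diff u}(\RR^2))$, so $(u,\abs{\Diff u}(\RR^2))$ solves \eqref{prob_epi}. Conversely, if $(u,t)$ solves \eqref{prob_epi}, then $t=\abs{\Diff u}(\RR^2)$ as noted, and to see that $u$ solves \eqref{prob} I would split an arbitrary competitor $v\in\LD$ into cases: if $\abs{\Diff v}(\RR^2)=+\infty$ then $T_\lambda(v)=+\infty$; if $\abs{\Diff v}(\RR^2)\le M$ then $(v,\abs{\Diff v}(\RR^2))\in C$ and optimality of $(u,t)$ gives $T_\lambda(v)\ge T_\lambda(u)$; and if $M<\abs{\Diff v}(\RR^2)<+\infty$ then $T_\lambda(v)\ge\lambda\,\abs{\Diff v}(\RR^2)>\lambda M=\tfrac12\norm{y}^2\ge T_\lambda(u)$, again using $(0,0)\in C$. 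Hence $u$ minimizes $T_\lambda$.

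There is no substantial obstacle here: the argument is entirely elementary, resting on the single comparison inequality $T_\lambda(\,\cdot\,)\le\tfrac12\norm{y}^2$ at the minimum, and the only mild care needed is the bookkeeping of competitors with large or infinite total variation. The specific value $M=\norm{y}^2/(2\lambda)$ is not needed for the equivalence (any bound $\ge M$ would do); it is chosen so that $C$ is bounded in the $t$-variable and, via the two–dimensional Sobolev embedding $\norm{u}_{\LD}\le c\,\abs{\Diff u}(\RR^2)$, weakly compact — which is exactly the hypothesis required to run the Frank–Wolfe machinery of \Cref{sec_fw} on \eqref{prob_epi}.
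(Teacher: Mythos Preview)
Your proof is correct and follows the same approach as the paper: the key step in both is the comparison $T_\lambda(u^*)\le T_\lambda(0)=\tfrac12\|y\|^2$ to obtain the a priori bound $|\Diff u^*|(\RR^2)\le M$, after which the epigraphical lift is routine. You have simply spelled out in full the case analysis that the paper dismisses as ``straightforward,'' and your closing remark on weak compactness of $C$ correctly identifies why the specific value of $M$ matters for the Frank--Wolfe framework.
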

\begin{proof}
  If $u^*$ is a solution of \eqref{prob},
  then \begin{equation*}T_{\lambda}(u^*)\leq T_{\lambda}(0)=||y||^2/2\,.\end{equation*}
  Hence we have that $|\Diff u^*|(\RR^2)\leq
  M$, which shows the feasible set of
  \eqref{prob} can be restricted to functions $u$ which are such that~${|\Diff u|(\RR^2)\leq
  M}$. It is then straightforward to show
  that the resulting program is equivalent to \eqref{prob_epi}, in the sense
  defined above.
\end{proof}

The objective $\widetilde{T}_{\lambda}$ of \eqref{prob_epi} is now convex,
differentiable and we have for all $(u,t)\in \LD\times\RR$
\begin{equation*}
   \begin{aligned}
   d\tilde{T}_{\lambda}(u,t)\colon \LD\times \mathbb{R} &\to \mathbb{R} \\
   (v,s) & \mapsto \left[\int_{\RR^2} \Phi^*(\Phi u-y)\,v\right]+\lambda s\,.
   \end{aligned}
\end{equation*}
Moreover, the feasible set $C$ is weakly compact. We can therefore apply Frank-Wolfe algorithm to \eqref{prob_epi}. The following result shows how the linear minimization step (Line 2 of Algorithm \ref{fw}) one has to perform at step $k$ amounts to solving the Cheeger problem \eqref{cheeger} associated to $\eta\eqdef-\frac{1}{\lambda}\Phi^*(\Phi u^{[k]}-y)$.

\begin{proposition}
   Let $(u,t)\in C$ and $\eta\eqdef-\frac{1}{\lambda}\Phi^*(\Phi u-y)$. We also denote
   \begin{equation}
       \alpha\eqdef \underset{E \subset{\mathbb{R}^2} }{\text{sup}} ~ \frac{\left|\int_E\eta\right|}{P(E)}~~\text{s.t.}~~ 0<|E|<+\infty,~P(E)<+\infty\,.
     \label{cheeger_bis}
   \end{equation}
   Then, if $\alpha\leq 1$, $(0,0)$ is a minimizer of $d\tilde{T}_{\lambda}(u,t)$ over $C$. Otherwise, there exists a simple set $E$ achieving the supremum in \eqref{cheeger_bis} such that, denoting $\epsilon=\text{sign}\left(\int_{E}\eta\right)$,
   $\left(\frac{\epsilon M}{P(E)}\mathbf{1}_E, M\right)$ is a minimizer of $d\tilde{T}_{\lambda}(u,t)$ on $C$.
\end{proposition}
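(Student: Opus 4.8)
The plan is to use the affine structure of $d\tilde{T}_{\lambda}(u,t)$ together with the positive homogeneity of the total variation. Since $\Phi^*(\Phi u-y)=-\lambda\eta$, the linear form reads $d\tilde{T}_{\lambda}(u,t)[(v,s)]=\lambda\bigl(s-\int_{\RR^2}\eta\,v\bigr)$, so that, up to the positive factor $\lambda$, minimizing it over $C$ amounts to minimizing $s-\int_{\RR^2}\eta\,v$ subject to $|\Diff v|(\RR^2)\le s\le M$. For a fixed admissible $v$ this is increasing in $s$, hence optimal at $s=|\Diff v|(\RR^2)$, and the problem reduces to minimizing $v\mapsto|\Diff v|(\RR^2)-\int_{\RR^2}\eta\,v$ over $\{v\in\LD:|\Diff v|(\RR^2)\le M\}$. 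I would first record the identity $\alpha=\sup\{\,|\int_{\RR^2}\eta\,v|:|\Diff v|(\RR^2)\le1\,\}$, which follows by a scaling argument from the two characterizations of $\partial J(0)$ given above (equivalently, $\alpha\le1\iff\eta\in\partial J(0)$), and note that, as already used in \Cref{sec_fw} via Fleming's theorem and the Krein--Milman theorem applied to the weakly compact total variation unit ball, this supremum is attained at an extreme point $\pm\mathbf{1}_E/P(E)$ with $E$ a simple set; any such $E$ realizes the supremum in \eqref{cheeger_bis}.

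If $\alpha\le1$, then $\eta\in\partial J(0)$ gives $\int_{\RR^2}\eta\,v\le|\Diff v|(\RR^2)$ for every $v$, so the reduced objective is nonnegative and vanishes at $v=0$; hence $(0,0)$ is a minimizer of $d\tilde{T}_{\lambda}(u,t)$ over $C$. If $\alpha>1$, I would take a simple set $E$ attaining the supremum in \eqref{cheeger_bis}, observe that $\int_E\eta\ne0$ (because $\alpha>1>0$), so that $\epsilon=\mathrm{sign}(\int_E\eta)$ is well defined, and set $v^*=\frac{\epsilon M}{P(E)}\mathbf{1}_E$, which satisfies $|\Diff v^*|(\RR^2)=M$ and hence $(v^*,M)\in C$. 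The key estimate is that for any admissible $v\ne0$, writing $v=|\Diff v|(\RR^2)\,w$ with $|\Diff w|(\RR^2)=1$, the reduced objective equals $|\Diff v|(\RR^2)\bigl(1-\int_{\RR^2}\eta\,w\bigr)\ge|\Diff v|(\RR^2)(1-\alpha)\ge M(1-\alpha)$, using $\int_{\RR^2}\eta\,w\le\alpha$, $1-\alpha<0$ and $|\Diff v|(\RR^2)\le M$; the value $0$ at $v=0$ also exceeds $M(1-\alpha)$. Since the reduced objective at $v^*$ equals $M-\frac{\epsilon M}{P(E)}\int_E\eta=M(1-\alpha)$, the pair $(v^*,M)=\bigl(\frac{\epsilon M}{P(E)}\mathbf{1}_E,M\bigr)$ attains the minimum, which is the claimed conclusion.

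The computations are elementary once the reduction to a one-variable, positively homogeneous problem is in place; the only genuinely delicate point is the bridge between the geometric ratio problem \eqref{cheeger_bis} over sets and the relaxed problem over the total variation unit ball --- that is, the identity $\alpha=\sup_{|\Diff v|(\RR^2)\le1}|\int_{\RR^2}\eta\,v|$ together with the existence of a \emph{simple} maximizer --- but this is exactly Fleming's description of the extreme points combined with the Krein--Milman theorem, already invoked in \Cref{sec_fw}, so no new obstacle arises here.
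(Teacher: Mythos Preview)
Your proof is correct. The approach differs from the paper's in the following way: the paper argues directly via the extreme points of $C$, asserting that these are $(0,0)$ and the points $\bigl(\pm\frac{M}{P(E)}\mathbf{1}_E,M\bigr)$ for $E$ simple, and then invokes the fact that a linear functional on a compact convex set attains its minimum at an extreme point; the comparison of values among extreme points is then essentially the same Cheeger computation you carry out. You instead first optimize in $s$ for fixed $v$ (optimal $s=|\Diff v|(\RR^2)$), reduce to the one-variable positively homogeneous problem $v\mapsto|\Diff v|(\RR^2)-\int\eta\,v$, and analyze it by the scaling identity $\alpha=\sup_{|\Diff v|(\RR^2)\le1}\int\eta\,v$.

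Both routes ultimately rest on the same external ingredient --- Fleming's description of the extreme points of the TV ball together with Krein--Milman / weak compactness, already invoked in \Cref{sec_fw} --- to produce a \emph{simple} set attaining the supremum in \eqref{cheeger_bis}. Your reduction is arguably more self-contained because it avoids having to identify the extreme points of the epigraphical set $C$ itself (which the paper states without proof); the paper's approach is more geometric and makes the Frank--Wolfe structure explicit. Either way the computations are equivalent and the delicate step is exactly the one you flag.
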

\begin{proof}
The extreme points of $C$ are $(0,0)$ and the elements of
\begin{equation*}
	\left\{\left(\pm\frac{M}{P(E)}\mathbf{1}_E,M\right)~\bigg|~E \text{ is simple},~0<|E|<+\infty\right\}.
\end{equation*}
Since
$d\tilde{T}_\lambda(u,t)$ is linear, it reaches its minimum on $C$ at least at
one of these extreme points. We hence have that \begin{equation*}
(0,0)\in\underset{(v,s)\in
C}{\text{Argmin}}~d\tilde{T}_{\lambda}(u,t)(v,s)\end{equation*}
or that a minimizer can be
found by finding an element of
\begin{equation*}
\begin{aligned}
   &\underset{\substack{E\text{ simple}\\\epsilon\in\{-1,1\}}}{\text{Argmin}}~
   \left\langle\Phi u-y,
   \frac{\epsilon M}{P(E)}\Phi\mathbf{1}_E\right\rangle+\lambda M\\
=&\underset{\substack{E\text{ simple}\\\epsilon\in\{-1,1\}}}{\text{Argmin}}~
\left\langle\Phi u-y,\frac{\epsilon}{\lambda
P(E)}\Phi\mathbf{1}_E\right\rangle\\
=&\underset{\substack{E\text{ simple}\\\epsilon\in\{-1,1\}}}{\text{Argmin}}~
\frac{\epsilon}{P(E)}
\int_E \frac{1}{\lambda}\Phi^*\left(\Phi u-y\right).
\end{aligned}
\end{equation*}
This last problem is equivalent to finding an element of
\begin{equation*}
\underset{E\text{
simple}}{\text{Argmax}}~~\frac{1}{P(E)}\left|\int_E \eta\right|,
\end{equation*}
in the sense that $E^*$ is optimal for the latter if and only if the couple $\left(E^*,\text{sign}\left(\int_{E^*}\eta\right)\right)$ is optimal for the
former. We can moreover show that $(0,0)$ is optimal if and only if for all~${E\subset\RR^2}$ such that $0<|E|<+\infty$ and $P(E)<+\infty$ we have:\begin{equation*}
\frac{1}{P(E)}\left|\int_{E}\eta\right|\leq 1\,.
\end{equation*}
\end{proof}

\section{Discussion on Line 10 of Algorithms \ref{sfw} and \ref{sfw_sliding}}
\label{lasso_amplitudes}
\revision{Consdering \Cref{derivation_algo} and Algorithm \ref{fw}, the standard Frank-Wolfe update at iteration $k$ would be to take $u^{[k+1]}$ equal to~${\tilde{u}^{[k+1]}}$ with:
\begin{equation*}
   \tilde{u}^{[k+1]}\eqdef(1-\gamma_k)\, u^{[k]}+\gamma_k\frac{M\,\epsilon_*}{P(E_*)}\mathbf{1}_{E_*}\,,
\end{equation*}
where ${\gamma_k=\frac{2}{k+2}}$, $E_*$ is the set obtained at Line 5  and $\epsilon_*$ is the sign of~$\int_{E_*}\eta^{[k]}$. Now, one can write $\tilde{u}^{[k+1]}$ as a linear combination of indicator functions of its level sets, and then apply the decomposition
mentionned in \Cref{preli_sets} to each level set. This allows to
find a family $(E_i)_{i=1}^N$ of simple sets of positive measure and $(a_i)_{i=1}^N\in\RR^N$ such that~${\tilde{u}^{[k+1]}=\sum_{i=1}^N a_i\,\mathbf{1}_{E_i}}$ and
\begin{equation*}
\left|\Diff\left(\sum\limits_{i=1}^N a_i\,\mathbf{1}_{E_i}\right)\right|(\mathbb{R}^2)=\sum\limits_{i=1}^N |a_i|\,P(E_i)\,.
\end{equation*}
Moreover, it is possible to prove (see \cite{hdr}) that for every~$i\neq j$
\begin{enumerate}
   \item Either $E_i\subset E_j$, $E_j\subset E_i$ or $E_i\cap E_j=\emptyset$.
   \item If $\mathrm{sign}(a_i)=\mathrm{sign}(a_j)$ and $E_i\cap E_j=\emptyset$ then it holds that~${\mathcal{H}^1(\partial^* E_i\cap \partial^* E_j)=0}$.
   \item If $\mathrm{sign}(a_i)=-\mathrm{sign}(a_j)$ and $E_i\subset E_j$ then it holds again that~${\mathcal{H}^1(\partial^* E_i\cap \partial^* E_j)=0}$.
\end{enumerate}
We hence deduce that for every $b\in\RR^N$ such that\begin{equation*}\forall i\in\{1,...,N\},~\mathrm{sign}(a_i)=\mathrm{sign}(b_i)\,,\end{equation*}we have:
\begin{equation}
   \left|\Diff\left(\sum\limits_{i=1}^N b_i\,\mathbf{1}_{E_i}\right)\right|(\mathbb{R}^2)=\sum\limits_{i=1}^N |b_i|\,P(E_i)\,.
\end{equation}
This shows that if $E^{[k+1]}=(E_1,...,E_N)$ and
\begin{equation}
  \begin{aligned}
    a^{[k+1]}\in~& \underset{b \in \RR^N}{\text{Argmin}} && \frac{1}{2}||\Phi_E\,b -y||^2+\lambda\,\sum\limits_{i=1}^N P(E_i)\,|b_i| \\
    & \text{s.t.} && \forall i\in\{1,...,N\},~\mathrm{sign}(b_i)=\mathrm{sign}(a_i)\,,
  \end{aligned}
\end{equation}
then, defining $u^{[k+1]}=\sum_{i=1}^N a_i^{[k+1]}\,\mathbf{1}_{E_i^{[k+1]}}$, we finally obtain~${T_{\lambda}(u^{[k+1]})\leq T_{\lambda}(\tilde{u}^{[k+1]})}$, which ensures the validity of this update.

As a final note, let us mention that applying the decomposition mentionned in \Cref{preli_sets} to the level sets of $\tilde{u}^{[k+1]}$ is a computationally challenging task. However, we stress again that, generically,  $\mathcal{H}^1(\partial^*E_i\cap\partial^*E_j)=0$ for every $i\neq j$, so that the above procedure is never required in practice.}

\section{Existence of maximizers of the Cheeger ratio among simple polygons with at most $n$ sides}
\revision{
\label{ex_cheeger_polygon}
Let $\eta\in\LD\cap C^0(\RR^2)$ and $n\geq 3$. We want to prove the existence of maximizers of the Cheeger ratio $\mathcal{J}$ associated to~$\eta$ among simple polygons with at most $n$ sides. We will in fact prove a slightly stronger result, namely the existence of maximizers of a relaxed energy which coincides with $\mathcal{J}$ on simple polygons, and the existence of a simple polygon maximizing this relaxed energy.

We first begin by defining relaxed versions of the perimeter and the (weighted) area. To be able to deal with polygons with a number of vertices smaller than $n$, which will be useful in the following, we define for all $m\geq 2$ and ${x\in\RR^{m\times 2}}$ the following quantities:
\begin{equation*}
      \mathcal{P}(x)\eqdef\sum\limits_{i=1}^m \|x_{i+1}-x_i\| ~~\text{and}~~
      \mathcal{A}(x)\eqdef\int_{\RR^2}\eta\,\chi_x\,,
\end{equation*}
where $\chi_x(y)$ denotes the index (or winding number) of any parametrization of the polygonal curve~${[x_1,x_2],...,[x_m,x_1]}$ around $y\in\RR^2$. In particular, for every $x\in\mathcal{X}_m$ (i.e. for every~${x\in\RR^{m\times 2}}$ defining a simple polygon), we have
\begin{equation*}
      \mathcal{P}(x)=P(E_x) ~~\text{and}~~
      \left|\mathcal{A}(x)\right|=\left|\int_{E_x}\eta\right|,
\end{equation*}
and hence, as soon as $\mathcal{P}(x)>0$:
\begin{equation*}
\mathcal{J}(E_x)=\frac{|\mathcal{A}(x)|}{\mathcal{P}(x)}\,.
\end{equation*}
This naturally leads us to define
\begin{equation*}
   \mathcal{Y}_m\eqdef\left\{x\in\RR^{m\times 2}\,\big\rvert\,\mathcal{P}(x)>0\right\}
\end{equation*}
and to denote, abusing notation, $\mathcal{J}(x)=\left|\mathcal{A}(x)\right|/\mathcal{P}(x)$ for every~${x\in\mathcal{Y}_m}$.

The function $\chi_x$ is constant on each connected component of $\RR^2\setminus \Gamma_x$ with $\Gamma_x\eqdef\cup_{i=1}^m[x_i,x_{i+1}]$. It takes values in~${\{- m,...,m\}}$ and is equal to zero on the only unbounded connected component. We also have~${\partial\,\text{supp}(\chi_x)\subset \Gamma_x}$. Moreover $\chi_x$ has bounded variation and for $\mathcal{H}^1$-almost every~${y\in\Gamma_x}$
there exists $u_{\Gamma}^+(y),u_{\Gamma}^-(y)$ in $\{-m,...,m\}$ such that \begin{equation*}
\Diff \chi_x=(u_{\Gamma_x}^+ -u_{\Gamma_x}^-)\,\nu_{\Gamma_x}\mathcal{H}^1\mres \Gamma_x\,.
\end{equation*}

Now we define
\begin{equation*}
   \alpha\eqdef\underset{x\in\mathcal{Y}_n}{\text{sup}}\mathcal{J}(x)\,.
\end{equation*}
If $\eta=0$, then the existence of maximizers is trivial. Otherwise, there exists a Lebesgue point $x_0$ of $\eta$ at which $\eta$ is non-zero. Now the family of regular $n$-gons inscribed in any circle centered at $x_0$ has bounded eccentricity. Hence, if $x_{n,r}$ defines a regular $n$-gon inscribed in a circle of radius $r$ centered at $x_0$, Lebesgue differentiation theorem ensures that
\begin{equation*}
   \underset{r\to 0^+}{\text{lim}}~\frac{\left|\int_{E_{x_{n,r}}}\eta\right|}{|E_{x_{n,r}}|}>0\,,
\end{equation*}
and the fact that $\alpha>0$ easily follows.

\begin{lemma}
\label{lemma_bounds_polygon}
   Let $C>0$. There exists $R>0$ and $c>0$ such that
   \begin{equation*}
      \forall x\in\mathcal{Y}_n,~
      \mathcal{J}(x)\geq C\implies \mathcal{P}(x)\geq c \text{ and } \|x_i\|\leq R \text{ for all } i.
   \end{equation*}
\end{lemma}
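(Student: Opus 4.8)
The plan is to follow the proof of \Cref{lemma_bounds} almost verbatim, replacing the indicator $\mathbf{1}_E$ by the winding number $\chi_x$, the perimeter $P(E)$ by $\mathcal P(x)$, and using throughout two structural facts: that $|\chi_x|\le n$ everywhere, and that $\text{supp}(\chi_x)$ is a set of finite perimeter with $P(\text{supp}(\chi_x))\le\mathcal P(x)$, which holds because $\partial\,\text{supp}(\chi_x)\subset\Gamma_x$ and $\mathcal H^1(\Gamma_x)\le\mathcal P(x)$. Fix $x\in\mathcal Y_n$ with $\mathcal J(x)\ge C$, so that $|\mathcal A(x)|\ge C\,\mathcal P(x)>0$; in particular $\chi_x$ is not a.e.\ zero and $\text{supp}(\chi_x)$ has positive measure. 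The step I expect to be the main obstacle has no counterpart in \Cref{lemma_bounds}: there the atoms were simple sets satisfying $P(E)=\abs{\int_E\eta_k}$ and could not degenerate, whereas here an arbitrarily small polygon is a priori admissible, so one genuinely has to establish the \emph{lower} bound $\mathcal P(x)\ge c$.

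\textbf{Lower bound on the perimeter.} Combining $|\chi_x|\le n$, Cauchy--Schwarz, and the isoperimetric inequality $|\text{supp}(\chi_x)|\le\mathcal P(x)^2/(4\pi)$, one gets
\[
C\,\mathcal P(x)\le|\mathcal A(x)|\le n\Big(\int_{\text{supp}(\chi_x)}\eta^2\Big)^{1/2}\,|\text{supp}(\chi_x)|^{1/2}\le\frac{n}{2\sqrt\pi}\,\mathcal P(x)\,\Big(\int_{\text{supp}(\chi_x)}\eta^2\Big)^{1/2},
\]
so, after dividing by $\mathcal P(x)>0$, $\int_{\text{supp}(\chi_x)}\eta^2\ge 4\pi C^2/n^2$. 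Since $\eta^2\in\mathrm{L}^1(\RR^2)$, by absolute continuity of the integral there is $\delta>0$ with $\int_A\eta^2<4\pi C^2/n^2$ whenever $|A|<\delta$; hence $|\text{supp}(\chi_x)|\ge\delta$, and the isoperimetric inequality once more gives $\mathcal P(x)\ge 2\sqrt{\pi\delta}=:c$.

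\textbf{Upper bound on the perimeter.} This is the exact analogue of the first part of the proof of \Cref{lemma_bounds}. Since $\eta^2\in\mathrm{L}^1(\RR^2)$, fix $R_1>0$ with $\int_{\RR^2\setminus B(0,R_1)}\eta^2\le\epsilon^2$ where $\epsilon\eqdef\sqrt\pi\,C/n$, split $\mathcal A(x)=\int_{B(0,R_1)}\eta\,\chi_x+\int_{\RR^2\setminus B(0,R_1)}\eta\,\chi_x$, bound the first term by $n\,|B(0,R_1)|^{1/2}\norm{\eta}_{\mathrm{L}^2}$ and the second by $n\epsilon\,|\text{supp}(\chi_x)\setminus B(0,R_1)|^{1/2}\le\tfrac{n\epsilon}{2\sqrt\pi}\big(\mathcal P(x)+2\pi R_1\big)$, the last step using $P(\text{supp}(\chi_x)\setminus B(0,R_1))\le P(\text{supp}(\chi_x))+P(B(0,R_1))\le\mathcal P(x)+2\pi R_1$ together with the isoperimetric inequality. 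The choice of $\epsilon$ makes the term $\tfrac C2\,\mathcal P(x)$ absorbable, leaving $\mathcal P(x)\le C_1\eqdef\tfrac{2n}{C}|B(0,R_1)|^{1/2}\norm{\eta}_{\mathrm{L}^2}+2\pi R_1$.

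\textbf{Inclusion in a ball.} Fix $R_2>0$ with $\int_{\RR^2\setminus B(0,R_2)}\eta^2\le(\sqrt\pi\,C/n)^2$. If $\text{supp}(\chi_x)\cap B(0,R_2)$ were Lebesgue negligible, the Cauchy--Schwarz bound applied on $\text{supp}(\chi_x)\setminus B(0,R_2)$ would give $C\,\mathcal P(x)\le|\mathcal A(x)|\le\tfrac C2\,\mathcal P(x)$, contradicting $\mathcal P(x)>0$; hence there is $z\in\text{supp}(\chi_x)$ with $\norm z\le R_2$. As $\chi_x$ vanishes outside $\text{conv}(x_1,\dots,x_n)$ (a point outside this compact convex set lies, by a separating line, in the unbounded component of $\RR^2\setminus\Gamma_x$), we have $z\in\text{conv}(x_1,\dots,x_n)$, and since $\text{diam}\,\text{conv}(x_1,\dots,x_n)=\max_{i,j}\norm{x_i-x_j}\le\mathcal P(x)/2\le C_1/2$, every vertex satisfies $\norm{x_i}\le\norm{x_i-z}+\norm z\le C_1/2+R_2=:R$, which completes the proof.
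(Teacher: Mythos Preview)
Your proof is correct and follows essentially the same three-step structure as the paper's (upper perimeter bound, inclusion in a ball, lower perimeter bound), with only cosmetic differences: where the paper controls $\|\chi_x\|_{\mathrm{L}^2}$ via the Sobolev--Poincar\'e inequality $\|\chi_x\|_{\mathrm{L}^2}\le c_2^{-1/2}|\Diff\chi_x|(\RR^2)\le 2n\,c_2^{-1/2}\mathcal P(x)$, you instead use the pointwise bound $|\chi_x|\le n$ together with the isoperimetric inequality applied to $\mathrm{supp}(\chi_x)$, and your convex-hull argument for the vertex bound makes explicit a step the paper leaves implicit.
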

\begin{proof}
   The proof is similar to that of \Cref{lemma_bounds}.

   \medskip\noindent\textbf{Upper bound on the perimeter:}
   the integrability of $\eta^2$ yields that for every $\epsilon>0$ there exists $R_1>0$ such that \begin{equation}
   \int_{\RR^2\setminus B(0,R_1)}\eta^2\leq \epsilon^2\,.\label{cond_integ}\end{equation}
   Let $\epsilon>0$ and $R_1>0$ such that \eqref{cond_integ} holds. We have
   \begin{equation*}
      \begin{aligned}
         \mathcal{P}(x)&\leq \frac{1}{C}\left|\mathcal{A}(x)\right|\\
         &\leq \frac{1}{C}\left[\left|\int_{\RR^2\cap B(0,R)}\eta\,\chi_x\right|+\left|\int_{\RR^2\setminus B(0,R)}\eta\,\chi_x\right|\right]\\
         &\leq \frac{1}{C}\left[\|\eta\|_{\mathrm{L}^2}\,\|\chi_x\|_{\mathrm{L}^\infty}\,\sqrt{|B(0,R)|}+\epsilon\,\|\chi_x\|_{\mathrm{L}^2}\right]\\
         &\leq \frac{1}{C}\left[\|\eta\|_{\mathrm{L}^2}\,n\,\sqrt{|B(0,R)|}+\epsilon\,\frac{1}{\sqrt{c_2}}|\Diff\chi_x|(\RR^2)\right]\\
         &\leq \frac{1}{C}\left[\|\eta\|_{\mathrm{L}^2}\,n\,\sqrt{|B(0,R)|}+\epsilon\,\frac{2n}{\sqrt{c_2}} \mathcal{P}(x)\right].
      \end{aligned}
   \end{equation*}
   Now,
   taking
   \begin{equation*}
      \epsilon\eqdef \frac{C\,\sqrt{c_2}}{4n} ~~\text{and}~~ c'=\frac{2n}{C}\|\eta\|_{\mathrm{L}^2}\sqrt{|B(0,R)|}\,,
   \end{equation*} we finally get that~${\mathcal{P}(x)\leq c'}$.

   \medskip\noindent\textbf{Inclusion in a ball:} we take $\epsilon=\frac{\sqrt{c_2}}{4n}$ and fix~${R_2>0}$ such that $\int_{\mathbb{R}^2\setminus B(0,R_2)}\eta^2\leq \epsilon^2$.
   Let us show that
   \begin{equation*}
      \text{supp}(\chi_x)\cap B(0,R_2)\neq \emptyset\,.
   \end{equation*}
   By contradiction, if $\text{supp}(\chi_x)\cap B(0,R_2)= \emptyset$, we would have:
   \begin{equation*}
      \begin{aligned}
      \mathcal{P}(x)&\leq \frac{1}{C}\left|\mathcal{A}(x)\right|\\
      &=\frac{1}{C}
       \left|\int_{\RR^2\setminus B(0,R_2)}\eta\,\chi_x\right| \\
       &\leq \sqrt{\int_{\mathbb{R}^2\setminus B(0,R_2)}\eta^2}~\|\chi_x\|_{\mathrm{L}^2}\\
      &\leq\frac{\epsilon}{\sqrt{c_2}}|D\chi_x|(\RR^2)\leq \frac{2n\,\epsilon}{\sqrt{c_2}}\,\mathcal{P}(x)\,.
   \end{aligned}
   \end{equation*}
   Dividing by $\mathcal{P}(x)>0$ yields a contradiction. Now since \begin{equation*}
   \partial\, \text{supp}(\chi_x)\subset\Gamma_x\,,
   \end{equation*}
   we have $\text{diam}(\text{supp}(\chi_x))\leq\mathcal{P}(x)\leq c'$ which shows
   \begin{equation*}
   \text{supp}(\chi_x)\subset B(0,R) \text{ with } R\eqdef c'+R_2\,.
   \end{equation*}
   This in turn implies that $\|x_i\|\leq R$ for all $i$.

   \medskip\noindent\textbf{Lower bound on the perimeter:} the integrability of~$\eta^2$ shows that, for every $\epsilon>0$, there exists $\delta>0$ such that
   \begin{equation*}
      \forall E\subset\RR^2,~|E|\leq \delta\implies \left|\int_{E}\eta^2\right|\leq \epsilon^2\,.
   \end{equation*}
   Taking $\epsilon\eqdef C\,\sqrt{c_2}/2$, we obtain that if $|\text{supp}(\chi_x)|\leq\delta$
   \begin{equation*}
      \begin{aligned}
      \mathcal{P}(x)&\leq \frac{1}{C}\left|\mathcal{A}(x)\right|=\frac{1}{C}\left|\int_{\text{supp}(\chi_x)}\eta\right|\\
      &\leq\frac{1}{C}\sqrt{\int_{\text{supp}(
      \chi_x)}\eta^2}~\sqrt{|\text{supp}(\chi_x)|}\\
      &\leq \frac{\epsilon}{C\sqrt{c_2}}\,P(\text{supp}(\chi_x))\\
      &\leq \frac{\epsilon}{C\sqrt{c_2}}\,\mathcal{P}(x)\,,
      \end{aligned}
   \end{equation*}
   the last inequality holding because  $\partial\,\text{supp}(\chi_x)\subset\Gamma_x$. We get a contradiction since $\mathcal{P}(x)$ is positive.
\end{proof}
Applying \Cref{lemma_bounds_polygon} with e.g. $C=\alpha/2$, and defining
\begin{equation*}
   \mathcal{Y}'_n\eqdef\left\{x\in\RR^{n\times 2}\,\big\rvert\,\mathcal{P}(x)\geq c \text{ and } \|x_i\|\leq R\text{ for all }i\right\},
\end{equation*}
 we see that any maximizer of $\mathcal{J}$ over $\mathcal{Y}'_n$ (if it exists) is also a maximizer of $\mathcal{J}$ over $\mathcal{Y}_n$, and conversely.

\begin{lemma}
   Let $x\in\RR^{n\times 2}$. Then for every $a\in\RR^2$ we have
   \begin{equation*}
      \begin{aligned}
      &\mathcal{A}(x)=\sum\limits_{i=1}^n\mathrm{sign}(\mathrm{det}(x_i-a~x_{i+1}-a))\int_{a x_i x_{i+1}}\eta\\
      &=\sum\limits_{i=1}^n \mathrm{det}(x_i-a~x_{i+1}-a)\int_{T_1}\eta((x_i-a~x_{i+1}-a)\,y)\,dy\,,
   \end{aligned}
   \end{equation*}
   where $a x_i x_{i+1}$ denotes the triangle with vertices  $a,x_i,x_{i+1}$ and $T_1\eqdef\left\{(\alpha,\beta)\in\left(\RR_+\right)^2\,\big\rvert\,\alpha+\beta\leq 1\right\}$ is the unit triangle.
   \label{lemma_signed_area}
\end{lemma}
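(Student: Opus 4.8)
The plan is to reduce the computation of $\mathcal{A}(x)=\int_{\RR^2}\eta\,\chi_x$ to a \emph{fan decomposition} of the winding number $\chi_x$ as a signed sum of indicators of the triangles $a x_i x_{i+1}$, and then to evaluate each triangular integral by an affine change of variables.

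First I would fix $a\in\RR^2$, set $M_i\eqdef(x_i-a~~x_{i+1}-a)\in\RR^{2\times 2}$ for $i\in\{1,\dots,n\}$ (indices mod $n$, so $x_{n+1}=x_1$), and let $\sigma_i$ denote the closed polygonal loop $a\to x_i\to x_{i+1}\to a$. The key claim is that, as integer-valued $\mathrm{L}^1$ functions,
\begin{equation*}
\chi_x \;=\; \sum_{i=1}^n \chi_{\sigma_i} \;=\; \sum_{i=1}^n \mathrm{sign}(\det M_i)\,\mathbf{1}_{a x_i x_{i+1}} \qquad\text{a.e. on }\RR^2 .
\end{equation*}
For the first equality one observes that, as oriented $1$-chains, $\sum_{i=1}^n\sigma_i=\Gamma_x$, because the spoke segments telescope: $\sum_i([a,x_i]+[x_{i+1},a])=\sum_i[a,x_i]-\sum_i[a,x_{i+1}]=0$ by cyclic reindexing. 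Since the winding number around a point $y$ is additive on $1$-cycles supported away from $y$, it follows that $\chi_x(y)=\sum_i\chi_{\sigma_i}(y)$ for every $y$ outside $\Gamma_x\cup\bigcup_i[a,x_i]$, an $\mathcal{H}^1$-rectifiable (hence Lebesgue negligible) set. For the second equality, a triangle loop $\sigma_i$ has index $\mathrm{sign}(\det M_i)$ on the interior of $a x_i x_{i+1}$ and $0$ outside, degenerate triangles ($\det M_i=0$) contributing $0$ almost everywhere. Multiplying by $\eta\in\LD\cap C^0(\RR^2)$ and integrating then gives the first formula of the statement, the finite sum being well defined since $\eta$ is locally integrable and each triangle is bounded.

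Next I would treat each term by the affine change of variables $\Psi_i\colon(\alpha,\beta)\mapsto a+\alpha(x_i-a)+\beta(x_{i+1}-a)$, which maps the unit triangle $T_1$ (with vertices $(0,0),(1,0),(0,1)$) onto $a x_i x_{i+1}$ with constant Jacobian $M_i$, so that $\int_{a x_i x_{i+1}}\eta=|\det M_i|\int_{T_1}\eta\bigl(\Psi_i(y)\bigr)\,dy$. Since $\mathrm{sign}(\det M_i)\,|\det M_i|=\det M_i$ — which also covers the degenerate case, both sides vanishing — substituting this into the first formula produces the second, i.e. $\mathcal{A}(x)=\sum_i\det M_i\int_{T_1}\eta\bigl(\Psi_i(y)\bigr)\,dy$.

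The main obstacle is the fan decomposition of $\chi_x$ itself: one must set up the bookkeeping of orientations carefully and check that the cancellation of the spoke contributions is valid at every point off a Lebesgue-null set. The cleanest route is the homological description of the index used above; alternatively one can write the index of a closed polygon as $\tfrac{1}{2\pi}\sum_i\angle(x_i-y,\,x_{i+1}-y)$ and telescope the spoke angles, but then some care is needed with the $(-\pi,\pi]$ ambiguity when $y$ is collinear with $a$ and some $x_i$. Everything else is routine.
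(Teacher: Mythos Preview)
Your argument is correct and reaches the same key identity as the paper, namely
\[
\chi_x=\sum_{i=1}^n\mathrm{sign}(\det M_i)\,\mathbf{1}_{a x_i x_{i+1}}\quad\text{a.e.},
\]
but you justify it by a different mechanism. You use the additivity of the winding number on $1$-cycles: the spokes $[a,x_i]$ telescope, so $\sum_i\sigma_i=\Gamma_x$ as an oriented chain, and the index of each simple triangular loop $\sigma_i$ is $\pm 1$ on the interior of the triangle with the sign read off from the orientation. The paper instead identifies the right-hand side directly with a \emph{ray-crossing} computation of $\chi_x$: it observes that $y$ lies in the open triangle $a x_i x_{i+1}$ exactly when the ray from $y$ in the direction $y-a$ meets $]x_i,x_{i+1}[$, and that in that case $\mathrm{sign}(\det M_i)$ equals the sign of $(y-a)\cdot(x_{i+1}-x_i)^{\perp}$, so summing over $i$ reproduces the crossing-number formula for the index. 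Your homological route is arguably cleaner conceptually and makes the exceptional null set (the spokes) explicit at the outset; the paper's route is more elementary in that it avoids invoking additivity on cycles and ties the identity to a concrete algorithm. The change-of-variables step for the second formula is the same in both; your careful inclusion of the translation by $a$ in $\Psi_i$ is the right thing to do.
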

\begin{proof}
   Let us show that for all~${a\in\RR^2}$ we have
   \begin{equation}
      \chi_x=\sum\limits_{i=1}^n\mathrm{sign}(\mathrm{det}(x_i-a~x_{i+1}-a))\,\mathbf{1}_{a x_i x_{i+1}}
      \label{winding_eq}
   \end{equation}
   almost everywhere. First, we have that $y\in\RR^2$ is in the (open) triangle $ax_ix_{i+1}$ if and only if the ray issued from $y$ directed by $y-a$ intersects $]x_i,x_{i+1}[$. Moreover, if $y$ is in this triangle, then
   \begin{equation*}
      \text{sign}(\text{det}(x_i-a~x_{i+1}-a))=\text{sign}\left((y-a)\cdot (x_{i+1}-x_i)^{\perp}\right).
   \end{equation*}
   The above hence shows that, if $y\in \RR^2\setminus\cup_{i=1}^n[x_i,x_{i+1}]$ does not belong to any of the segments $[a,x_i]$, evaluating the right hand side of \eqref{winding_eq} at $y$ amounts to computing the winding number  $\chi_x(y)$ by applying the ray-crossing algorithm described in  \cite{hormannPointPolygonProblem2001}. This in particular means that
   \eqref{winding_eq} holds almost everywhere, and the result follows.
\end{proof}

 From \Cref{lemma_signed_area}, we deduce that $\mathcal{A}$ is continuous on $\RR^{n\times 2}$. This is also the case of $\mathcal{P}$. Now $\mathcal{Y}'_n$ is compact and included in $\mathcal{Y}_n$, hence the existence of maximizers of $\mathcal{J}$ over~$\mathcal{Y}'_n$, which in turn implies the existence of maximizers of $\mathcal{J}$ over $\mathcal{Y}_n$.

Let us now show there exists a maximizer which belongs to $\mathcal{X}_n$. To do so, we rely on the following lemma
\begin{lemma}
   Let $m\geq 3$ and $x\in\mathcal{Y}_m\setminus \mathcal{X}_m$. Then there exists~$m'$ with $2\leq m'<m$ and $y\in\mathcal{Y}_{m'}$ such that
   \begin{equation*}
      \mathcal{J}(x)\leq\mathcal{J}(y)\,.
   \end{equation*}
\end{lemma}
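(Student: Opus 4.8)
The plan is to exploit a self-crossing of the non-simple polygonal curve $\Gamma_x$ to split it into two \emph{closed} polygonal curves, each having strictly fewer vertices, in such a way that both $\mathcal{A}$ and $\mathcal{P}$ split additively; a mediant-type inequality then forces one of the two pieces to have Cheeger ratio at least $\mathcal{J}(x)$. First I would pick the crossing: since $x\in\mathcal{Y}_m\setminus\mathcal{X}_m$, by definition of $\mathcal{X}_m$ there are non-adjacent indices, say $i<j$ (for $m=3$ there are none and the statement is vacuous), and a point $p\in[x_i,x_{i+1}]\cap[x_j,x_{j+1}]$; non-adjacency is exactly the condition $2\le j-i\le m-2$. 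Cutting $\Gamma_x$ at $p$ yields the two closed polygonal curves
\[
y^A\eqdef(p,x_{i+1},x_{i+2},\dots,x_j)\in\RR^{m_A\times 2},\qquad
y^B\eqdef(p,x_{j+1},\dots,x_m,x_1,\dots,x_i)\in\RR^{m_B\times 2}
\]
(indices mod $m$), oriented consistently with $\Gamma_x$, with $m_A=j-i+1$ and $m_B=m-(j-i)+1$. The bounds on $j-i$ give $3\le m_A,m_B\le m-1$, so both counts are strictly smaller than $m$.

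Next I would establish additivity of the two functionals. Since $p$ splits $[x_i,x_{i+1}]$ and $[x_j,x_{j+1}]$ into collinear halves, regrouping the edges of $\Gamma_x$ into those of $\Gamma_{y^A}$ and $\Gamma_{y^B}$ gives $\mathcal{P}(x)=\mathcal{P}(y^A)+\mathcal{P}(y^B)$. For the signed area I would apply \Cref{lemma_signed_area} with base point $a=p$, to $x$, to $y^A$ and to $y^B$: in the triangle fan based at $p$, the triangles $p\,x_i\,x_{i+1}$ and $p\,x_j\,x_{j+1}$ are degenerate (collinear triples) and carry zero weight, and likewise the two edges of each $y^{\bullet}$ incident to $p$; the remaining triangles are precisely those of the fans of $y^A$ and $y^B$. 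Hence $\chi_x=\chi_{y^A}+\chi_{y^B}$ almost everywhere and $\mathcal{A}(x)=\mathcal{A}(y^A)+\mathcal{A}(y^B)$. (Equivalently, this is additivity of the winding number over the $1$-cycle decomposition $\Gamma_x=\Gamma_{y^A}+\Gamma_{y^B}$.)

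Finally, the conclusion. If $\mathcal{P}(y^A)=0$, then all vertices of $y^A$ coincide with $p$, so $\chi_{y^A}=0$ a.e.\ and $\mathcal{A}(y^A)=0$; then $\mathcal{P}(x)=\mathcal{P}(y^B)>0$ and $\mathcal{J}(x)=\mathcal{J}(y^B)$, so $y=y^B\in\mathcal{Y}_{m_B}$ works, and symmetrically if $\mathcal{P}(y^B)=0$ (both cannot vanish since $\mathcal{P}(x)>0$). Otherwise $\mathcal{P}(y^A),\mathcal{P}(y^B)>0$ and
\[
\mathcal{J}(x)=\frac{\bigl|\mathcal{A}(y^A)+\mathcal{A}(y^B)\bigr|}{\mathcal{P}(y^A)+\mathcal{P}(y^B)}
\le\frac{\bigl|\mathcal{A}(y^A)\bigr|+\bigl|\mathcal{A}(y^B)\bigr|}{\mathcal{P}(y^A)+\mathcal{P}(y^B)}
\le\max\bigl(\mathcal{J}(y^A),\mathcal{J}(y^B)\bigr),
\]
the last step being the elementary estimate $\tfrac{|a|+|c|}{b+d}\le\max(|a|/b,|c|/d)$ for $b,d>0$. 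Taking $y$ to be whichever of $y^A,y^B$ realizes the maximum gives the claim with $m'\in\{m_A,m_B\}\subset\{2,\dots,m-1\}$. I expect the only delicate point to be making the cut-and-regroup identity $\chi_x=\chi_{y^A}+\chi_{y^B}$ of the second paragraph fully rigorous; the perimeter additivity and the mediant inequality are routine bookkeeping.
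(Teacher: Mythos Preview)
Your proof is correct and follows essentially the same strategy as the paper's: split the non-simple closed curve at a self-crossing into two closed sub-curves with strictly fewer vertices, verify that $\mathcal{P}$ and $\mathcal{A}$ split additively (equivalently $\chi_x=\chi_{y^A}+\chi_{y^B}$), and conclude via the mediant inequality. The only difference is organizational: the paper performs a case analysis according to whether the intersection occurs at a vertex, in the open interior of one edge, or in the open interiors of both, constructing $y$ and $z$ without introducing an extra vertex, whereas you insert $p$ as a vertex in both pieces uniformly and then invoke \Cref{lemma_signed_area} with base point $a=p$ to make the degenerate triangles drop out---a clean shortcut that still yields $m_A,m_B\le m-1$.
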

\begin{proof}
   If $x\in\mathcal{Y}_m\setminus \mathcal{X}_m$ then $[x_1,x_2],...,[x_m,x_1]$ is not simple. If there exists $i$ with $x_i=x_{i+1}$ then \begin{equation*}y=(x_1,...,x_i,x_{i+2},...,x_m)\end{equation*}is suitable, and likewise if $x_1=x_m$ then \begin{equation*}y=(x_1,...,x_{m-1})\end{equation*} is suitable. Otherwise we distinguish the following cases:

   \medskip\noindent\textbf{If there exists $i<j$ with $x_i=x_j$:} we define
   \begin{equation*}
      \begin{aligned}
      y&=(x_1,...,x_i,x_{j+1},...,x_m)\in\RR^{m-(j-i)}\,,\\
      z&=(x_i,x_{i+1},...,x_{j-1})\in\RR^{j-i}\,.
      \end{aligned}
   \end{equation*}
   We notice that $2\leq j-i<m$ and $2\leq m-(j-i)<m$.

   \medskip\noindent\textbf{If there exists $i<j$ with $x_i\in]x_j,x_{j+1}[$:} we necessarily have $(i,j)\neq (1,m)$. We define
   \begin{equation*}
      \begin{aligned}
      y&=(x_1,...,x_i,x_{j+1},...,x_m)\in\RR^{m-(j-i)}\,,\\
      z&=(x_i,x_{i+1},...,x_{j})\in\RR^{j-i+1}\,.
      \end{aligned}
   \end{equation*}
   We again have $2\leq m-(j-i)<m$, and since $(i,j)\neq (1,m)$, we have $j-i<m-1$ which shows $2\leq j-i+1<m$.

   \medskip\noindent\textbf{If there exists $i<j$ with $x_j\in]x_i,x_{i+1}[$:} we necessarily have $j>i+1$. We define
   \begin{equation*}
      \begin{aligned}
      y&=(x_1,...,x_i,x_j,...,x_m)\in\RR^{m-(j-i)+1}\,,\\
      z&=(x_{i+1},...,x_{j})\in\RR^{j-i}\,.
      \end{aligned}
   \end{equation*}
   We again have $2\leq j-i<m$, and since $j>i+1$ we obtain that~$2\leq {m-(j-i)+1<m}$.

   \medskip\noindent\textbf{If there exists $i<j$ with $x'\in]x_i,x_{i+1}[\,\cap\,]x_j,x_{j+1}[$:} if we have $j=i+1$ then either $x_{i+2}\in]x_i,x_{i+1}[$ or $x_i\in]x_{i+1},x_{i+2}[$ and in both cases we fall back on the previously treated cases. The same holds if $(i,j)=(1,m)$. Otherwise, we define
   \begin{equation*}
      \begin{aligned}
      y&=(x_1,...,x_i,x',x_{j+1},...,x_m)\in\RR^{m-(j-i)+1}\,,\\
      z&=(x',x_{i+1},...,x_{j})\in\RR^{j-i+1}\,.
      \end{aligned}
   \end{equation*}
   Since $j>i+1$ and $(i,j)\neq (1,m)$ we get $2\leq m-(j-i)+1<m$ and $2\leq j-i+1<m$.

   \medskip\noindent Now, one can see that in each case we have~${\mathcal{P}(x)=\mathcal{P}(y)+\mathcal{P}(z)}$ and  $\chi_x=\chi_y+\chi_z$ almost everywhere, which in turn gives that~${\mathcal{A}(x)=\mathcal{A}(y)+\mathcal{A}(z)}$. We hence get that $\mathcal{P}(y)=0$ or $\mathcal{P}(z)=0$, and in this case $\mathcal{J}(x)=\mathcal{J}(y)$ or
   $\mathcal{J}(x)=\mathcal{J}(z)$, or that $\mathcal{P}(y)>0$ and $\mathcal{P}(z)>0$, which yields
   \begin{equation*}
      \begin{aligned}
      \frac{\left|\mathcal{A}(x)\right|}{\mathcal{P}(x)}&\leq \frac{\left|\mathcal{A}(y)\right|+\left|\mathcal{A}(z)\right|}{\mathcal{P}(y)+\mathcal{P}(z)}\\
      &=\frac{\mathcal{P}(y)}{\mathcal{P}(y)+\mathcal{P}(z)}\frac{\left|\mathcal{A}(y)\right|}{\mathcal{P}(y)}+\frac{\mathcal{P}(z)}{\mathcal{P}(y)+\mathcal{P}(z)}\frac{\left|\mathcal{A}(z)\right|}{\mathcal{P}(z)}\,.
   \end{aligned}
   \end{equation*}
   Hence $\mathcal{J}(x)$ is smaller than a convex combination of $\mathcal{J}(y)$ and~$\mathcal{J}(z)$, which gives that it is smaller than $\mathcal{J}(y)$ or~$\mathcal{J}(z)$. This shows that $y$ or $z$ is suitable.
\end{proof}
We can now prove our final result, i.e. that there exists~${x_*\in\mathcal{X}_n}$ such that
\begin{equation*}
\forall x\in\mathcal{Y}_n,~\mathcal{J}(x_*)\geq \mathcal{J}(x)\,.
\end{equation*}
Indeed, repeatedly applying the above lemma starting with a maximizer $x_*$ of $\mathcal{J}$ over $\mathcal{Y}_n$, we either have that
there exists~$m$ with $3\leq m\leq n$ and~${x'_*\in\mathcal{X}_m}$ such that $\mathcal{J}(x_*)=\mathcal{J}(x'_*)$, or that there exists $y\in \mathcal{Y}_2$ such that $\mathcal{J}(x_*)\leq\mathcal{J}(y)$, which is impossible since in that case $\mathcal{J}(y)=0$ and
$\mathcal{J}(x_*)=\alpha>0$. We hence have $x'_*\in\mathcal{X}_m$ such that
\begin{equation*}
   \forall x\in\mathcal{Y}_n,~\mathcal{J}(x'_*)=\mathcal{J}(x_*)\geq \mathcal{J}(x)\,.
\end{equation*}
We can finally build $x''_*\in\mathcal{X}_n$ such that $\mathcal{J}(x''_*)=\mathcal{J}(x'_*)$ by adding dummy vertices to $x'_*$, which finally allows to conclude.
}

\section{Proof of Proposition \ref{cheeger_grid}}
\label{proof_cheeger_grid}

First, let us stress that for any function $v$ that is piecewise constant on $(C_{i,j})_{(i,j)\in[ 1,N]^2}$ and that is equal to $0$ outside~${[-R,R]^2}$, we have $J(v)=h\,\|\nabla^h v\|_{1,1}$ where by abuse of notation $\nabla^h v$ is given by~\eqref{discrete_grad} with $v_{i,j}$ the value of $v$ in~$C_{i,j}$. Hence $J^h(u^h) \leq 1$ for all
$h$ implies that $J(u^h)$ (and hence
$\|u^h\|_{\mathrm{L}^2}$) is
uniformly bounded in~$h$. There hence exists a (not relabeled) subsequence that converges strongly in~$\mathrm{L}^1_{loc}(\RR^2)$ and weakly in~$\LD$ to a function $u$, with moreover $\Diff u^h\overset{\ast}{\rightharpoonup}\Diff u$.

Let us now take $\phi=(\phi^{(1)},\phi^{(2)})\in C^{\infty}_c(\RR^2,\RR^2)$  such that~${||\phi||_{\infty}\leq 1}$. The weak-* convergence of the gradients give us that
\begin{equation*}
\begin{aligned}
&\int_{\RR^2}\phi\cdot dDu
=\underset{h\to 0}{\text{lim}}~\int_{\RR^2}\phi\cdot dDu^h\\
=&\underset{h\to 0}{\text{lim}}~\sum\limits_{i=0}^N\sum\limits_{j=0}^N \begin{pmatrix}\int_{C^h_{i,j}\cap C^h_{i+1,j}}\phi^{(1)}\,d\mathcal{H}^1\\\int_{C^h_{i,j}\cap C^h_{i,j+1}}\phi^{(2)}\,d\mathcal{H}^1\end{pmatrix}\cdot \nabla^h u^h_{i,j}\,.
\end{aligned}
\end{equation*}
One can moreover show there exists $C>0$ such that for~$h$ small enough and  all $(i,j)$ we have:
\begin{equation*}\begin{aligned}
&\left|\left[\int_{C^h_{i,j}\cap C^h_{i+1,j}}\phi^{(1)}\,d\mathcal{H}^1\right]-h\,\phi^{(1)}(x^h_{i+1,j+1})\right|\leq Ch^2\,,\\
&\left|\left[\int_{C^h_{i,j}\cap C^h_{i,j+1}}\phi^{(2)}\,d\mathcal{H}^1\right]-h\,\phi^{(2)}(x^h_{i+1,j+1})\right|\leq Ch^2\,,\\
\end{aligned}\end{equation*}
with $x_{i,j}\eqdef(-R+i\,h,-R+j\,h)$. We use the above inequalities and the fact $\|\phi(x)\|\leq 1$ for all $x$ to obtain the existence of $C'>0$ such that for~$h$ small enough and for all $(i,j)$ we have:
\begin{equation*}
\left\|\begin{pmatrix}\int_{C^h_{i,j}\cap C^h_{i+1,j}}\phi^{(1)}\,d\mathcal{H}^1\\\int_{C^h_{i,j}\cap C^h_{i,j+1}}\phi^{(2)}\,d\mathcal{H}^1\end{pmatrix}\right\|_2\leq h\,\sqrt{1+C'h}\,.
\end{equation*}
This finally yields
\begin{equation*}\begin{aligned}
&\sum\limits_{i=0}^N\sum\limits_{j=0}^N \begin{pmatrix}\int_{C^h_{i,j}\cap
C^h_{i+1,j}}\phi^{(1)}\,d\mathcal{H}^1\\\int_{C^h_{i,j}\cap
C^h_{i,j+1}}\phi^{(2)}\,d\mathcal{H}^1\end{pmatrix}\cdot \nabla^h
u^h_{i,j}\\
\leq &\sum\limits_{i=0}^N\sum\limits_{j=0}^N h\,\sqrt{1+C'h} ~\|\nabla^h u^h_{i,j}\|=\sqrt{1+C'h}~J^h(u^h)\,,
\end{aligned}\end{equation*}
which gives
\begin{equation*}\int_{\RR^2}\phi\cdot dDu\leq \underset{h\to 0}{\text{lim sup}}~ \sqrt{1+C'h}~J^h(u^h)\leq 1\,.\end{equation*}

We now have to show that
\begin{equation*}
\forall v\in\LD,~J(v)\leq 1\implies\int_{\RR^2}\eta\,u\leq \int_{\RR^2}\eta\,v\,.
\end{equation*}
Let $v\in C^{\infty}([-R,R]^2)$ be
such that $J(v)\leq 1$. We define \begin{equation*}
v^h\eqdef\left(v\left(\left(i+\frac{1}{2}\right)h,\,\left(j+\frac{1}{2}\right)h\right)\right)_{(i,j)\in[ 1,N]^2}.
\end{equation*}
One can then show that \begin{equation*}
\underset{h\to 0}{\text{lim}}~J^h(v^h)=J(v)=1\,,\end{equation*}
so that for every $\delta>0$ we have~${J^h\left(\frac{v^h}{1+\delta}\right)\leq1}$ for $h$ small enough. Now this yields
\begin{equation*}
\begin{aligned}
   \int_{[-R,R]^2}\eta\,u&=\underset{h\to 0}{\text{lim}}~\int_{[-R,R]^2}\eta\,u^h\\&\leq\underset{h\to 0}{\text{lim}}~\int_{[-R,R]^2}\eta\,\frac{v^h}{1+\delta}\\&=\int_{[-R,R]^2}\eta\,\frac{v}{1+\delta}\,.
\end{aligned}
\end{equation*}
Since this holds for all $\delta>0$ we get that \begin{equation}\int_{[-R,R]^2}\eta\,u\leq
\int_{[-R,R]^2}\eta\,v\,.\label{opt_cheeger_relaxed}\end{equation} Finally, if $v\in\LD$ is such that~$v=0$ outside
$[-R,R]^2$ and~${J(v)\leq 1}$, by standard approximation results (see \cite[remark 3.22]{ambrosioFunctionsBoundedVariation2000}) we also have that
\eqref{opt_cheeger_relaxed} holds, and hence $u$ solves \eqref{cheeger_relaxed}. Finally, since $u$ solves \eqref{cheeger_relaxed}, its support is included in $[-R,R]^2$, which shows the strong
$\mathrm{L}^1_{loc}(\RR^2)$ convergence of $(u^h)$ towards $u^*$ in fact implies its strong $\mathrm{L}^1(\RR^2)$ convergence.

\section{First variation of the perimeter and weighted area functionals for simple polygons}
\label{shape_grad}

\revision{We stress that since $\mathcal{X}_n$ is open, for every~${x\in\mathcal{X}_n}$} the functions~${h\mapsto P(E_{x+h})}$ and $h\mapsto \int_{E_{x+h}}\eta$ are well-defined in a neighborhood of zero (for any locally integrable function $\eta$). We now compute the first variation of these two quantities.

\begin{proposition}
	Let \revision{$x\in\mathcal{X}_n$}. Then we have
	\begin{equation}
		P(E_{x+h})=P(E_x)-\sum\limits_{i=1}^n \left\langle h_i,\tau_i-\tau_{i-1}\right\rangle + o\left(\|h\|\right),
	\end{equation}
	where $\tau_i\eqdef \frac{x_{i+1}-x_i}{\|x_{i+1}-x_i\|}$ is the unit tangent vector to $[x_i,x_{i+1}]$.
\end{proposition}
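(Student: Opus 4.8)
The plan is to exploit the fact that the perimeter of a simple polygon is nothing but the sum of its edge lengths, and then to differentiate this elementary expression term by term. Since $\mathcal{X}_n$ is open, for $\|h\|$ small enough we have $x+h\in\mathcal{X}_n$, so that $E_{x+h}$ is again a simple polygon and
\begin{equation*}
  P(E_{x+h})=\mathcal{P}(x+h)=\sum_{i=1}^n\left\|(x_{i+1}+h_{i+1})-(x_i+h_i)\right\|,
\end{equation*}
with the cyclic convention $x_{n+1}=x_1$, $h_{n+1}=h_1$. Moreover, membership in $\mathcal{X}_n$ guarantees $x_{i+1}\neq x_i$ for every $i$, so each edge vector $v_i\eqdef x_{i+1}-x_i$ is nonzero.

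Next I would expand each summand using the smoothness of the Euclidean norm away from the origin. The map $v\mapsto\|v\|$ is of class $C^\infty$ on $\RR^2\setminus\{0\}$ with gradient $v/\|v\|$ at any $v\neq 0$, hence
\begin{equation*}
  \left\|v_i+(h_{i+1}-h_i)\right\|=\|v_i\|+\left\langle \tau_i,\,h_{i+1}-h_i\right\rangle+o\left(\|h_{i+1}-h_i\|\right),
\end{equation*}
where $\tau_i=v_i/\|v_i\|$. Summing over $i$ and using $o(\|h_{i+1}-h_i\|)=o(\|h\|)$, one obtains
\begin{equation*}
  P(E_{x+h})=P(E_x)+\sum_{i=1}^n\left\langle \tau_i,\,h_{i+1}-h_i\right\rangle+o(\|h\|).
\end{equation*}

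Finally it remains to rearrange the linear term by a discrete summation by parts. Splitting the sum and shifting the index in $\sum_{i=1}^n\langle\tau_i,h_{i+1}\rangle$, which is legitimate thanks to the cyclic conventions $h_{n+1}=h_1$ and $\tau_0\eqdef\tau_n$, gives $\sum_{i=1}^n\langle\tau_i,h_{i+1}\rangle=\sum_{i=1}^n\langle\tau_{i-1},h_i\rangle$, so that
\begin{equation*}
  \sum_{i=1}^n\left\langle \tau_i,\,h_{i+1}-h_i\right\rangle=\sum_{i=1}^n\left\langle \tau_{i-1}-\tau_i,\,h_i\right\rangle=-\sum_{i=1}^n\left\langle h_i,\,\tau_i-\tau_{i-1}\right\rangle,
\end{equation*}
which yields the claimed expansion. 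There is no genuine difficulty here: the computation is elementary once one observes that the perimeter reduces to a sum of edge lengths, and the only points requiring (minor) care are the openness of $\mathcal{X}_n$, needed to ensure $P(E_{x+h})$ is still given by that sum for small $h$, and the cyclic bookkeeping in the summation by parts.
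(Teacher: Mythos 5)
Your proof is correct and follows essentially the same route as the paper's: write the perimeter as the sum of edge lengths, expand each length to first order (the paper does this via the square root, you via the differentiability of $v\mapsto\|v\|$ away from the origin, which is the same computation), and rearrange by a discrete summation by parts. You even carry out explicitly the index shift that the paper leaves as ``re-arranging the terms in the sum'', so nothing is missing.
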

\begin{proof}
	If $\|h\|$ is small enough we have:
	\begin{equation*}
		\begin{aligned}
			&P(E_{x+h})=\sum\limits_{i=1}^n\|x_{i+1}-x_i+h_{i+1}-h_i\|\\
			=&\sum\limits_{i=1}^n\sqrt{\|x_{i+1}-x_i+h_{i+1}-h_{i}\|^2}\\
			=&\sum\limits_{i=1}^n\|x_{i+1}-x_i\|\left(1+\frac{\langle x_{i+1}-x_i,h_{i+1}-h_i\rangle}{\|x_{i+1}-x_i\|^2}+o\left(\|h\|\right)\right)\\
			=&~P(E_x)+\sum\limits_{i=1}^n\langle \tau_i,h_{i+1}-h_i\rangle+o\left(\|h\|\right),\\
		\end{aligned}
	\end{equation*}
	and the result follows by re-arranging the terms in the sum.
\end{proof}

\begin{proposition}
	Let \revision{$x\in\mathcal{X}_n$} and $\eta\in C^0(\RR^2)$. Then we have
	\begin{equation}
		\int_{E_{x+h}}\eta=\int_{E_x}\eta+\sum\limits_{i=1}^n \left\langle h_i,w_i^-\nu_{i-1}+w_i^+\nu_i\right\rangle + o\left(\|h\|\right),
	\end{equation}
	where $\nu_i$ is the outward unit normal to \revision{$E_x$ on $]x_i,x_{i+1}[$} and
	\begin{equation*}
	\begin{aligned}
	w_{i}^{+}&\eqdef\int_{[x_i,x_{i+1}]}\eta(x)~
	\frac{\|x-x_{i+1}\|}{\|x_{i}-x_{i+1}\|}\,d\mathcal{H}^1(x)\,, \\
	w_{i}^{-}&\eqdef\int_{[x_{i-1},x_{i}]}\eta(x)~
	\frac{\|x-
	x_{i-1}\|}{\|x_{i}-x_{i-1}\|}\,d\mathcal{H}^1(x)\,.
	   \end{aligned}
	\end{equation*}
\end{proposition}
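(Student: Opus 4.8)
The starting point is the elementary identity
\[
  \int_{E_{x+h}}\eta-\int_{E_x}\eta=\int_{\RR^2}\eta\,\bigl(\mathbf{1}_{E_{x+h}}-\mathbf{1}_{E_x}\bigr),
\]
valid for $\|h\|$ small since $\mathcal{X}_n$ is open, so the whole content lies in describing $E_{x+h}\triangle E_x$. As $E_x$ is a \emph{fixed} simple $n$-gon, its edges meet only at its vertices; moreover a point of $E_{x+h}\triangle E_x$ lies on opposite sides of the Jordan curves $\partial E_x$ and $\partial E_{x+h}$, which are $C\|h\|$-close in the Hausdorff sense for $\|h\|$ small. Hence such a point is within $C\|h\|$ of $\partial E_x$, and, away from the vertices, this forces it to lie in the thin quadrilateral $R_i$ with consecutive vertices $x_i$, $x_{i+1}$, $x_{i+1}+h_{i+1}$, $x_i+h_i$ associated with the nearest edge. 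Since the vertex balls $B(x_i,C\|h\|)$ have area $O(\|h\|^2)$ and $\eta$ is locally bounded, this yields
\[
  \int_{E_{x+h}}\eta-\int_{E_x}\eta=\sum_{i=1}^n\int_{R_i}(\mathbf{1}_{E_{x+h}}-\mathbf{1}_{E_x})\,\eta+o(\|h\|).
\]

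\textbf{Edge contributions.} For each $i$ I would parametrize $R_i$ by the Lipschitz map $\Psi_i(s,t)\eqdef(1-s)(x_i+t\,h_i)+s(x_{i+1}+t\,h_{i+1})$, $(s,t)\in[0,1]^2$, which carries $\{t=0\}$ onto $[x_i,x_{i+1}]$ and $\{t=1\}$ onto $[x_i+h_i,x_{i+1}+h_{i+1}]$. Here $\partial_s\Psi_i=(x_{i+1}-x_i)+t\,(h_{i+1}-h_i)$ and $\partial_t\Psi_i=(1-s)h_i+s\,h_{i+1}$, so $|\det D\Psi_i(s,t)|=\bigl|\det\bigl(x_{i+1}-x_i,\,(1-s)h_i+s\,h_{i+1}\bigr)\bigr|+O(\|h\|^2)=\|x_{i+1}-x_i\|\,\bigl|\nu_i\cdot\bigl((1-s)h_i+s\,h_{i+1}\bigr)\bigr|+O(\|h\|^2)$. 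By the change of variables formula, $\int_{R_i}(\mathbf{1}_{E_{x+h}}-\mathbf{1}_{E_x})\eta$ equals $\int_{[0,1]^2}\eta(\Psi_i(s,t))\,\sigma_i(s)\,|\det D\Psi_i(s,t)|\,ds\,dt$, where $\sigma_i(s)\in\{-1,+1\}$ records whether the edge sweeps outward or inward at $(1-s)x_i+s\,x_{i+1}$, i.e.\ $\sigma_i(s)=\operatorname{sign}\bigl(\nu_i\cdot((1-s)h_i+s\,h_{i+1})\bigr)$; when $\sigma_i$ changes sign one first splits $[0,1]$ at the zero of that affine function, which accounts cleanly for a possible fold of $R_i$. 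Since $\Psi_i(s,t)-((1-s)x_i+s\,x_{i+1})=t\,((1-s)h_i+s\,h_{i+1})$ has norm at most $\|h\|$, the uniform continuity of $\eta$ on a fixed compact neighbourhood of $[x_i,x_{i+1}]$ lets me replace $\eta(\Psi_i(s,t))$ by $\eta((1-s)x_i+s\,x_{i+1})$ at the cost of $o(1)$ uniformly; integrating and noting $\sigma_i(s)|\det D\Psi_i(s,t)|=\|x_{i+1}-x_i\|\,\nu_i\cdot((1-s)h_i+s\,h_{i+1})+O(\|h\|^2)$, I obtain
\[
  \int_{R_i}(\mathbf{1}_{E_{x+h}}-\mathbf{1}_{E_x})\,\eta=\|x_{i+1}-x_i\|\int_0^1\eta\bigl((1-s)x_i+s\,x_{i+1}\bigr)\,\nu_i\cdot\bigl((1-s)h_i+s\,h_{i+1}\bigr)\,ds+o(\|h\|).
\]

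\textbf{Conclusion.} Setting $z=(1-s)x_i+s\,x_{i+1}$, so that $\|x_{i+1}-x_i\|\,ds=d\mathcal{H}^1(z)$, $1-s=\|z-x_{i+1}\|/\|x_i-x_{i+1}\|$ and $s=\|z-x_i\|/\|x_i-x_{i+1}\|$, the integral above equals $\langle h_i,\,w_i^+\nu_i\rangle+\langle h_{i+1},\,w_{i+1}^-\nu_i\rangle$, with $w_i^+$ as in the statement and $w_{i+1}^-$ the weight $w^-$ taken at the index $i+1$. Summing over $i$ cyclically and relabelling the terms carrying $h_{i+1}$ by $j=i+1$ turns $\sum_{i=1}^n\bigl(\langle h_i,w_i^+\nu_i\rangle+\langle h_{i+1},w_{i+1}^-\nu_i\rangle\bigr)$ into $\sum_{i=1}^n\langle h_i,\,w_i^-\nu_{i-1}+w_i^+\nu_i\rangle$, which together with the $o(\|h\|)$ remainders is the announced expansion.

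\textbf{Main difficulty.} The only genuinely delicate step is the localization in the first paragraph: proving that $E_{x+h}\triangle E_x$ is captured by the union of the quadrilaterals $R_i$ up to an $O(\|h\|^2)$ error — equivalently, that the corner regions where consecutive $R_i$ overlap are negligible — and, relatedly, tracking signs when $R_i$ folds (which happens exactly when $\nu_i\cdot h_i$ and $\nu_i\cdot h_{i+1}$ have opposite signs), handled by subdividing the $s$-interval at the zero of the normal sweep speed. Everything else reduces to a first-order Taylor expansion of the Jacobian of $\Psi_i$ together with the uniform continuity of $\eta$.
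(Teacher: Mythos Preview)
Your argument is correct and takes a genuinely different route from the paper's. The paper proceeds algebraically: it first assumes $\eta\in C^1(\RR^2)$ and uses the signed-triangulation identity $\int_{E_x}\eta=\epsilon\sum_i\omega(x_i,x_{i+1})$ with $\omega(a_1,a_2)=\det(a_1~a_2)\int_{T_1}\eta((a_1~a_2)y)\,dy$ (where $T_1$ is the unit triangle and $\epsilon$ a global orientation sign), differentiates each $\omega$ under the integral sign, applies Gauss--Green on each triangle $Ox_ix_{i+1}$ to turn the result into boundary integrals, and then observes that the contributions along the rays $[0,x_i]$ cancel pairwise, leaving only the integrals along $[x_i,x_{i+1}]$ with the barycentric weights. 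It finishes with an approximation argument (as in \cite[Proposition~17.8]{maggiSetsFinitePerimeter2012}) to relax from $C^1$ to $C^0$. Your approach is instead direct and geometric: you localize $E_{x+h}\triangle E_x$ to edge quadrilaterals up to $O(\|h\|^2)$ corner errors, then compute each edge contribution by an explicit parametrization and a first-order expansion of the Jacobian. Your route has the advantage of using only the uniform continuity of $\eta$ on a compact set, so no $C^1$ hypothesis or density step is needed; the paper's route avoids the somewhat fiddly localization and fold-tracking (your ``main difficulty'') by hiding all the geometry in the triangulation formula and Gauss--Green, at the cost of the extra regularity assumption and an approximation at the end.
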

\begin{proof}
	Our proof relies on the following identity \revision{(see \Cref{lemma_signed_area} for a proof of a closely related formula)}:
	\begin{equation*}
\int_{E_x}\eta=\text{sign}\left(\sum\limits_{i=1}^n\det(x_i~x_{i+1})\right)\sum\limits_{i=1}^n \omega(x_i,x_{i+1})\,,
\end{equation*}
with
\begin{equation*}
	\omega(a_1,a_2)\eqdef\det(a_1~a_2)\int_{T_1}\eta((a_1~a_2)\,y)\,dy\,,
\end{equation*}
where $T_1\eqdef\left\{(\alpha,\beta)\in\left(\RR_+\right)^2\,\big\rvert\,\alpha+\beta\leq 1\right\}$ is the unit triangle. Assuming $\eta\in C^1(\RR^2)$ \revision{and denoting $\mathrm{adj}(A)$ the adjugate of a matrix $A$}, we have:
\begin{equation*}
\begin{aligned}
			&\omega(a_1+h_1,a_2+h_2)\\ =~&\omega(a_1,a_2)+\mathrm{det}(a_1~a_2)\int_{T_1}\nabla\eta((a_1~a_2)\,y)\cdot((h_1~h_2)y)\,dy\\
			+~&\mathrm{tr}\left(\mathrm{adj}(a_1~a_2)^T(h_1~h_2)\right)\int_{T_1}\eta((a_1~a_2)\,y)\,dy+o(\|h\|)\\
			=~&\omega(a_1,a_2)\\
			+~&\mathrm{sign}(\mathrm{det}(a_1~a_2))\int_{Oa_1a_2}\nabla\eta(y)\cdot((h_1~h_2)\,(a_1~a_2)^{-1})\,y)\,dy\\
			+~&\frac{\mathrm{tr}\left(\mathrm{adj}(a_1~a_2)^T(h_1~h_2)\right)}{|\mathrm{det}(a_1~a_2)|}\int_{Oa_1 a_2}\eta(y)\,dy+o(\|h\|)\,.
\end{aligned}
\end{equation*}
	Denoting $g(y)\eqdef (h_1~h_2)(a_1,a_2)^{-1}\,y$, we obtain:
	\begin{equation*}\begin{aligned}
	&\omega(a_1+h_1,a_2+h_2)\\
	=~&\omega(a_1,a_2)\\
	+~&\mathrm{sign}(\mathrm{det}(a_1~a_2))
	\int_{Oa_1a_2}\left[\nabla\eta\cdot g+\eta\,\mathrm{div}g\right]+o(\|h\|)\\
	=~&\omega(a_1,a_2)\\
	+~&\mathrm{sign}(\mathrm{det}(a_1~a_2))
	\int_{\partial(Oa_1a_2)}\eta\,(g\cdot\nu_{Oa_1a_2})\,d\mathcal{H}^1+o(\|h\|)\,,
	\end{aligned}\end{equation*}
	where we used Gauss-Green theorem to obtain the last equality. Now if $\|h\|$ is small enough then
	\begin{equation*}
	\sum\limits_{i=1}^n\det(x_i+h_i~~x_{i+1}+h_{i+1}) ~\text{and}~ \sum\limits_{i=1}^n\det(x_i~x_{i+1})
	\end{equation*}
	have the same sign, so that, defining
	\begin{equation*}
		g_i:y\mapsto((h_i~h_{i+1})(x_i~x_{i+1})^{-1}\,y)\,,
	\end{equation*}
	we get
	\begin{equation*}
		d\left(\int_{E_{\bullet}}\eta\right)(x)\,.\,h=\epsilon\sum\limits_{i=1}^n\mathrm{sign}(\mathrm{det}(x_i~x_{i+1}))~\omega_i\,,
	\end{equation*}
	with
	\begin{equation*}
   \begin{aligned}
      \epsilon&\eqdef\text{sign}\left(\sum\limits_{i=1}^n\det(x_i~x_{i+1})\right),\\
	\omega_i&\eqdef\int_{\partial^*(Ox_ix_{i+1})}\eta\,(g_i\cdot\nu_{Ox_ix_{i+1}})\,d\mathcal{H}^1\,.
   \end{aligned}
	\end{equation*}
	Then one can decompose each integral in the sum and show the integrals over $[0,x_i]$ cancel out each other, which allows to obtain
	\begin{equation*}
		d\left(\int_{E_{\bullet}}\eta\right)(x)\,.\,h=\sum\limits_{i=1}^n\int_{[x_i,x_{i+1}]}\eta\,(g_i\cdot\nu_i)\,d\mathcal{H}^1\,.
	\end{equation*}
	But now if $y\in[x_i,x_{i+1}]$ then
	\begin{equation*}
		(x_i~x_{i+1})^{-1}y=\frac{1}{\|x_{i+1}-x_i\|}\begin{pmatrix}\|y-x_{i+1}\|\\\|y-x_i\|\end{pmatrix}\,,
	\end{equation*}
	and the result follows by re-arranging the terms in the sum. One can then use an approximation argument \revision{as in \cite[Proposition 17.8]{maggiSetsFinitePerimeter2012}} to show it also holds when $\eta$ is only continuous.
\end{proof}

\section{Results used in Section \ref{toy}}
\label{proofs_toy}

\subsection{Properties of the radialisation operator}
\label{radial_tv}

The goal of this subsection, based on \cite[II.1.4]{dautrayMathematicalAnalysisNumerical2012}, is to prove the following result:
\begin{proposition}
	\label{radialisation_prop}
  Let $u\in \LD$ be s.t.~${|\Diff u|(\RR^2)<+\infty}$. Then~${|\Diff\tilde{u}|(\RR^2)\leq |\Diff u|(\RR^2)}$ with
  equality if and only if~$u$ is radial.
\end{proposition}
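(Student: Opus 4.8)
The plan is to deduce the inequality from the fact that $\tilde u$ is a rotation average, and then to sharpen the argument in order to settle the equality case. First, note that $\tilde u=\frac{1}{2\pi}\int_0^{2\pi}u\circ R_\theta\,d\theta$, where $R_\theta$ is the rotation of angle $\theta$ about the origin; in particular $\tilde u=u$ whenever $u$ is radial, and $\tilde u\in\LD$ by Jensen's inequality. Testing against $z\in C_c^\infty(\RR^2,\RR^2)$ and applying Fubini's theorem, one gets $\Diff\tilde u=\frac{1}{2\pi}\int_0^{2\pi}\Diff(u\circ R_\theta)\,d\theta$ as finite vector measures. Since rotations are isometries, $J$ is invariant under them, so $\abs{\Diff(u\circ R_\theta)}(\RR^2)=\abs{\Diff u}(\RR^2)$ for every $\theta$; pushing the supremum that defines $J(\tilde u)$ inside the integral then yields
\[
\abs{\Diff\tilde u}(\RR^2)\le\frac{1}{2\pi}\int_0^{2\pi}\abs{\Diff(u\circ R_\theta)}(\RR^2)\,d\theta=\abs{\Diff u}(\RR^2)\,.
\]

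\textbf{A refined inequality.} For the equality case I would keep track of the radial part of $\Diff u$. Write the polar decomposition $\Diff u=\sigma_u\,\abs{\Diff u}$ and, away from the origin, $\sigma_u=b\,e_r+c\,e_\theta$ in the moving orthonormal frame $e_r(x)=x/\norm{x}$, $e_\theta=e_r^{\perp}$, so that $b^2+c^2=1$ holds $\abs{\Diff u}$-a.e. The claim is
\[
\abs{\Diff\tilde u}(\RR^2)\le\int_{\RR^2}\abs{b}\,d\abs{\Diff u}\,.
\]
For $v\in C^\infty(\RR^2)$ with $\nabla v\in\mathrm{L}^1(\RR^2)$ this is immediate in polar coordinates: with $\sigma(\theta)=(\cos\theta,\sin\theta)$ and $g(r)\eqdef\frac{1}{2\pi}\int_0^{2\pi}v(r\,\sigma(\theta))\,d\theta$ one has $\tilde v(x)=g(\norm x)$ and $2\pi\,g'(r)=\int_0^{2\pi}\partial_r v(r\,\sigma(\theta))\,d\theta$, so that, by the triangle inequality for integrals,
\[
\int_{\RR^2}\abs{\nabla\tilde v}\,dx\le\int_{\RR^2}\abs{\partial_r v}\,dx=\int_{\RR^2}\abs{b}\,\abs{\nabla v}\,dx\,,
\]
using $\partial_r v=\langle\nabla v,e_r\rangle$. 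The general case follows by convolving $u$ with a radial mollifier $\rho_\varepsilon$ (which gives $\widetilde{u*\rho_\varepsilon}=\tilde u*\rho_\varepsilon$), applying lower semicontinuity of the total variation to the left-hand side and a Reshetnyak-type continuity argument to the right-hand side on $\RR^2\setminus B(0,\delta)$, the mass inside $B(0,\delta)$ being negligible because $\abs{\Diff u}(B(0,\delta))\to\abs{\Diff u}(\{0\})=0$.

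\textbf{Equality.} Since $\abs{b}\le1$, the refined inequality gives $\abs{\Diff\tilde u}(\RR^2)\le\int_{\RR^2}\abs{b}\,d\abs{\Diff u}\le\abs{\Diff u}(\RR^2)$. If $u$ is radial then $\tilde u=u$ and both steps are equalities. Conversely, if $\abs{\Diff\tilde u}(\RR^2)=\abs{\Diff u}(\RR^2)$, then $\int_{\RR^2}(1-\abs{b})\,d\abs{\Diff u}=0$, hence $\abs{b}=1$ and $c=0$ hold $\abs{\Diff u}$-a.e.; thus $\Diff u$ has no tangential component, which means the one-dimensional restriction of $u$ to $\mathcal{H}^1$-almost every circle centred at the origin is essentially constant. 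By the slicing characterisation of $BV$ functions, $u$ then coincides a.e.\ with a radial function.

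\textbf{Main obstacle.} The genuinely delicate point is upgrading the refined inequality from smooth functions to an arbitrary $u\in\LD$ with $\abs{\Diff u}(\RR^2)<\infty$: the radial/angular splitting of the vector measure $\Diff u$ does not commute with mollification near the origin, so one has to work where $e_r$ is continuous and absorb the remaining mass using $\abs{\Diff u}(\{0\})=0$ (alternatively, one may run the equality analysis directly on $\Diff u$ via the coarea formula). Deducing radiality from the vanishing of the tangential part likewise relies on $BV$ slicing and is not purely formal.
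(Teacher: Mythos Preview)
Your argument is correct and reaches the same conclusion as the paper, but the execution differs in two places. First, for the refined inequality $|\Diff\tilde u|(\RR^2)\le\int|b|\,d|\Diff u|$, the paper avoids mollification altogether: it proves the distributional identity $\langle\Diff\tilde u,\varphi\rangle=\langle\Diff u,\widetilde{\varphi_r}\,e_r\rangle$ for test fields $\varphi\in C_c^\infty(\RR^2\setminus\{0\},\RR^2)$ by computing $\widetilde{\div\varphi}$ in polar coordinates, then reads off $|\Diff\tilde u|(U)\le|\Diffrad u|(U)\le|\Diff u|(U)$ directly from the definitions. This sidesteps the Reshetnyak/strict-convergence machinery you invoke (which is sound, but heavier). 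Second, for the implication ``$c=0$ $\Rightarrow$ $u$ radial'', the paper does not use BV slicing on circles; it pulls back by $h(r,\theta)=(r\cos\theta,r\sin\theta)$, observes $\partial_\theta(u\circ h)=0$ distributionally, and concludes via a one-line mollification that $u\circ h$ depends only on $r$. Your slicing route is legitimate but requires more infrastructure. Your preliminary rotation-averaging proof of the bare inequality is a nice addition not present in the paper; it gives the inequality almost for free but, as you note, does not by itself yield the equality characterisation.
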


First, one can show that for every $u\in \LD$, the radialisation $\tilde{u}$ of $u$ defined in \Cref{toy} by
\begin{equation}
  \tilde{u}(x)=\int_{\mathbb{S}^1}u(\|x\|\,e)\,d\mathcal{H}^1(e)
\end{equation}
is well defined and belongs to $\LD$. Then a change of variables in polar coordinates shows that, as stated in the following lemma, the radialisation operator is self-adjoint.
\begin{lemma}
	\label{radialisation_autoadj}
	We have
	\begin{equation*}
		\forall u,v\in\LD,~\int_{\RR^2}\tilde{u}(x)\,v(x)\,dx=\int_{\RR^2}u(x)\,\tilde{v}(x)\,dx\,.
	\end{equation*}
\end{lemma}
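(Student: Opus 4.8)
The plan is to write both integrals in polar coordinates and read off the identity from the manifest symmetry of the resulting expression. Using the change of variables $x = r e$ with $r = \|x\| \in (0,+\infty)$ and $e \in \mathbb{S}^1$, so that $dx = r\,dr\,d\mathcal{H}^1(e)$, and using that $\tilde{u}(re)$ is independent of $e$ (it equals $\int_{\mathbb{S}^1} u(re')\,d\mathcal{H}^1(e')$), one obtains
\[
\int_{\RR^2}\tilde{u}(x)\,v(x)\,dx
= \int_0^{+\infty} r \left(\int_{\mathbb{S}^1} u(re')\,d\mathcal{H}^1(e')\right)\left(\int_{\mathbb{S}^1} v(re)\,d\mathcal{H}^1(e)\right) dr\,.
\]
The right-hand side is symmetric in $(u,v)$, so running the same computation in reverse with $u$ and $v$ interchanged identifies it with $\int_{\RR^2} u(x)\,\tilde{v}(x)\,dx$, which is the claim.

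The only point that deserves care is the legitimacy of these manipulations, namely that Tonelli--Fubini applies so that the polar change of variables and the reordering of the two angular integrations are valid. For this I would check that
\[
\int_0^{+\infty}\int_{\mathbb{S}^1}\int_{\mathbb{S}^1} r\,|u(re')|\,|v(re)|\,d\mathcal{H}^1(e')\,d\mathcal{H}^1(e)\,dr < +\infty\,.
\]
Applying Cauchy--Schwarz on $\mathbb{S}^1$ gives $\int_{\mathbb{S}^1}|u(re')|\,d\mathcal{H}^1(e') \le \sqrt{2\pi}\,\big(\int_{\mathbb{S}^1}|u(re')|^2\,d\mathcal{H}^1(e')\big)^{1/2}$ and likewise for $v$, and then Cauchy--Schwarz with respect to the measure $r\,dr$ bounds the triple integral by $2\pi\,\|u\|_{\mathrm{L}^2}\,\|v\|_{\mathrm{L}^2} < +\infty$. (As a by-product this re-proves $\tilde{u}\in\LD$ with $\|\tilde u\|_{\mathrm{L}^2}\le\sqrt{2\pi}\,\|u\|_{\mathrm{L}^2}$, consistent with the assertion made just before the lemma.)

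Since the computation is elementary and the integrability estimate is immediate, there is no genuine obstacle here: the lemma reduces to combining the polar coordinate formula with the Cauchy--Schwarz bound that legitimizes Fubini, after which the symmetry of the displayed expression closes the argument.
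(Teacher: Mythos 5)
Your proof is correct and follows the same route the paper indicates (the paper offers no written proof beyond asserting that a polar change of variables gives the result): passing to polar coordinates, observing that the resulting expression is symmetric in $u$ and $v$, and justifying Fubini--Tonelli via Cauchy--Schwarz on $\mathbb{S}^1$ and then in $r$. The only quibble is in your parenthetical by-product: with the paper's unnormalized definition of $\tilde{u}$ the bound is $\|\tilde{u}\|_{\mathrm{L}^2}\leq 2\pi\,\|u\|_{\mathrm{L}^2}$ rather than $\sqrt{2\pi}\,\|u\|_{\mathrm{L}^2}$, but this does not affect the lemma.
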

We now state a useful identity:
\begin{lemma}
	For every $\varphi\in C^{\infty}_c(\RR^2\setminus\{0\},\RR^2)$, we have:
	\begin{equation*}
		\left\langle \Diff\tilde{u},\varphi\right\rangle=\left\langle \Diff
	u,\widetilde{\varphi\cdot\frac{x}{\|x\|}}\right\rangle,
	\end{equation*}
	where $\varphi\cdot\frac{x}{\|x\|}$ denotes the mapping $x\mapsto \left(\varphi(x)\cdot \frac{x}{\|x\|}\right)$.
	\label{lemma_grad_radial}
\end{lemma}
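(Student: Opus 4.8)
\medskip\noindent\textbf{Proof strategy.}
The plan is to move the identity onto the divergence side, where it reduces to an elementary polar--coordinate computation, and to exploit the self-adjointness of the radialisation operator (\Cref{radialisation_autoadj}). Set $g\eqdef\varphi\cdot\frac{x}{\|x\|}$ and $\psi(x)\eqdef\tilde g(x)\,\frac{x}{\|x\|}$; this $\psi$ is the vector field occurring on the right-hand side of the claimed identity, under the usual identification of a radial scalar function with the radial field it generates. Since $\varphi\in C^{\infty}_c(\RR^2\setminus\{0\},\RR^2)$ we have $g\in C^{\infty}_c(\RR^2\setminus\{0\})$, hence $r\mapsto\int_{\mathbb{S}^1}g(re)\,d\mathcal{H}^1(e)$ is smooth on $(0,+\infty)$ and supported in a compact subset of $(0,+\infty)$, so $\tilde g\in C^{\infty}_c(\RR^2\setminus\{0\})$ and therefore $\psi\in C^{\infty}_c(\RR^2,\RR^2)$ (in particular $\psi$ vanishes near the origin). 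By the definition of the distributional gradient, $\langle\Diff\tilde u,\varphi\rangle=-\int_{\RR^2}\tilde u\,\div\varphi$ and $\langle\Diff u,\psi\rangle=-\int_{\RR^2}u\,\div\psi$, so it suffices to prove
\begin{equation*}
  \int_{\RR^2}\tilde u\,\div\varphi=\int_{\RR^2}u\,\div\psi\,.
\end{equation*}

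First I would apply \Cref{radialisation_autoadj}, legitimate because $\div\varphi\in C^{\infty}_c(\RR^2)\subset\LD$ and $u\in\LD$, to rewrite the left-hand side as $\int_{\RR^2}u\,\widetilde{\div\varphi}$. It then remains to verify the pointwise identity $\widetilde{\div\varphi}=\div\psi$ on $\RR^2\setminus\{0\}$ (both functions vanish near the origin, so this is enough). This is the only computational step: writing $\varphi=\varphi_r\,\hat r+\varphi_\theta\,\hat\theta$ with $\hat r=x/\|x\|$, one has $\div\varphi=\frac1r\,\partial_r(r\varphi_r)+\frac1r\,\partial_\theta\varphi_\theta$; integrating over $\mathbb{S}^1$ annihilates the angular term by $2\pi$-periodicity and commutes with $\partial_r$ (everything being smooth with compact support away from $0$), whence
\begin{equation*}
  \widetilde{\div\varphi}(x)=\frac1r\,\frac{d}{dr}\!\left(r\,G(r)\right),\qquad r=\|x\|,
\end{equation*}
where $G(r)=\int_{\mathbb{S}^1}\big(\varphi(re)\cdot e\big)\,d\mathcal{H}^1(e)$ is precisely the radial profile of $\tilde g$. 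Since $\psi=G(r)\,\hat r$ is purely radial, $\div\psi=\frac1r\,\partial_r(rG(r))$, the same expression. Combining, $\langle\Diff\tilde u,\varphi\rangle=-\int_{\RR^2}u\,\widetilde{\div\varphi}=-\int_{\RR^2}u\,\div\psi=\langle\Diff u,\psi\rangle$, which is the claim.

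The substantive point is the observation that radialising a vector field and then taking its divergence retains exactly the radial part of the divergence, which coincides with $\widetilde{\div\varphi}$; the remaining items are routine bookkeeping — that $\psi$ is an admissible compactly supported test field vanishing near $0$, that differentiation commutes with the angular average, and that \Cref{radialisation_autoadj} is applied to $\LD$ functions (using that radialisation maps $\LD$ into $\LD$, established just before that lemma). I do not expect a genuine obstacle. Should one wish to bypass \Cref{radialisation_autoadj}, the identity $\int_{\RR^2}\tilde u\,\div\varphi=\int_{\RR^2}u\,\div\psi$ can be checked head-on: writing $dx=r\,dr\,d\mathcal{H}^1(e)$ and setting $\bar u(r)\eqdef\int_{\mathbb{S}^1}u(re)\,d\mathcal{H}^1(e)$, Fubini's theorem collapses both sides to $\int_0^{+\infty}\bar u(r)\,\frac{d}{dr}\big(rG(r)\big)\,dr$.
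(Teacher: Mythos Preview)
Your proof is correct and follows essentially the same route as the paper's: use the self-adjointness of radialisation (\Cref{radialisation_autoadj}) to reduce to showing $\widetilde{\div\varphi}=\div\psi$, then verify this via the polar-coordinate formula $\div\varphi=\frac1r\partial_r(r\varphi_r)+\frac1r\partial_\theta\varphi_\theta$, observing that the angular term has zero circular mean and that averaging commutes with $\partial_r$. You are somewhat more explicit than the paper about the identification of the radial scalar $\tilde g$ with the radial vector field $\psi=\tilde g\,\hat r$ and about checking that $\psi$ is an admissible test field, but the argument is the same.
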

\begin{proof}
	From \Cref{radialisation_autoadj} we get
	\begin{equation*}\left\langle  \Diff\tilde{u},\varphi\right\rangle=\int_{\RR^2}\tilde{u}\,\div\varphi=\int_{\RR^2}u\,\widetilde{\div\varphi}\,.\end{equation*}
	Using polar coordinates, defining \begin{equation*}h(r,\theta)\eqdef (r\cos(\theta),r\sin(\theta))\,,\end{equation*}
	we get
	\begin{equation}
		\begin{aligned}
		\left(\div\varphi\right)(h(r,\theta))=&~~~~\frac{1}{r}\frac{\partial}{\partial r}(r(\varphi_r \circ h))(r,\theta)\\
		&\,+\frac{1}{r}\frac{\partial}{\partial\theta}(\varphi_{\theta} \circ h)(r,\theta)\,,
	\end{aligned}
			\label{div_polar}
	\end{equation}
	where $\varphi_r$ and $\varphi_{\theta}$ respectively denote the radial and orthoradial components of $\varphi$, i.e. \begin{equation*}
	\varphi_r(x)=\varphi(x)\cdot\frac{x}{\|x\|} \text{ and }\varphi_{\theta}(x)=\varphi(x)\cdot\frac{x^{\perp}}{\|x\|}\,.\end{equation*}
	The second term in \eqref{div_polar} has zero circular mean. Interchanging derivation and integration we get that the radialisation
	of the first term equals
	$\frac{1}{r}\frac{\partial}{\partial r}\left(r\left(\widetilde{\varphi_r}\circ h\right)\right)$, which yields
	\begin{equation*}
		\left(\widetilde{\div\varphi}\right)(x)=\div\left(\widetilde{\varphi\cdot\frac{x}{\|x\|}}\right)(x)\,.
	\end{equation*}
\end{proof}

We now introduce the radial and orthoradial components of the gradient, which are Radon measures on~${U\eqdef\RR^2\setminus\{0\}}$ defined by
\begin{equation*}\begin{aligned}
	\forall\psi\in C^{\infty}_c(U),~\left\langle\Diffrad u,\psi\right\rangle&=\left\langle \Diff u ,\psi\,\frac{x}{|x|}\right\rangle,
\\
\left\langle\Difforth u,\psi\right\rangle&=\left\langle \Diff u ,\psi\,\frac{x^\perp}{|x|}\right\rangle .
\end{aligned}\end{equation*}
\begin{proposition}
	There exist two $|\Diff u|$-measurable mappings from $U$ to $\RR$, denoted $g_{\mathrm{rad}}$ and $g_{\mathrm{orth}}$, such that
	\begin{equation*}
	g_{\mathrm{rad}}^2+g_{\mathrm{orth}}^2\leq 1~|\Diff u|\text{-almost everywhere}
	\end{equation*}
	and
	\begin{equation}
		\begin{aligned}
		\forall\psi\in C^{\infty}_c(U),~\langle \Diffrad u,\psi\rangle&=\int_U \psi(x)\,g_{\mathrm{rad}}(x)\,d|\Diff u|(x)\,, \\
		\langle \Difforth u,\psi\rangle&=\int_U \psi(x)\,g_{\mathrm{orth}}(x)\,d|\Diff u|(x)\,.
	\end{aligned}
	\label{orth_rad}
\end{equation}
\end{proposition}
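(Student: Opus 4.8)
The plan is to rely on the polar decomposition of the finite vector-valued Radon measure $\Diff u$. Since $|\Diff u|(\RR^2)<+\infty$, we may write $\Diff u=\sigma_u\,|\Diff u|$, where $\sigma_u\colon\RR^2\to\RR^2$ is $|\Diff u|$-measurable and satisfies $\|\sigma_u(x)\|=1$ for $|\Diff u|$-almost every $x$ (see e.g.\ \cite[Corollary 1.29]{ambrosioFunctionsBoundedVariation2000}). In particular, for every $\Psi\in C^{\infty}_c(\RR^2,\RR^2)$ one has $\langle\Diff u,\Psi\rangle=\int_{\RR^2}\Psi(x)\cdot\sigma_u(x)\,d|\Diff u|(x)$, and this extends by density to bounded Borel vector fields supported away from any neighbourhood of the origin.

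Next I would simply set, for $x\in U$,
\[
g_{\mathrm{rad}}(x)\eqdef\sigma_u(x)\cdot\frac{x}{|x|}\,,\qquad
g_{\mathrm{orth}}(x)\eqdef\sigma_u(x)\cdot\frac{x^\perp}{|x|}\,.
\]
These are $|\Diff u|$-measurable on $U$, being scalar products of the $|\Diff u|$-measurable map $\sigma_u$ with maps that are continuous (hence Borel) on $U$, using that $0\notin U$. Plugging $\Psi=\psi\,x/|x|$ (resp.\ $\Psi=\psi\,x^\perp/|x|$) for $\psi\in C^{\infty}_c(U)$ into the identity above and comparing with the definitions of $\Diffrad u$ and $\Difforth u$ given just before the statement immediately yields the two integral representations in \eqref{orth_rad}.

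For the pointwise bound, I would observe that for every $x\in U$ the pair $\{x/|x|,\,x^\perp/|x|\}$ is an orthonormal basis of $\RR^2$, so that $g_{\mathrm{rad}}(x)^2+g_{\mathrm{orth}}(x)^2$ is precisely the squared Euclidean norm of $\sigma_u(x)$ written in that basis, i.e.\ $g_{\mathrm{rad}}(x)^2+g_{\mathrm{orth}}(x)^2=\|\sigma_u(x)\|^2$. Since the right-hand side equals $1$ for $|\Diff u|$-a.e.\ $x$, the claimed inequality $g_{\mathrm{rad}}^2+g_{\mathrm{orth}}^2\leq 1$ holds $|\Diff u|$-almost everywhere (indeed with equality).

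I do not expect a genuine obstacle: the only mild points are the measurability of $\sigma_u$ and the fact that the orthonormal frame $(x/|x|,\,x^\perp/|x|)$ is globally defined and continuous on $U$, both of which become automatic once the origin has been removed. An alternative route avoiding the polar decomposition would be to apply the Radon--Nikodym theorem directly to $\Diffrad u$ and $\Difforth u$, which are dominated by $|\Diff u|$ since $|x/|x||=|x^\perp/|x||=1$, obtaining densities $g_{\mathrm{rad}},g_{\mathrm{orth}}$ with $|g_{\mathrm{rad}}|,|g_{\mathrm{orth}}|\leq 1$; the joint bound would then be recovered by testing $v\mapsto\langle\Diff u,\psi v\rangle$ against the unit vectors $v=\cos\theta\,x/|x|+\sin\theta\,x^\perp/|x|$ for $\theta$ in a countable dense set of angles and passing to the supremum over $\theta$ by continuity.
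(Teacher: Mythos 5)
Your proof is correct, and it takes a genuinely different route from the paper's. The paper obtains $g_{\mathrm{rad}}$ and $g_{\mathrm{orth}}$ by applying the Radon--Nikodym/Lebesgue differentiation theorem separately to the two scalar measures $\Diffrad u$ and $\Difforth u$, which are absolutely continuous with respect to $|\Diff u|$; the joint bound $g_{\mathrm{rad}}^2+g_{\mathrm{orth}}^2\leq 1$ then requires an extra duality argument, namely writing $|\Diff u|(A)$ for open $A\subset U$ as a supremum of $\langle\Diffrad u,\varphi_1\rangle+\langle\Difforth u,\varphi_2\rangle$ over pairs with $\|\varphi_1^2+\varphi_2^2\|_\infty\leq 1$ and arguing by contradiction. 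You instead invoke the polar decomposition $\Diff u=\sigma_u\,|\Diff u|$ of the vector measure itself and read off $g_{\mathrm{rad}}$ and $g_{\mathrm{orth}}$ as the coordinates of $\sigma_u$ in the orthonormal frame $(x/|x|,\,x^\perp/|x|)$, which is well defined and smooth on $U$. This packages the paper's two steps into one citation: the representation \eqref{orth_rad} follows by testing against $\psi\,x/|x|$ and $\psi\,x^\perp/|x|$ (which are already in $C^\infty_c(U,\RR^2)$, so the density extension you mention is not even needed), and the bound comes for free, indeed with equality, since $\|\sigma_u\|=1$ holds $|\Diff u|$-a.e. Your version is shorter and yields the sharper identity $g_{\mathrm{rad}}^2+g_{\mathrm{orth}}^2=1$ a.e., which is in fact implicitly used later in the paper's proof of \Cref{radialisation_prop} when it writes $\int_U\sqrt{g_{\mathrm{rad}}^2+g_{\mathrm{orth}}^2}\,d|\Diff u|=\int_U d|\Diff u|$; the paper's componentwise approach has the minor advantage of not presupposing the polar decomposition of vector measures, at the cost of the separate duality computation. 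Your sketched alternative (Radon--Nikodym on each component plus testing against a countable dense family of unit directions) is essentially a rediscovery of the paper's argument and would also work.
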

\begin{proof}
	The existence of the $|\Diff u|$-measurable mappings $g_{\mathrm{rad}}$ and  $g_{\mathrm{orth}}$, as well as \eqref{orth_rad}, come from Lebesgue differentiation theorem and the fact $\Diffrad u$ and $\Difforth u$ are absolutely continuous with respect to $\Diff u$. Now for every open set  $A\subset U$ we have:
	\begin{equation*}
		\begin{aligned}
			|\Diff u|(A)&=\sup\left\{\left\langle \Diff u,\varphi\right\rangle\,\big\rvert\,\varphi\in C^{\infty}_c(A,\RR^2),~\|\varphi\|_{\infty}\leq 1\right\}\\
				&=\sup\bigg\{\left\langle \Diff u,\varphi_1\frac{x}{\|x\|}\right\rangle+
				\left\langle \Diff u,\varphi_2\frac{x^{\perp}}{\|x\|}\right\rangle \text{ s.t.}\\
				&\qquad\qquad~\varphi_i\in C^{\infty}_c(A),~\left\|\varphi_1^2+\varphi_2^2\right\|_{\infty}\leq 1\bigg\}\\
				&=\sup\bigg\{\left\langle \Diffrad u,\varphi_1\right\rangle+
				\left\langle \Difforth u,\varphi_2\right\rangle\text{ s.t. }\\
				&\qquad\qquad~\varphi_i\in C^{\infty}_c(A),~\left\|\varphi_1^2+\varphi_2^2\right\|_{\infty}\leq 1\bigg\}\,.
		\end{aligned}
	\end{equation*}
	Hence for $\varphi_i\in C^{\infty}_c(A)$ such that $\left\| \varphi_1^2+\varphi_2^2\right\|_{\infty}\leq 1$ we have:
	\begin{equation*}
		\int_{A}1\,d|\Diff u|\geq \int_{A}(g_{\mathrm{rad}}\,\varphi_1+g_{\mathrm{orth}}\,\varphi_2)\,d|\Diff u|\,.
	\end{equation*}
	If we had $g_{\mathrm{rad}}^2+g_{\mathrm{orth}}^2> 1$ on a set of non zero measure $|\Diff u|$, we would have a contradiction.
\end{proof}

We can now prove \Cref{radialisation_prop}. Indeed, since $\{0\}$ is $\mathcal{H}^1$-negligible, we have that
\begin{equation*}|\Diff u|(\{0\})=|\Diff \tilde{u}|(\{0\})=0\,,\end{equation*} and moreover
\begin{equation}
	|\Diff\tilde{u}|(\RR^2\setminus\{0\})\leq |\Diffrad u|(\RR^2\setminus\{0\})\leq |\Diff u|(\RR^2\setminus\{0\})\,.
\end{equation}
The first equality comes from \Cref{lemma_grad_radial}, while the second is easily obtained from the definition of $\Diffrad$. Now if we have~${|\Diffrad u|(U)=|\Diff u|(U)}$, then we get
\begin{equation*}
	\int_{U}g_{\mathrm{rad}}\,d|\Diff u|=\int_{U}\sqrt{g_{\mathrm{rad}}^2+g_{\mathrm{orth}}^2}\,d|\Diff u|=\int_{U}d|\Diff u|\,.
\end{equation*}
This yields $g_{\mathrm{orth}}=0$ (and $|g_{\mathrm{rad}}|=1$) $|\Diff u|$-almost everywhere. Hence $\Difforth u=0$.

Let us now show this implies that $u$ is radial. If we define~${A\eqdef\,]0,+\infty[\times]-\pi,\pi[}$, then we have that the mapping given by~${h:(r,\theta)\mapsto (r\cos\theta,r\sin\theta)}$ is a
$C^{\infty}$-diffeomorphism from $A$ to $\RR^2\setminus\left(\RR_{-}\times\{0\}\right)$. Now if $\xi\in C_c^{\infty}(A)$ we have that~${\xi\circ h^{-1}\in C_c^{\infty}(h(A))}$ and
\begin{equation*}
	\begin{aligned}
		0&=\left\langle \Difforth u,\xi\circ h^{-1}\right\rangle\\
		&= \int_{\RR^2}u\,\div\left(\left(\xi\circ h^{-1}\right)\frac{x^{\perp}}{\|x\|}\right)\\
		&=\int_{0}^{+\infty}\int_{-\pi}^{\pi}\left(u\circ h\right)(r,\theta)\,\left(\frac{1}{r}\frac{\partial}{\partial\theta}(\xi)(r,\theta)\right)r\,d\theta\,dr\,.
	\end{aligned}
\end{equation*}
This means that $\frac{\partial\theta}{\partial}(u\circ h)=0$ in the sense of distributions, and hence that there exists\footnote{To see this, notice that if we convolve $u\circ h$ with an approximation of unity $\rho_{\epsilon}$, then we have \begin{equation*}\frac{\partial}{\partial\theta}\left((u\circ h)\star\rho_{\epsilon}\right)=(u\circ h)\star \frac{\partial}{\partial
\theta}\rho_{\epsilon}=0\,,\end{equation*}
hence the smooth function $(u\circ h)\star \rho_{\epsilon}$ is equal to some function~$g_{\epsilon}$ that depends only on
$r$. Letting $\epsilon\to 0^+$, we see that for almost every $(r,\theta)$, $u\circ h$ only depends on $r$.} a mapping~${g:\,]0,+\infty[\to\RR}$ such that for almost every $(r,\theta)\in A$, $(u\circ h)(r,\theta)=g(r)$. We finally get~${u(x)=g(\|x\|)}$ for almost every $x\in h(A)$, which shows $u$ is radial.

\subsection{Lemmas used in the proof of \Cref{cheeger-radial}}

We take $\eta\eqdef\varphi$ and keep the assumptions of \Cref{toy}.

\begin{lemma}
  Let $f:\RR\to\RR_+$ be square integrable, even and decreasing on $\RR_+$. Then for every measurable set $A$ such that~${|A|<+\infty}$ we have
  \begin{equation*}\int_{A}f\leq\int_{A^s}f\,,\end{equation*}
  where $A^s\eqdef[-\frac{|A|}{2},\frac{|A|}{2}]$. Moreover, equality holds if and only if $|A\triangle A^s|=0$.
  \label{steiner_1d}
\end{lemma}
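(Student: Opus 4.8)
The plan is to prove a one–dimensional instance of the Hardy--Littlewood rearrangement inequality by a direct, layer-free argument: compare $A$ with the centred interval $A^s$ of the same measure, using only that the set on which one ``gains'' mass lies closer to the origin than the set on which one ``loses'' mass. Throughout we may assume $|A|>0$ (otherwise $A$ and $A^s$ are both Lebesgue-negligible, both integrals vanish, and $|A\triangle A^s|=0$), and we set $\rho\eqdef|A|/2>0$, so that $A^s=[-\rho,\rho]$. Since $f$ is square integrable and $|A|<+\infty$, Cauchy--Schwarz gives $\int_A f\leq\sqrt{|A|}\,\norm{f}_{\mathrm{L}^2}<+\infty$ and likewise $\int_{A^s}f<+\infty$, so every integral below is finite; moreover $f(\rho)<+\infty$ since $f$ is decreasing and square integrable near the origin. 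This finiteness is what justifies the subtraction performed in the next step.

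The first step is the elementary measure identity
\[
|A^s\setminus A| = |A^s|-|A\cap A^s| = |A| - |A\cap A^s| = |A\setminus A^s|\,,
\]
which uses only $|A^s|=|A|$. Splitting both integrals over $A\cap A^s$ and its two complements gives $\int_{A^s}f-\int_A f = \int_{A^s\setminus A}f - \int_{A\setminus A^s}f$. Now on $A^s\setminus A$ every point $x$ satisfies $|x|\leq\rho$, hence $f(x)=f(|x|)\geq f(\rho)$ because $f$ is even and decreasing on $\RR_+$, so $\int_{A^s\setminus A}f\geq f(\rho)\,|A^s\setminus A|$; symmetrically, on $A\setminus A^s$ every point satisfies $|x|>\rho$, hence $f(x)\leq f(\rho)$ and $\int_{A\setminus A^s}f\leq f(\rho)\,|A\setminus A^s|$. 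Combining these two bounds with the measure identity and $f(\rho)\geq 0$ yields $\int_{A^s}f-\int_A f\geq f(\rho)\bigl(|A^s\setminus A|-|A\setminus A^s|\bigr)=0$, which is the claimed inequality.

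For the equality case, assume $\int_A f=\int_{A^s}f$. Then the two displayed inequalities of the previous paragraph must both be equalities, i.e. $\int_{A^s\setminus A}\bigl(f-f(\rho)\bigr)=0$ and $\int_{A\setminus A^s}\bigl(f(\rho)-f\bigr)=0$. Here I would invoke that ``decreasing'' means \emph{strictly} decreasing (the convention in force in \Cref{toy}): the integrand $f-f(\rho)$ is $\geq0$ on $A^s\setminus A$ and is in fact $>0$ off the null set $\{|x|=\rho\}$, so vanishing of its integral forces $|A^s\setminus A|=0$; likewise $f(\rho)-f>0$ everywhere on $A\setminus A^s$, so $|A\setminus A^s|=0$. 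Hence $|A\triangle A^s|=|A^s\setminus A|+|A\setminus A^s|=0$, and the converse implication is trivial since $\int_A f$ depends on $A$ only modulo Lebesgue-null sets.

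I expect no real obstacle here; the argument is essentially routine. The two points meriting a line of care are: checking finiteness of all integrals before writing $\int_{A^s}f-\int_A f=\int_{A^s\setminus A}f-\int_{A\setminus A^s}f$, and noting that the ``if and only if'' genuinely uses strict monotonicity — it is false for merely non-increasing $f$ (e.g.\ $f=\mathbf{1}_{[-1,1]}$ with $A=[-0.3,0.7]$), so the strict-monotonicity convention of \Cref{toy} is not cosmetic in the equality statement.
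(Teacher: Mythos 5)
Your proof is correct, but it reaches the main inequality by a different route than the paper. The paper uses the layer-cake formula $\int_A f=\int_0^{+\infty}|\{f\geq t\}\cap A|\,dt$, observes that each superlevel set $\{f\geq t\}$ is a centred interval $[-\alpha,\alpha]$ (by evenness and monotonicity), and concludes from the elementary bound $|[-\alpha,\alpha]\cap A|\leq\min(2\alpha,|A|)=|[-\alpha,\alpha]\cap A^s|$. You instead avoid the layer-cake decomposition entirely: you cancel the common part $A\cap A^s$ and compare the two residual pieces against the single threshold value $f(\rho)$ with $\rho=|A|/2$, using $|A^s\setminus A|=|A\setminus A^s|$. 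Both arguments are sound; yours is more economical in that the same decomposition handles the inequality and the equality case in one pass, whereas the paper switches methods (it, too, falls back on exactly your threshold comparison $\int_{A^s\setminus A}f>f(\rho)|A^s\setminus A|\geq\int_{A\setminus A^s}f$ to settle the equality case). The layer-cake route has the advantage of generalizing immediately to rearrangements with respect to non-interval level sets and of not requiring the subtraction step, but since you verified finiteness of all integrals via Cauchy--Schwarz before subtracting, your argument is complete. Your closing remark that the ``only if'' direction genuinely needs the strict-monotonicity convention of Section~\ref{toy} is accurate and worth keeping: the paper's own strict inequality in the equality case silently relies on it as well.
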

\begin{proof}
  We have
	\begin{equation*}
		\int_{A}f=\int_{0}^{+\infty}\left|\left\{f\,\mathbf{1}_A\geq t\right\}\right|dt=\int_{0}^{+\infty}\left|\left\{f\geq t\right\}  \cap A\right|dt\,.
	\end{equation*}
  For all $t>0$ there exists $\alpha$ such that $\{f\geq t\}=[-\alpha,\alpha]$, so that we have
	\begin{equation*}
		\begin{aligned}
		\left|\left\{f\geq t\right\}  \cap A\right|&=\left|[-\alpha,\alpha]  \cap A\right|\leq \text{min}(2\alpha,|A|)\\&=\left|[-\alpha,\alpha]\cap [-|A|/2,|A|/2]\right|\\&=\left|\left\{f\geq t\right\}  \cap A^s\right|.
	\end{aligned}
	\end{equation*}
  Hence \begin{equation*}\int_{A}f\leq \int_{0}^{+\infty}\left|\left\{f\geq t\right\}  \cap A^S\right|dt=\int_{A^s}f\,.\end{equation*}
  Now if $|A\triangle A^s|>0$ then $|A\setminus A^s|=|A^s\setminus A|>0$ and we have
  \begin{equation*}
		\begin{aligned}
		\int_{A^s}f&=\int_{A\cap A^s}f+\int_{A^s\setminus A}f \\&>\int_{A\cap A^s}f+f\left(\frac{|A|}{2}\right)\,|A^s\setminus A|\\&\geq \int_{A\cap A^s}f+\int_{A\setminus A^s}f=\int_{A}f\,,
	\end{aligned}
	\end{equation*}
  which proves the second part of the result.
\end{proof}

\begin{lemma}
  Let $E\subset\RR^2$ be s.t. $0<|E|<\infty$ and~${P(E)<\infty}$. Then for any $\nu\in\mathbb{S}^1$, denoting $E^s_{\nu}$ the Steiner symmetrization of $E$
  with respect to the line through the origin directed by $\nu$, we have
  \begin{equation*}\frac{\int_{E^s_{\nu}}\eta}{P(E^s_{\nu})}\geq \frac{\int_{E}\eta}{P(E)}\,,\end{equation*}
  with equality if and only if $|E\triangle E^s_{\nu}|=0$.
  \label{steiner}
\end{lemma}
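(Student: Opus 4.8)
The plan is to reduce the claim, via \Cref{steiner_1d}, to a one-dimensional rearrangement applied to the slices of $E$ perpendicular to the symmetrization axis, combined with the classical monotonicity of the perimeter under Steiner symmetrization. Fix an orthonormal frame $(\nu,\nu^{\perp})$ of $\RR^2$ and write $x = s\,\nu + t\,\nu^{\perp}$, so that the symmetrization axis is $\{t=0\}$. For $s\in\RR$ set $E_s \eqdef \{t\in\RR : s\,\nu+t\,\nu^{\perp}\in E\}$; since $\abs{E}<+\infty$ we have $\abs{E_s}<+\infty$ for a.e.\ $s$, by definition the slice of $E^s_{\nu}$ at abscissa $s$ is the centered interval $I_s\eqdef[-\abs{E_s}/2,\abs{E_s}/2]$, and $\abs{E^s_{\nu}}=\abs{E}$ by Fubini.

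For the weighted-area term I would use that $\eta=\varphi$ is radial, so $\eta(s\,\nu+t\,\nu^{\perp})$ depends only on $\sqrt{s^2+t^2}$; hence for each fixed $s$ the function $f_s\colon t\mapsto \eta(s\,\nu+t\,\nu^{\perp})$ is even and --- here it is crucial that the axis passes through the origin --- strictly decreasing in $\abs{t}$, since $\varphi$ is strictly decreasing along rays. Moreover $f_s\in\mathrm{L}^2(\RR)$ for a.e.\ $s$, because $\int_{\RR}\int_{\RR}\varphi(s\,\nu+t\,\nu^{\perp})^2\,dt\,ds=\norm{\varphi}_{\mathrm{L}^2}^2<+\infty$. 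Then \Cref{steiner_1d}, applied with $f=f_s$ and $A=E_s$, gives $\int_{E_s}f_s\le\int_{I_s}f_s$ for a.e.\ $s$, with equality if and only if $\abs{E_s\triangle I_s}=0$. Integrating in $s$ and using Fubini twice yields $\int_{E}\eta\le\int_{E^s_{\nu}}\eta$; since the integrand on the right dominates the one on the left for a.e.\ $s$, equality of the two integrals forces equality of the slice integrals for a.e.\ $s$, hence $\abs{E_s\triangle I_s}=0$ for a.e.\ $s$, i.e.\ $\abs{E\triangle E^s_{\nu}}=0$.

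For the denominator I would simply invoke that Steiner symmetrization maps sets of finite perimeter to sets of finite perimeter and does not increase the perimeter (see e.g.\ \cite[Chapter 14]{maggiSetsFinitePerimeter2012}), so that $0<P(E^s_{\nu})\le P(E)<+\infty$, the lower bound following from the isoperimetric inequality and $\abs{E^s_{\nu}}=\abs{E}\in(0,+\infty)$. Since $\varphi>0$ and $\abs{E}>0$ we also have $\int_{E}\eta>0$, so
\[
\frac{\int_{E^s_{\nu}}\eta}{P(E^s_{\nu})}\ \ge\ \frac{\int_{E}\eta}{P(E^s_{\nu})}\ \ge\ \frac{\int_{E}\eta}{P(E)}\,.
\]
If equality holds throughout, the first inequality must be an equality, i.e.\ $\int_{E^s_{\nu}}\eta=\int_{E}\eta$, and by the equality case of the slicing argument this forces $\abs{E\triangle E^s_{\nu}}=0$; the converse implication is immediate, since both the weighted area and the perimeter are unchanged by Lebesgue-negligible modifications of $E$.

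I do not expect a genuine obstacle: the argument is a standard slicing-plus-symmetrization computation. The point worth highlighting is that the strict inequality (equivalently, the equality case) is driven \emph{entirely} by the weighted-area term through \Cref{steiner_1d}, so one does not need the delicate equality case of Steiner's inequality for the perimeter --- the crude bound $P(E^s_{\nu})\le P(E)$ suffices. The only minor technical care is to ensure the one-dimensional lemma applies for a.e.\ $s$, i.e.\ square integrability of the slices $f_s$, which follows from $\varphi\in\LD$ by Fubini.
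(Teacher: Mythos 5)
Your proof is correct and follows essentially the same route as the paper: the crude bound $P(E^s_{\nu})\le P(E)$ from \cite[Chapter 14]{maggiSetsFinitePerimeter2012} for the denominator, slicing perpendicular to the axis and applying \Cref{steiner_1d} to each slice for the numerator, and the equality case driven entirely by the one-dimensional rearrangement (the paper likewise argues that $|E\triangle E^s_{\nu}|>0$ forces a positive-measure set of slices with strict inequality). Your write-up is in fact slightly more careful than the paper's on the hypotheses of \Cref{steiner_1d} (square integrability of a.e.\ slice via Fubini) and on why $P(E^s_{\nu})>0$.
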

\begin{proof}
From \cite[theorem 14.4]{maggiSetsFinitePerimeter2012} we know that we have~${P(E_{\nu}^s)\leq P(E)}$. We now perform a change of coordinates in order to
have $E^s_{\nu}=\left\{(x_1,x_2)\in\RR^2\,\rvert\,|x_2|\leq\frac{\mathcal{L}^1(E_{x_1})}{2}\right\}$ with \begin{equation*}E_{x_1}\eqdef\left\{x_2\in\RR\,\rvert\, (x_1,x_2)\in E\right\}.\end{equation*}
Now we have
\begin{equation*}\begin{aligned}
	\int_{E}\eta&=\int_{-\infty}^{+\infty}\left(\int_{-\infty}^{+\infty}\eta(x_1,x_2)\,\mathbf{1}_{E}(x_1,x_2)\,dx_2\right)dx_1\\
&=\int_{-\infty}^{+\infty}\left(\int_{E_{x_1}}\eta(x_1,\cdot)\right)dx_1\,,\end{aligned}\end{equation*}
with $E_{x_1}=\left\{x_2\in\RR\,\rvert\, (x_1,x_2)\in E\right\}$. For almost every~${x_1\in\RR}$ we have that~${E_{x_1}}$ is measurable, has finite measure, and that~${\eta(x_1,\cdot)}$ is nonnegative, square integrable, even and decreasing on $\RR_+$. We can hence apply \Cref{steiner_1d} and get that
\begin{equation}
	\int_{E}\eta\geq \int_{-\infty}^{+\infty}\left(\int_{\left(E_{x_1}\right)^s}\eta(x_1,\cdot)\right)d_{x_1}=\int_{E^s_{\nu}}\eta\,.
   \label{ineq_aux_steiner}
\end{equation}
Moreover, if $|E\triangle E^s_{\nu}|>0$, then since
\begin{equation*}
  \begin{aligned}
    |E\triangle E^s_{\nu}|&=\int_{0}^{+\infty}\left(\int_{0}^{+\infty}|\mathbf{1}_E(x_1,x_2)-
\mathbf{1}_{E^s_{\nu}}(x_1,x_2)|\,dx_2\right)dx_1\\
&=\int_{0}^{+\infty}\left(\int_{0}^{+\infty}\left|\mathbf{1}_{E_{x_1}}(x_2)-
\mathbf{1}_{\left(E_{x_1}\right)^s}(x_2)\right|\,dx_2\right)dx_1\\
&=\int_{0}^{+\infty}\left|E_{x_1}\triangle \left(E_{x_1}^s\right)\right|\,dx_1\,,
\end{aligned}
\end{equation*}
we get that $\mathcal{L}^1\left(\left\{x_1\in\RR\,\rvert\,\left|E_{x_1}\triangle\left(E_{x_1}\right)^s\right|>0\right\}\right)>0$ and hence that \eqref{ineq_aux_steiner} is strict.
\end{proof}

\begin{lemma}
  Under Assumption 1, the mapping \begin{equation*}\mathcal{G}:R\mapsto \frac{1}{R}\int_{0}^R r\,\tilde{\varphi}(r)\,dr\end{equation*} has a unique maximizer.
	\label{lemma_variations}
\end{lemma}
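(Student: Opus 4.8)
The plan is to analyse the sign of $\mathcal{G}'$ directly. Set $f(r)\eqdef r\,\tilde{\varphi}(r)$ and $F(R)\eqdef\int_0^R f(r)\,dr$, so that $\mathcal{G}(R)=F(R)/R$ on $\RR_+^*$. Since $\varphi$ is positive, $f$ is positive on $\RR_+^*$, and since $\varphi$ — hence $\tilde{\varphi}$, hence $f$ — is continuous, $F$ is of class $C^1$ with $F'=f$; therefore $\mathcal{G}$ is of class $C^1$ on $\RR_+^*$ and
\[
\mathcal{G}'(R)=\frac{R\,f(R)-F(R)}{R^2}\,.
\]
Thus the sign of $\mathcal{G}'(R)$ is that of $g(R)\eqdef R\,f(R)-F(R)$, and it suffices to show that $g$ is positive on some interval $(0,R^*)$ and negative on $(R^*,+\infty)$.

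To this end I would differentiate $g$: using $F'=f$ and the $C^1$ regularity of $f$ granted by Assumption 1, $g'(R)=f(R)+R\,f'(R)-f(R)=R\,f'(R)$. By Assumption 1, $g'>0$ on $(0,\rho_0)$ and $g'<0$ on $(\rho_0,+\infty)$, so $g$ is strictly increasing on $(0,\rho_0]$ and strictly decreasing on $[\rho_0,+\infty)$.

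It then remains to control the boundary behaviour of $g$. As $R\to 0^+$ one has $R\,f(R)=R^2\,\tilde{\varphi}(R)\to 0$ (because $\tilde{\varphi}$, being continuous, is bounded near the origin) and $F(R)\to 0$, hence $g(0^+)=0$; combined with strict monotonicity on $(0,\rho_0]$ this gives $g>0$ on $(0,\rho_0]$, in particular $g(\rho_0)>0$. As $R\to+\infty$, $R\,f(R)\to 0$ by Assumption 1 while $F(R)\to\int_0^{+\infty}f\in(0,+\infty]$ (positive since $f>0$), so $\lim_{R\to+\infty}g(R)<0$. Being continuous and strictly decreasing on $[\rho_0,+\infty)$, with a positive value at $\rho_0$ and a negative limit at $+\infty$, the function $g$ vanishes at a unique point $R^*\in(\rho_0,+\infty)$, and $g>0$ on $(0,R^*)$, $g<0$ on $(R^*,+\infty)$.

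I would then conclude: $\mathcal{G}'>0$ on $(0,R^*)$ and $\mathcal{G}'<0$ on $(R^*,+\infty)$, so $\mathcal{G}$ is strictly increasing on $(0,R^*]$ and strictly decreasing on $[R^*,+\infty)$; hence $R^*$ is its unique maximizer. There is no genuine difficulty here; the only point deserving care is the behaviour at the endpoints — in particular one must not assume $\int_0^{+\infty}f$ is finite, but this does not affect the argument, since only the (negative) sign of $\lim_{R\to+\infty}g$ is used.
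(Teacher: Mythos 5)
Your proof is correct and follows essentially the same route as the paper: both reduce to the sign of the numerator $g(R)=R\,f(R)-\int_0^R f$ of $\mathcal{G}'$, observe via the integration-by-parts identity that $g'(R)=R\,f'(R)$ so that $g$ inherits the variations of $f$ under Assumption 1, and conclude that $g$ changes sign exactly once. You merely spell out the endpoint analysis ($g(0^+)=0$ and $\lim_{R\to+\infty}g(R)<0$, possibly $-\infty$) that the paper leaves as "easy to show."
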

\begin{proof}
  Since $\varphi$ (and hence $\tilde{\varphi}$) is continuous, we have that $\mathcal{G}$ is $C^1$ on $R_+^*$ and
 \begin{equation*}
	 \mathcal{G}'(R)=\frac{R\,(R\,\tilde{\varphi}(R))-\int_{0}^R r\,\tilde{\varphi}(r)\,dr}{R^2}\,.
\end{equation*}
 Now an integration by part yields that for any continuously differentiable function $h:]0,+\infty[\to\RR$ and for any $x>0$ we have \begin{equation*}H(x)\eqdef x\,h(x)-\int_{0}^x h=\int_{0}^x t\,h'(t)\,dt\,,\end{equation*}
 which shows $H'(x)=x\,h'(x)$. This means the mappings
 \begin{equation*}
	 R\mapsto R\,(R\,\tilde{\varphi}(R))-\int_{0}^R r\,\tilde{\varphi}(r)\,dr \text{ and } R\mapsto f(R)=R\,\tilde{\varphi}(R)
 \end{equation*}
	 have the same variations. Under Assumption 1, it is then easy to show there exists $R_0>0$ such that~${\mathcal{G}'(R_0)=0}$, $\mathcal{G}'$ is positive on $]0,R_0[$ and negative on $]R_0,+\infty[$, hence the result.
\end{proof}

\subsection{Proof of \Cref{cheeger_polygon_radius}}
\label{proof_cheeger_polygon_radius}
We define \begin{equation*}
R_n(\theta)\eqdef R\,\frac{\cos\left(\pi/n\right)}{\cos\left(\left(\theta\text{ mod }2\pi/n\right)-\pi/n\right)},
\end{equation*}
\revision{so that in polar coordinates an equation of the boundary of a regular $n$-gon of radius $R$ with a vertex at $(0,0)$ is given by~${r(\theta)=R_n(\theta)}$}. Under Assumption 1 we have, for all $R>0$:
\begin{equation*}
	\begin{aligned}
		&2\pi R\,\frac{\tan\left(\pi/n\right)}{\pi/n}\left|\mathcal{G}(R)-\mathcal{G}_n(R)\right|\\
		&=\bigg|\frac{\tan\left(\pi/n\right)}{\pi/n}\,\int_{0}^{2\pi}\int_{0}^R
		r\tilde{\varphi}(r)\,dr\, d\theta\\
		&\qquad\qquad\quad-\int_{0}^{2\pi}\int_{0}^{R_n(\theta)} r\tilde{\varphi}(r)\,dr\,
		d\theta\bigg|\\
		&=\bigg|\int_{0}^{2\pi}\int_{R_n(\theta)}^R r\tilde{\varphi}(r)\,dr\,
		d\theta\\
		&\qquad-\left(1-\frac{\tan\left(\pi/n\right)}{\pi/n}\right)\int_{0}^{2\pi}\int_{0}^{R} r\tilde{\varphi}(r)\,dr\,
		d\theta\bigg|\\
		&\leq \bigg[2\pi\,\underset{\theta\in[0,2\pi]}{\sup}\left|R-
		R_n(\theta)\right| \|f\|_{\infty}\\
		&\qquad+\left(1-\frac{\tan\left(\pi/n\right)}{\pi/n}\right)\,2\pi R\,\|f\|_{\infty}\bigg]\\
		&\leq \|f\|_{\infty}\left[(1-
		\cos\left(\pi/n\right)) +\left(1-\frac{\tan\left(\pi/n\right)}{\pi/n}\right)\right].
	\end{aligned}
\end{equation*}
We hence obtain that $\left|\mathcal{G}(R)-\mathcal{G}_n(R)\right|_{\infty}=O\left(\frac{1}{n^2}\right)$.

Now assuming $f$ is of class $C^2$ and $f''(\rho_0)<0$ we want to prove that for $n$ large enough, $\mathcal{G}_n$ has a unique maximizer~$R^*_n$ and~${|R^*_n-R^*|=O\left(\frac{1}{n}\right)}$. Denoting $\alpha_n(s)\eqdef\frac{\cos\left(\pi/n\right)}{\cos\left(\pi s/n\right)}$, we have:
\begin{equation*}
	\begin{aligned}
		\mathcal{G}_n(R)&=\frac{1}{2\pi R\,\frac{\tan\left(\pi/n\right)}{\pi/n}}\int_{0}^{2\pi}\int_{0}^{R_n(\theta)}r\tilde{\varphi}(r)\,dr\,d\theta\\
		&=\frac{1}{2\pi R\,\frac{\tan\left(\pi/n\right)}{\pi/n}}~n\int_{0}^{2\pi/n}\int_{0}^{R\frac{\cos\left(\pi/n\right)}{\cos\left(\theta-\pi/
		n\right)}}r\tilde{\varphi}(r)\,dr\,d\theta \\
		&=\frac{\pi/n}{R\,\tan\left(\pi/n\right)}~\int_{0}^{1}\int_{0}^{R\,\alpha_n(s)}r\tilde{\varphi}(r)\,dr\,ds\\
		&=\frac{\pi/n}{\tan\left(\pi/n\right)}\frac{1}{R}\int_{0}^{R}r\left[\int_{0}^{1}\alpha_n(s)^2\,\tilde{\varphi}(r\,\alpha_n(s))\,ds\right]dr\,.
	\end{aligned}
\end{equation*}
Considering \Cref{lemma_variations} and defining \begin{equation*}f_n:r\mapsto r\left[\int_{0}^{1}\alpha_n(s)^2\,\tilde{\varphi}(r\,\alpha_n(s))\,ds\right],\end{equation*}
we see that showing $f_n'$ is positive on $]0,\rho_1[$ and negative on~${]\rho_1,+\infty[}$ for some $\rho_1$  is sufficient to prove $\mathcal{G}_n$ has a unique maximizer. Now we have
\begin{equation*}f_n'(r)=\int_{0}^1 \alpha_n(s)^2\left(\tilde{\varphi}(r\,\alpha_n(s))+r\,\alpha_n(s)\,\tilde{\varphi}'(r\,\alpha_n(s))\right)ds\,.\end{equation*}
The image of $[0,1]$ by $s\mapsto r\,\alpha_n(s)$ is $[r\cos\left(\pi/n\right),r]$. Since the mapping $r\mapsto \tilde{\varphi}(r)+r\tilde{\varphi}'(r)=(r\tilde{\varphi})'(r)$ is positive on~${]0,\rho_0[}$ and negative on $]\rho_0,+\infty[$, we get that $f_n'$ is
positive on~${]0,\rho_0[}$ and negative on~${]\rho_0/\cos\left(\pi/n\right),+\infty[}$ and it hence remains to investigate its sign on~${[\rho_0,\rho_0/\cos\left(\pi/n\right)]}$. But since $f$ is of class $C^2$ and~ ${f''(\rho_0)<0}$ there exists
$\epsilon>0$ s.t.~${f''(r)<0}$ on~${]\rho_0-\epsilon,\rho_0+\epsilon[}$. For $n$ large enough, we hence have \begin{equation*}[\rho_0\cos\left(\pi/n\right),\rho_0/\cos\left(\pi/n\right)]\subset\,]\rho_0-\epsilon,\rho_0+\epsilon[\,,\end{equation*}
which implies that
\begin{equation*}\forall r\in[\rho_0,\rho_0/\cos\left(\pi/n\right)],~ r\,\alpha_n(s)\in\,]\rho_0-\epsilon,\rho_0+\epsilon[\,,\end{equation*} and hence $f_n''(r)<0$. This finally shows there exists $\rho_1$ such that $f_n'$ is positive on $]0,\rho_1[$ and negative on $]\rho_1,+\infty[$, and the result follows as in the proof of \Cref{lemma_variations}.

Now $R^*$ and $R_n^*$ and  are respectively the unique solutions of $F(0,R)=0$ and $F(\pi/n,R)=0$ with
\begin{equation*}
\begin{aligned}
	F(t,R)&\eqdef\left[\int_{0}^R f_t\right] - R\,f_t(R)\,,\\f_t(r)&\eqdef r\int_{0}^1 \alpha(t,s)^2\,\tilde{\varphi}(r\,\alpha(t,s))\,ds\,,\\ \alpha(t,s)&\eqdef\frac{\cos t}{\cos(ts)}\,.
\end{aligned}
\end{equation*}
One can then show $\frac{\partial}{\partial R}F(0,R)=0$ if and only if $f_0'(R)=0$, i.e. if and only if $R=\rho_0$. But from the proof of \Cref{lemma_variations} and the above, it is easy to see neither $R^*$ nor $R_n^*$ equals $\rho_0$. We can hence apply the implicit function theorem to finally get that $|R^*-R_n^*|=O\left(\frac{1}{n^2}\right)$.

\subsection{Proof of \Cref{critical_poly_cheeger}}
\label{critical_poly_cheeger_proof}
\subsubsection{Triangles}
Let $T$ be a triangle. Up to a rotation of the axis, we can assume that there exist~${a<b}$ and two affine functions $u,v$
such that $v\geq u$ and $u(a)=v(a)$ with
\begin{equation*}
T=\{(x,y)\in\RR^2\,\big\rvert\, x\in[a,b],~u(x)\leq y\leq v(x)\}\,.
\end{equation*}
The Steiner symmetrization $T_s$ of $T$ with respect to the line through the origin perpendicular to the side $\{b\}\times [u(b),v(b)]$ is hence obtained by replacing $u$ and $v$ in the definition of $T$ by $(u-v)/2$ and $(v-u)/2$. For all $\theta\in[0,1]$, we define
\begin{equation*}
	\begin{aligned}
u_{\theta}&\eqdef(1-\theta)\,u+\theta\,(-v)\,,\\v_{\theta}&\eqdef\theta\,(-u)+(1-\theta)\,v\,,
\end{aligned}\end{equation*}
and
\begin{equation*}
	T_{\theta}\eqdef\left\{(x,y)\in\RR^2\,\big\rvert\, x\in[a,b],~u_{\theta}(x)\leq y\leq v_{\theta}(x)\right\},
\end{equation*}
so that $T_{1/2}=T_s$. Let us now show that  \begin{equation*}
\frac{d}{d\theta}\mathcal{J}\,(T_{\theta})\big\rvert_{\theta=0}\leq 0\,,
\end{equation*}
with equality if and only if $T$ is symmetric with respect to the symmetrization line.

\medskip
\noindent\textbf{Weighted area term:} first, we have: \begin{equation*}
\int_{T_{\theta}}\eta=\int_{a}^b \left(\int_{u_{\theta}(x)}^{v_{\theta}(x)}\eta\left(\sqrt{x^2+y^2}\right)dy\right)dx\,.
\end{equation*}
Hence
\begin{equation*}
	\frac{d}{d\theta}\int_{T_{\theta}}\eta\bigg\rvert_{\theta=0}=
-\int_{a}^b(u+v)(x)\,(g_x(|v|(x))-g_x(|u|(x)))\,dx\,,
\end{equation*} with $g_x(p)\eqdef\eta\left(\sqrt{x^2+p^2}\right)$. Our assumptions on $\eta$ ensure that~${p\mapsto g_x(p)}$ is decreasing, so that $g_x(|v|(x))-g_x(|u|(x))$ and $|u|(x)-|v|(x)$ have the same sign. But since
\begin{equation*}-(u(x)+v(x)) ~\text{ and }~ -(|u|(x)-|v|(x))\end{equation*}  also have the same sign, we have that
\begin{equation*}-(u+v)(x)\,(g_x(|v|(x))-g_x(|u|(x)))<0\,,
\end{equation*}
unless $u(x)=v(x)$ or $u(x)=-v(x)$. Since $u$ and $v$ are affine and $u(a)=v(a)$, the first equality can not hold for any~${x\in]a,b[}$ (otherwise we would have $u=v$ on $[a,b]$ and~$T$ would be flat). Moreover,~${u(x)=-v(x)}$ almost
everywhere on $[a,b]$ if and only if $T=T_s$. Hence $\frac{d}{d\theta}\int_{T_{\theta}}\eta\big\rvert_{\theta=0}\leq 0$ with equality if and only if $T=T_s$.

\medskip
\noindent\textbf{Perimeter term:} now, the perimeter of $T_{\theta}$ is given by
\begin{equation*}
	\begin{aligned}
	P(T_{\theta})&=\int_{a}^b \sqrt{1+|\nabla u_{\theta}|^2}+\int_{a}^b \sqrt{1+|\nabla v_{\theta}|^2}+v_{\theta}(b)-u_{\theta}(b)\\
	&=(b-a)\left(f(\nabla u_{\theta})+f(\nabla v_{\theta})\right)+(v(b)-u(b))\,,
	\end{aligned}
\end{equation*}
with $f(p)\eqdef\sqrt{1+\|p\|^2}$, this last function being strictly convex. Now since
\begin{equation*}\left\{\begin{aligned}
	\nabla u_{\theta}&=\nabla u-\theta(\nabla u+\nabla v)\,,\\
	\nabla v_{\theta}&=\nabla v-\theta(\nabla u+\nabla v)\,,
\end{aligned}\right.\end{equation*}
we get
\begin{equation*}
	\begin{aligned}
	&\frac{d}{d\theta}P(T_{\theta})\bigg\rvert_{\theta=0}=(b-a)\left[\nabla f(\nabla u)+\nabla f(\nabla v)\right]\cdot\left[-(\nabla u+\nabla v)\right]\\
	&\qquad\qquad~=-(b-a)\left[\nabla f(\nabla u)-\nabla f(-\nabla v)\right]\cdot\left[\nabla u-(-\nabla v)\right],
	\end{aligned}
\end{equation*}
and the strict convexity of $f$ hence shows \begin{equation*}
\frac{d}{d\theta}P(T_{\theta})\bigg\rvert_{\theta=0}\leq 0\,,
\end{equation*}
with equality if and only if $\nabla u=-\nabla v$, which means, up to a translation, that $T$ is equal to $T_s$.

\medskip
\noindent Applying the above arguments to all three sides finally yields the result.

\subsubsection{Quadrilaterals}
Let $Q$ be a simple quadrilateral. Up to a rotation of the axis, we can assume that there exist~${a<b<c}$ and four affine functions $u_1,v_1,u_2,v_2$
such that
\begin{equation*}\left\{\begin{aligned}
	v_1\geq u_1,&~v_2\geq u_2\,,\\ u_1(a)=v_1(a),&~u_2(c)=v_2(c)\,,\\ u_1(b)=u_2(b),&~v_1(b)=v_2(b)\,,
\end{aligned}\right.\end{equation*} with $Q=T_1\cup T_2$ and
\begin{equation*}
T_i\eqdef\{(x,y)\in\RR^2\,\big\rvert\, x\in[a,b],~u_i(x)\leq y\leq v_i(x)\}\,.
\end{equation*}
For all $\theta\in[0,1]$ and $i\in\{1,2\}$, we define \begin{equation*}\left\{\begin{aligned}
u_{i,\theta}&\eqdef(1-\theta)\,u_i+\theta\,(-v_i)\,, \\v_{i,\theta}&\eqdef\theta\,(-u_i)+(1-\theta)\,v_i\,,
\end{aligned}\right.\end{equation*}
and $Q_{\theta}=T_{1,\theta}\cup T_{2,\theta}$ with
\begin{equation*}T_{i,\theta}=\left\{(x,y)\in\RR^2\,\big\rvert\, x\in[a,b],~u_{i,\theta}(x)\leq y\leq v_{i,\theta}(x)\right\},\end{equation*}
so that the Steiner symmetrization $Q^s$ of $Q$ with respect to the ligne through the origin perpendicular to the diagonal~${\{b\}\times [u_1(b),v_1(b)]}$ satisfies $Q_{1/2}=Q^s$.

\medskip
\noindent\textbf{Weighted area term:} the fact $\frac{d}{d\theta}\int_{Q_{\theta}}\eta\big\rvert_{\theta=0}\leq 0$ with equality if and only if $Q=Q^s$ can easily be deduced from the case of triangles using the fact that $\int_{Q_{\theta}}\eta=\int_{T_{1,\theta}}\eta+\int_{T_{2,\theta}}\eta$.

\medskip
\noindent\textbf{Perimeter term:} now, the perimeter of $Q_{\theta}$ is given by:
\begin{equation*}
	\begin{aligned}
	P(Q_{\theta})&=\int_{a}^b \sqrt{1+|\nabla u_{1,\theta}|^2}+\int_{a}^b \sqrt{1+|\nabla v_{1,\theta}|^2}\\
	&+\int_{b}^c \sqrt{1+|\nabla u_{2,\theta}|^2}+\int_{b}^c \sqrt{1+|\nabla v_{2,\theta}|^2}\\
	&=(b-a)\left(f(\nabla u_{1,\theta})+f(\nabla v_{1,\theta})\right)\\&+(c-b)\left(f(\nabla u_{2,\theta})+f(\nabla v_{2,\theta})\right)
	\end{aligned}
\end{equation*}
with $f(p)\eqdef\sqrt{1+\|p\|^2}$ as before. We then get
\begin{equation*}
	\begin{aligned}
	&\frac{d}{d\theta}P(Q_{\theta})\bigg\rvert_{\theta=0}\\
	=&-(b-a)\left[\nabla f(\nabla u_1)-\nabla f(-\nabla
	v_1)\right]\cdot\left[\nabla u_1-(-\nabla v_1)\right]\\
	&-(c-b)\left[\nabla f(\nabla u_2)-\nabla f(-\nabla
	v_2)\right]\cdot\left[\nabla u_2-(-\nabla v_2)\right],
	\end{aligned}
\end{equation*}
and the strict convexity of $f$ hence shows \begin{equation*}\frac{d}{d\theta}P(Q_{\theta})\bigg\rvert_{\theta=0}\leq 0\,,\end{equation*}
with equality if and only if $\nabla u_1=-\nabla v_1$ and $\nabla u_2=-\nabla v_2$, which means, up to a translation, that $Q$ is equal to $Q^s$.

\end{document}